\documentclass[sn-mathphys,Numbered]{sn-jnl}
\usepackage{multirow}%
\usepackage{amsmath,amssymb,amsfonts}%
\usepackage{amsthm}%
\usepackage{thmtools}

\usepackage{mathrsfs}%
\usepackage[title]{appendix}%
\usepackage{xcolor}%
\usepackage{textcomp}%
\usepackage{manyfoot}%
\usepackage{booktabs}%
\usepackage{algorithm}%
\usepackage{algorithmicx}%
\usepackage{algpseudocode}%
\usepackage{listings}%
\usepackage{enumitem}
\usepackage{bm}
\usepackage{float}%
\usepackage{siunitx}
\usepackage{tabularx}
\usepackage{ltablex}
\usepackage{caption}
\usepackage{makecell}
\usepackage{diagbox}

\newcommand\setrow[1]{\gdef\rowmac{#1}#1\ignorespaces}
\newcommand\clearrow{\global\let\rowmac\relax}
\clearrow

\newtheorem{theorem}{Theorem}
\newtheorem{lemma}{Lemma}
\newtheorem{proposition}{Proposition}

\newtheorem{example}{Example}
\newtheorem{definition}{Definition}

\begin{document}
	\title{\large Characterization and Computation of Normal-Form Proper Equilibria in Extensive-Form Games via the Sequence-Form Representation}
	
	\author[1,3]{\fnm{Yuqing} \sur{Hou}}\email{yuqinghou2-c@my.cityu.edu.hk}
	
	\author*[2,3]{\fnm{Yiyin} \sur{Cao}}\email{yiyincao2-c@my.cityu.edu.hk}
	
	\author[3]{\fnm{Chuangyin} \sur{Dang}}\email{mecdang@cityu.edu.hk}
	
	\affil[1]{\orgdiv{Department of Automation}, \orgname{University of Science and Technology of China}, \orgaddress{\city{Hefei}, \country{China}}}
	
	\affil[2]{\orgdiv{School of Management}, \orgname{Xi'an Jiaotong University}, \orgaddress{\city{Xi'an}, \country{China}}}
	
	\affil[3]{\orgdiv{Department of Systems Engineering}, \orgname{City University of Hong Kong}, \orgaddress{\city{Hong Kong}, \country{China}}}
	
	
	\abstract{
		Normal-form proper equilibrium, introduced by Myerson as a refinement of normal-form perfect equilibrium, occupies a distinctive position in the equilibrium analysis of extensive-form games because its more stringent perturbation structure entails the sequential rationality. However, the size of the normal-form representation grows exponentially with the number of parallel information sets, making the direct determination of normal-form proper equilibria intractable. To address this challenge, we develop a compact sequence-form proper equilibrium by redefining the expected payoffs over sequences, and we prove that it coincides with the normal-form proper equilibrium via strategic equivalence. To facilitate computation, we further introduce an alternative representation by defining a class of perturbed games based on an $\varepsilon$-permutahedron over sequences. Building on this representation, we introduce two differentiable path-following methods for computing normal-form proper equilibria. These methods rely on artificial sequence-form games whose expected payoff functions incorporate logarithmic or entropy regularization through an auxiliary variable. We prove the existence of a smooth equilibrium path induced by each artificial game, starting from an arbitrary positive realization plan and converging to a normal-form proper equilibrium of the original game as the auxiliary variable approaches zero. Finally, our experimental results demonstrate the effectiveness and efficiency of the proposed methods.}
	
	\keywords{Extensive-Form Game, Sequence Form, Normal-Form Proper Equilibrium, Differentiable Path-Following Method}
	
	\pacs[JEL Classification]{C72}
	
	
	\maketitle
	
	\section{Introduction}
	Extensive-form games provide a foundational framework for modeling strategic interactions that evolve through sequential decisions, capturing features such as temporal order, information revelation, and contingent actions. Equilibrium concepts, which formalize rational behavior, serve as guiding principles for individual decision-making. The inherently sequential structure of these games has motivated the development of extensive-form equilibrium concepts, including subgame perfect equilibrium and sequential equilibrium, which ensure optimality at every decision node. Nonetheless, since players, before the game commences, intend to commit to comprehensive strategies securing global optimality, it is imperative to investigate their equilibria within the normal-form representation, where all players simultaneously and autonomously determine their complete strategies. In addition, considerable theoretical work~\cite{thompson1952equivalence,dalkey1953equivalence,elmesStrategicEquivalenceExtensive1994a} has demonstrated that any extensive-form game can be uniquely represented in reduced normal form without loss of critical structural information. Mailath et al.~\cite{mailathExtensiveFormReasoning1993} showed that enriching the reduced normal form with temporal–informational structures enables it to motivate and implement key extensive-form equilibrium concepts. Stalnaker~\cite{stalnakerExtensiveStrategicForms1999} further highlighted the sufficiency of normal-form representations in epistemic analyses. Accordingly, examining equilibria of extensive-form games in normal form is indispensable. Compared with other equilibria, normal-form proper equilibrium eliminates many Nash equilibria while inducing sequentially rational outcomes, and in doing so, it also supports Kohlberg and Mertens'~\cite{KohlbergStrategicStabilityEquilibria1986} assertion that the reduced normal form preserves all essential decision-relevant information. This paper focuses on the computation of normal-form proper equilibria in finite $n$-player extensive-form games with perfect recall.
	
	\subsection{Sequence-Form Characterization}
	
	The standard approach to computing normal-form proper equilibria in extensive-form games is to first convert the game into its normal-form representation and then apply equilibrium computation methods designed for normal-form games. However the exponentially increasing pure-strategy space in the normal form renders these methods computationally infeasible even for moderately sized extensive-form games. To overcome these scalability limitations, the sequence form~\cite{romanovskiiReductionGameComplete1962,Kollercomplexitytwopersonzerosum1992,vonStengelEfficientComputationBehavior1996} provides a substantially more compact representation. In this formulation, pure strategies are replaced by action sequences, and mixed strategies are expressed as realization plans that satisfy a system of recursively defined linear flow constraints ensuring consistency across information sets. Building on this framework, von Stengel characterized Nash equilibria for two-player games as solutions to linear optimization or complementarity problems, whereas $n$-player games require nonlinear optimization formulations, all maintainable in linear size. Despite the compact characterization, the resulting nonlinear optimization problems for $n$-player games remain difficult to solve in practice. Motivated by these computational challenges, Hou et al. employed a fixed-point approach to the best-response problem, yielding a polynomial system whose solution set encodes all Nash equilibria of the game.
	
	In view of the success of the sequence form in characterizing Nash equilibria, work has explored sequence-form characterizations of Nash equilibrium refinements by introducing perturbations into the sequence-form strategy system. Miltersen and Sørensen~\cite{MiltersenComputingquasiperfectequilibrium2010} demonstrated that, in two-player games, the limit points of Nash equilibria in a particular class of perturbed games in sequence form correspond to a subset of the quasi-perfect equilibria of the original extensive-form game. Gatti et al.~\cite{gattiCharacterizationQuasiperfectEquilibria2020} later provided a complete characterization of quasi-perfect equilibria for $n$-player games. Hansen and Lund~\cite{hansenComputationalComplexityComputing2021a} introduced a generalized $\varepsilon$-permutahedron as a perturbation scheme and showed that the resulting limit points of Nash equilibria correspond to a subset of the quasi-proper equilibria of the original extensive-form game. Nevertheless, because both quasi-perfect equilibria and quasi-proper equilibria fundamentally depend on the sequential decision structure, the perturbations applied to the sequences associated with each information set are independent. Consequently, these approaches cannot be directly applied to normal-form proper equilibrium, which requires perturbations that account for overall strategic rationality, and no sequence-form or otherwise compact characterization currently exists. 
	
	\subsection{Existing Computational Methods}
	
	Path-following methods, widely used for the exact computation of equilibria, have emerged as powerful and effective computational tools. Moreover, they naturally provide a constructive proof of equilibrium existence. Foundational insights into these methods can be traced to the literature on Nash equilibrium computation in normal-form games. Lemke and Howson~\cite{LemkeEquilibriumPointsBimatrix1964} proposed a complementarity pivoting algorithm to obtain Nash equilibria for bimatrix games, followed by subsequent more general algorithm in~\cite{LemkeBimatrixEquilibriumPoints1965}. Independently, Rosenmüller~\cite{RosenmullerGeneralizationLemkeHowson1971} and Wilson~\cite{WilsonComputingEquilibriaNPerson1971} extended the Lemke-Howson algorithm to $n$-player games. Despite their theoretical elegance, these extensions were of little practical relevance. To overcome this limitation, Garcia et al.~\cite{GarciaSimplicialApproximationEquilibrium1973} developed and implemented a simplicial path-following method to compute approximations of Nash equilibria. Over the ensuing decades, this method has been progressively improved, with contributions from van der Laan and Talman~\cite{vanderLaanComputationFixedPoints1982}, Doup and Talman~\cite{Doupnewsimplicialvariable1987}, and Herings and van den Elzen~\cite{HeringsComputationNashEquilibrium2002}, leading to a succession of methods that exhibit increasingly enhanced flexibility and efficiency. Although these methods are capable of ultimately reaching Nash equilibria, they are primarily intended for problems that do not necessitate differentiability. The intrinsic differentiability of equilibrium systems~\cite{allgowerPiecewiseLinearMethods2000a} allows for the development of differentiable path-following methods that can exploit this smoothness to substantially enhance computational efficiency.
	
	Research on computing Nash equilibria in normal-form games has produced a variety of methods for identifying their refinements. Essentially, during the convergence process, the strategy space is perturbed in accordance with the definition of the refinements A foundational contribution by van den Elzen and Talman~\cite{vandenelzenProcedureFindingNash1991} introduced the earliest algorithm capable of computing perfect equilibria, relying on a complementary pivoting method restricted to bimatrix games. Building on the reformulation by Chen and Dang~\cite{Chenreformulationbasedsmoothpathfollowing2016} of Kohlberg and Mertens'~\cite{KohlbergStrategicStabilityEquilibria1986} Nash equilibrium structure theorem, Chen and Dang~\cite{chenReformulationBasedSimplicialHomotopy2019} further extended this framework to perturbed games and introduced a simplicial path-following method that enables the approximation of perfect equilibria in $n$-player games. Following this progression, Chen and Dang~\cite{Chenextensionquantalresponse2020} incorporated logistic quantal response equilibrium to guide the selection of Nash equilibria under perturbations, demonstrated the existence of a smooth path converging to a perfect equilibrium, and further developed an exterior-point differentiable path-following method for approximating perfect equilibria~\cite{Chendifferentiablehomotopymethod2021} by exploiting a convex-quadratic-penalty game. By exploiting the selection properties inherent in the Nash's mappings, Harsanyi's tracing procedures, and logistic quantal response equilibrium, Cao and Dang~\cite{caoComplementarityEnhancedNashs2022,CaovariantHarsanyitracing2022,caoVariantLogisticQuantal2024} developed their respective variants and distinct differentiable path-following methods to select an exact perfect equilibrium in normal-form games. Despite notable advances in computing perfect equilibria, methods designed specifically for proper equilibria remain scarce, largely due to the additional analytical and computational challenges introduced by their perturbation requirements. Yamamoto~\cite{yamamotoPathfollowingProcedureFind1993} proposed a piecewise-differentiable path-following procedure for computing proper equilibria by constructing a perturbed game in which zero being a regular value is required to ensure convergence. van der Laan et al.~\cite{vanderlaanExistenceApproximationRobust1999} developed a simplicial path-following method to approximate robust stationary points, which coincide with proper equilibria in normal-form games. Sørensen~\cite{sorensenComputingProperEquilibrium2012} applied the Lemke–Howson algorithm to compute proper equilibria in bimatrix games by reformulating the problem as a polynomial-sized linear complementarity problem. Notably, these works did not include numerical experiments. Cao and Dang~\cite{caoDifferentiablePathFollowingMethod2023} developed a differentiable path-following method with a compact formulation that exploits a square-root-barrier game to compute a proper equilibrium and provided numerical results demonstrating its performance.
	
	Research on path-following methods within the sequence-form representation has also generally demonstrated favorable performance, particularly for two-player extensive-form games with perfect recall. Koller et al.~\cite{KollerEfficientComputationEquilibria1996} applied Lemke's method~\cite{LemkeBimatrixEquilibriumPoints1965} to the linear complementarity problem derived from the sequence form, developing an algorithm for computing Nash equilibria, whose practical efficiency was demonstrated through the Gala system~\cite{KollerRepresentationssolutionsgametheoretic1997}. Subsequently, von Stengel et al.~\cite{vonStengelComputingNormalForm2002} extended van den Elzen and Talman's~\cite{vandenelzenProcedureFindingNash1991} method from normal-form games to the sequence form, providing a way to compute normal-form perfect equilibria in extensive-form games. Miltersen and Sørensen~\cite{MiltersenComputingquasiperfectequilibrium2010} further refined the algorithm of Koller et al.~\cite{KollerEfficientComputationEquilibria1996}, extending its applicability to perturbed games and thereby enabling the computation of quasi-perfect equilibria. Research on games involving more than two players remains limited. By incorporating strategies in sequence form, Govindan and Wilson~\cite{govindanStructureTheoremsGame2002} extended the structure theorems for perturbed extensive-form games,thereby facilitating a piecewise differentiable path-following approach to computing Nash equilibria in $n$-player settings. Expanding upon this development, Hou et al.~\cite{houSequenceformDifferentiablePathfollowing2025} introduced a sequence-form globally differentiable path-following method to computing Nash equilibria. Nevertheless, no sequence-form globally differentiable path-following methods currently exist for computing normal-form proper equilibria.
	
	\subsection{Challenges and Contributions}
	
	Nevertheless, the method faces several inherent challenges. First, unlike the definition of mixed strategies, where the mixture over pure strategies is mutually substitutable and symmetric, the realization plans over sequences are defined by a complex recursive system. Consequently, the trade-offs among sequence realization probabilities are determined by this recursive structure rather than being symmetric. This introduces substantial ambiguity and prevents the use of small perturbations to obtain a characterization of proper equilibrium. the intricate structure of sequences renders the relationship between sequence realization weights and expected payoffs unclear. Second, this recursive definition also obscures the actual contribution of each sequence to the expected payoff. It becomes impossible to determine whether a larger realization probability for a given sequence leads to a higher expected payoff. Third, the above difficulties give rise to substantial challenges for designing viable path-following methods, particularly in specifying a unique starting point and ensuring convergence to the desired solution.
	
	This work addresses these gaps by proposing a unified framework for the sequence-form characterization and differentiable computation of normal-form proper equilibrium in extensive-form games. Specifically, we make three main contributions:
	\begin{enumerate}
		\item We develop a compact sequence-form proper equilibrium by redefining the expected payoffs over sequences, and we prove that it coincides with the normal-form proper equilibrium via strategic equivalence.
		
		\item We introduce an alternative representation of sequence-form proper equilibrium by defining a class of perturbed games formulated through an $\varepsilon$-permutahedron over sequences. To the best of our knowledge, this is the first formulation that characterizes proper equilibria by perturbed games.
		
		\item We propose two differentiable path-following methods for computing normal-form proper equilibria by constructing artificial sequence-form games whose expected payoff functions incorporate logarithmic or entropy terms through an auxiliary variable. Each method yields a smooth equilibrium path, originating from an arbitrary positive realization plan and converging to a normal-form proper equilibrium of the original game as the auxiliary variable vanishes. Experimental results further validate the effectiveness and efficiency of the proposed methods.
	\end{enumerate}
	
	The remainder of this paper proceeds as follows. Section~\ref{nfpr-sec-prm1} reviews the foundational preliminaries and provides formal definitions for extensive-form games and normal-form proper equilibrium. Section~\ref{nfpr-sec-prm2} introduces the sequence-form proper equilibrium and rigorously demonstrates its equivalence to the normal-form proper equilibrium. An alternative formulation of sequence-form proper equilibrium, expressed through perturbed games, is developed in Section~\ref{nfpr-sec-prm3}. Section~\ref{nfpr-sec-prm4} presents two differentiable path-following methods designed for the computation of the equilibrium. The efficacy of the proposed methods is assessed in Section~\ref{nfpr-sec-prm5} through comprehensive experimental evaluations. Finally, Section~\ref{nfpr-sec-prm6} concludes the paper with a discussion of potential extensions.
	
	\section{Preliminaries}\label{nfpr-sec-prm1}
	\begin{table}[tb!]
		\centering
		\caption{Notation for Extensive-Form Games}
		\begin{tabular}{ll}
			\toprule
			Symbol & Explanation\\
			\midrule
			$N=\{1,2,\ldots,n\}$ & Set of players\\
			$N_c=N\cup\{c\}$ & Set of players and chance player $c$\\
			$a$ & Action taken by a player\\
			$H$ & Set of histories,  $\emptyset\in H$ and $\langle a_1,\ldots,a_L\rangle\in H$ if $\langle a_1,\ldots,a_K\rangle\in H$ and $L<K$\\
			$Z$ & Set of terminal histories\\
			$A(h)=\{a:(h,a)\in H\}$ & Set of actions after a nonterminal history $h$\\
			$P(h)$ & Player who takes an action after $h$\\
			$f_{c}(a|h)$ & Probability that chance player $c$ takes action $a$ after $h$\\
			$-i$ & All non-chance players excluding player $i\in N$\\
			$\mathcal{I}_{i}$ & Collection of information partitions of $\{h\in H|P(h)=i\}$\\
			$M_{i}=\{1,\ldots,m_{i}\}$ & Set of information partition indices for player $i\in N_c$\\
			$I^{j}_{i}\in\mathcal{I}_{i},j\in M_{i}$ & $j$th information set of player $i\in N_c$, $A(I^j_i)\triangleq A(h)= A(h')$ whenever $h,h'\in I^j_i$\\
			$\succsim_i$ & Preference relation of player $i\in N$ \\
			$u_z^{i}:Z\to\mathbb{R}$ & Payoff function of player $i\in N$\\
			$R_{i}(h)$ & Record of player $i\in N_c$'s experience along $h$\\
			$|C|$ & Cardinality of a finite set $C$\\
			$m_0=\sum_{i\in N}m_i$ & Number of information sets\\
			$n_0=\sum_{i\in N}\sum_{j\in M_i}|A(I^j_i)|$ & Number of actions for non-chance players\\
			$\text{int}(C)$ & Interior of the set $C$\\
			\bottomrule
		\end{tabular}
		\label{nfpr-tab-pre1}
	\end{table}
	The notation and conventions for extensive-form games are adopted from Osborne and Rubinstein~\cite{OsborneCourseGameTheory1994} and outlined in Table~\ref{nfpr-tab-pre1}. An extensive-form game is represented by
	\[\Gamma=\langle N, H, P, f_c, \{{\cal I}_i\}_{i\in N}, \{\succsim_i\}_{i\in N}\rangle.\] In this paper, Our focus is on finite extensive-form games with perfect recall. ``finite" means that $H$ is a finite set. Perfect recall holds if, for each player $i$, any histories $h$ and $h'$ in the same information set satisfy $R_i(h)=R_i(h')$, ensuring consistent memory of past actions and knowledge.
	
    The equilibrium concept we aim to investigate is the normal-form perfect equilibrium. With this in mind, we need to introduce the normal-form representation of extensive-form games. Given an extensive-form game $\Gamma$, a pure strategy $s^i$ of player $i\in N_c$ is defined as a function that maps each information set $I^j_i,j\in M_i$ to an action $a\in A(I^j_i)$. To facilitate computations, we define 
    	\begin{equation*}\label{nfpr-equ-pre0}
    	s^i(a) = \left\{\begin{array}{ll}
    		1  & \text{if $s^i(I^j_i)=a$},\\
    		
    		0 & \text{otherwise.}
    	\end{array}\right.\end{equation*}
    The payoff function for player $i\in N$ under any pure strategy combination $s=\{s^i:i\in N_c\}$ is defined as 
    \begin{equation}\label{nfpr-equ-pre1}\begin{array}{l}u^i(s)=\sum\limits_{h=\langle a_1,\ldots,a_L\rangle\in Z}u^i_z(h)\prod\limits_{q=0}^{L-1}s^{P(\langle a_1,\ldots,a_q\rangle)}(a_{q+1}),\end{array}\end{equation}
    The chance player's mixed strategy $\sigma^c=(\sigma^c(s^c):s^c\in S^c)$ is fixed and determined by $\sigma^c(s^c)=\prod_{h\in H,P(h)=c}\sum_{a\in A(h)} s^c(a)f_c(a|h)$. Additional notations and their descriptions are provided in Table~\ref{nfpr-tab-pre2}. Then the normal-form representation of $\Gamma$ is expressed as $\Gamma_n=\langle N, S, \sigma^c, \{u^i\}_{i\in N}\rangle$.
    
    In the reduced normal-form representation, pure strategies are defined in a more compact manner while preserving all valid strategic information. Specifically, for a pure strategy $s^i$ of player $i\in N_c$, $s^i(I^j_i)=a$, $j\in M_i,a\in A(I^j_i)$ means that, for $\langle a_1,\ldots,a_L\rangle\in I^j_i$, $s^i(I^{j_q}_i)=a_q$ holds for all $0\leq q\leq L$ with $j_q\in M_i,a_q\in A(I^{j_q}_i)$. All other definitions remain unchanged and are still applicable. To highlight the superiority of our methods, all derivations in this paper are based on the reduced normal form. For simplicity, we shall refer to it as the normal form throughout, omitting the qualifier ``reduced".
    
    Given a mixed strategy profile $\sigma=(\sigma^i:i\in N)\in \Xi$, the expected payoff of player $i\in N$ is given by $u^i(\sigma)=\sum_{s^i\in S^i}\sigma^i(s^i)u^i(s^i,\sigma^{-i})$ with
    \begin{equation}\label{nfpr-equ-pre2}\begin{array}{l}u^i(s^i,\sigma^{-i})=\sum\limits_{s^{-i}\in S^{-i}}u^i(s^i,s^{-i})\prod\limits_{i_q\in N_c\backslash \{i\}}\sigma^{i_q}(s^{i_q}).\end{array}\end{equation}
    A mixed strategy profile $\sigma^*$ is referred to a Nash equilibrium if, for every player $i\in N$, $\sigma^{*i}(s^i)=0$ whenever $u^i(s^i,\sigma^{*-i})< u^i(\tilde s^i,\sigma^{*-i})$ for some $\tilde s^i\in S^i$. This condition ensures that no player can improve their payoff by unilaterally deviating from their strategy in the equilibrium profile. However, the weakness of this condition can lead to a large equilibrium set, leading to the emergence of numerous counterintuitive equilibria and great uncertainty in determining which equilibrium to choose. In response to this limitation, Selten~\cite{SeltenReexaminationperfectnessconcept1975} introduced the concept of perfect equilibrium, eliminating a large number of unreasonable equilibria. The definition of normal-form perfect equilibrium in an extensive-form game is as follows.
\begin{definition}\label{nfpr-def-pro1}
	{\em Let $\Gamma$ be an extensive form game. For any sufficiently small $\varepsilon > 0$, a totally mixed strategy profile $\sigma(\varepsilon)\in \Xi$ is an $\varepsilon$-normal-form proper equilibrium of $\Gamma$ if $\sigma^i(\varepsilon;s^i)\leq\varepsilon\sigma^i(\varepsilon;\tilde s^i)$ whenever $u^i(s^i, \sigma^{-i}(\varepsilon)) < u^i(\tilde s^{i},\sigma^{-i}(\varepsilon))$ for all $i\in N$ and $s^i,\tilde s^i\in S^i$. A mixed strategy profile $\sigma^*\in \Xi$ is defined as a normal-form proper equilibrium of game $\Gamma$ if $\sigma^*$ is a limit point of some sequence $\{\sigma(\varepsilon^k)\}_{k=1}^\infty$, where $\lim_{k\to\infty}\varepsilon^k=0$ and each $\sigma(\varepsilon^k)$ is an $\varepsilon^k$-normal form proper equilibrium of $\Gamma$.}
\end{definition}
\begin{table}[tb!]
	\centering
	\caption{Notation for Games in Normal Form or Sequence Form}
	\begin{tabular}{ll}
		\toprule
		Symbol & Explanation\\
		\midrule
		$s^i$ & Pure strategy of player $i$\\
		$S=\underset{i\in N_c}{\times}S^i$ & Set of pure strategy profiles\\
		$\sigma^i$ & Mixed strategy of player $i\in N_c$, probability measure over $S^i$\\
		$\Xi=\underset{i\in N}{\times}\Xi^i$ & Set of mixed strategy profiles,  $\Xi^i=\{\sigma^i\in\mathbb{R}_+^{|S^i|}|\sum\limits_{s^i\in S^i}\sigma^{i}(s^i)=1\}$\\
		$\text{int}(\Xi)=\underset{i\in N}{\times} \text{int}(\Xi^i)$ & Set of totally mixed strategy profiles\\
		$u^i(s)$ & Expected payoff of player $i$ on the pure strategy profile $s$\\
		$\varpi^i$ & Sequence of actions taken by player $i$\\
		$\varpi^i_{I^j_i}$ & Sequence of player $i$ leading to $I^j_i$, $\varpi^i_h=\varpi^i_{I^j_i}$ for any $h\in I^j_i$ \\
		$\varpi^i_{I^j_i}a$ & The extended sequence $\varpi^i_{I^j_i}\cup\{a\}$\\
		${W}=\underset{i\in N_c}\times{W}^i$ & The collection of sequence profiles, $\emptyset\in{W}^i$\\
		$g^i(w)$ & Expected payoff of player $i$ on the sequence profile $w$\\
		$\gamma^i$ & Realization plan of player $i\in N_c$\\
		$\Lambda=\underset{i\in N}\times{ \Lambda^i}$ & Set of realization plan profiles\\
		
		$M_i(\varpi^i)$ & The index set of the information sets for player $i$ with $\varpi^i$ being the sequence\\
		$D_i$ & The set of $(j,a)$ for player $i$ with $M_i(\varpi^i_{I^j_i}a)=\emptyset$\\
		\bottomrule
	\end{tabular}
	\label{nfpr-tab-pre2}
\end{table}
The computation of a normal-form proper equilibrium of an extensive-form game typically requires a transformation into its normal form. As Wilson~\cite{wilsonComputingEquilibriaTwoPerson1972} points out, even simple extensive-form games often produce exceedingly large normal forms due to the exponential increase in the number of pure strategies relative to the number of information sets. To circumvent this exponential growth, the sequence form, formally developed by von Stengel~\cite{vonStengelEfficientComputationBehavior1996}, has emerged as a particularly efficient alternative.

The sequence form replaces pure strategies with sequences, providing a compact representation. For $i\in N_c$, a sequence $\varpi^i$ is defined as the action set of player $i$ for some history. Specifically, for $h=\langle a_1,\ldots,a_L\rangle\in H$, the corresponding sequence is given by\[\varpi^i_h=\{a_q:\text{$a_q\in A(I^j_i)$ for some $j\in M_i$ and $1\leq q\leq L$}\},\]which is either the empty sequence $\emptyset$ or an extension $\varpi^i_{I^j_i}a$ of a preceding sequence $\varpi^i_{I^j_i}$ with $i \in N$, $j \in M_i$, and $a \in A(I^j_i)$. The function $g^i$ determines the payoff for player $i$ in any sequence profile $\varpi\in{W}$, defined as
	\begin{equation*}\label{rplanpayoff}
		g^i(\varpi) = \left\{\begin{array}{ll}
			u^i_z(h)  & \text{if $\varpi$ is defined by $h\in Z$,}\\
			
			0 & \text{otherwise.}
		\end{array}\right.\end{equation*}
	We say that $\varpi=(\varpi^i:i\in N_c) \in {W}$ is defined by $h=\langle a_1,\ldots,a_L\rangle$ if $\mathop{\cup}_{i\in N_c} \varpi^i=\{a_1,\ldots,a_L\}$.
	
	The inherent challenge in developing algorithms for the sequence form lies in the fact that randomization over sequences is no longer represented by a straightforward probability distribution. Instead, it requires the formulation of an recursive system of linear equations. For player $i\in N_c$, a random strategy in the sequence form is a function $\gamma^i$ defined on ${W}^i$, with the convention that $\gamma^i(\emptyset) = 1$. We call $\gamma^i$ a realization plan for player $i$ if it satisfies the linear system,
	\begin{equation}\label{nfpr-equ-pre3}\begin{array}{l}
			\sum\limits_{a\in A(I^j_i)}\gamma^i(\varpi^i_{I^j_i}a)-\gamma^i(\varpi^i_{I^j_i})=0,\;j\in M_i,\\
			0\le \gamma^i(\varpi^i_{I^j_i}a),\;j\in M_i,a\in A(I^j_i).
		\end{array}
	\end{equation}
	This recursive system~(\ref{nfpr-equ-pre3}) suggests that the realization plan $\gamma^i$ is uniquely determined by the values of $\gamma^i(\varpi^i_{I^j_i}a),(j,a)\in D_i$, which reflects the holistic property of the sequence form. The chance player's realization plan $\gamma^c=(\gamma^c(\varpi^c):\varpi^c\in W^c)$ is determined by $\gamma^c(\varpi^c)=\prod_{a\in \varpi^c\cap A(h)}f_c(a|h)$, which satisfies the system~(\ref{nfpr-equ-pre3}). More notations and their descriptions can be found in Table~\ref{nfpr-tab-pre2}. The sequence form of an extensive-form game is represented as \[\Gamma_s=\langle N,\{{W}^i\}_{i\in N_c},\gamma^c,\{g^i\}_{i\in N}\rangle.\] Given a realization plan profile $\gamma=(\gamma^i:i\in N_c)$, the expected payoff for player $i\in N$ at sequence $\varpi^i\in W^i$ is defined as 
	\[\begin{array}{l}g^i(\varpi^i,\gamma^{-i})=\sum\limits_{\varpi^{-i}\in {W}^{-i}}g^i(\varpi^i,\varpi^{-i})\prod\limits_{i_q\ne i}\gamma^{i_q}(\varpi^{i_q}).\end{array}\]
	Thus, the overall expected payoff for player $i\in N$ can be written as
	\[\begin{array}{l}g^i(\gamma)=\sum\limits_{\varpi^i\in W^i}\gamma^i(\varpi^i)g^i(\varpi^i,\gamma^{-i}).\end{array}\]
	The number of sequences available to player $i$ is given by $\sum_{j \in M_i} |A(I_i^j)|+1$, exhibiting a linear growth relative to the number of information sets. This compactness, in conjunction with holism, makes the sequence form a crucial framework for developing efficient methods to compute normal-form perfect equilibria in extensive-form games.
	
	\begin{minipage}{0.42\textwidth}
		\centering
		\includegraphics[width=0.86\textwidth]{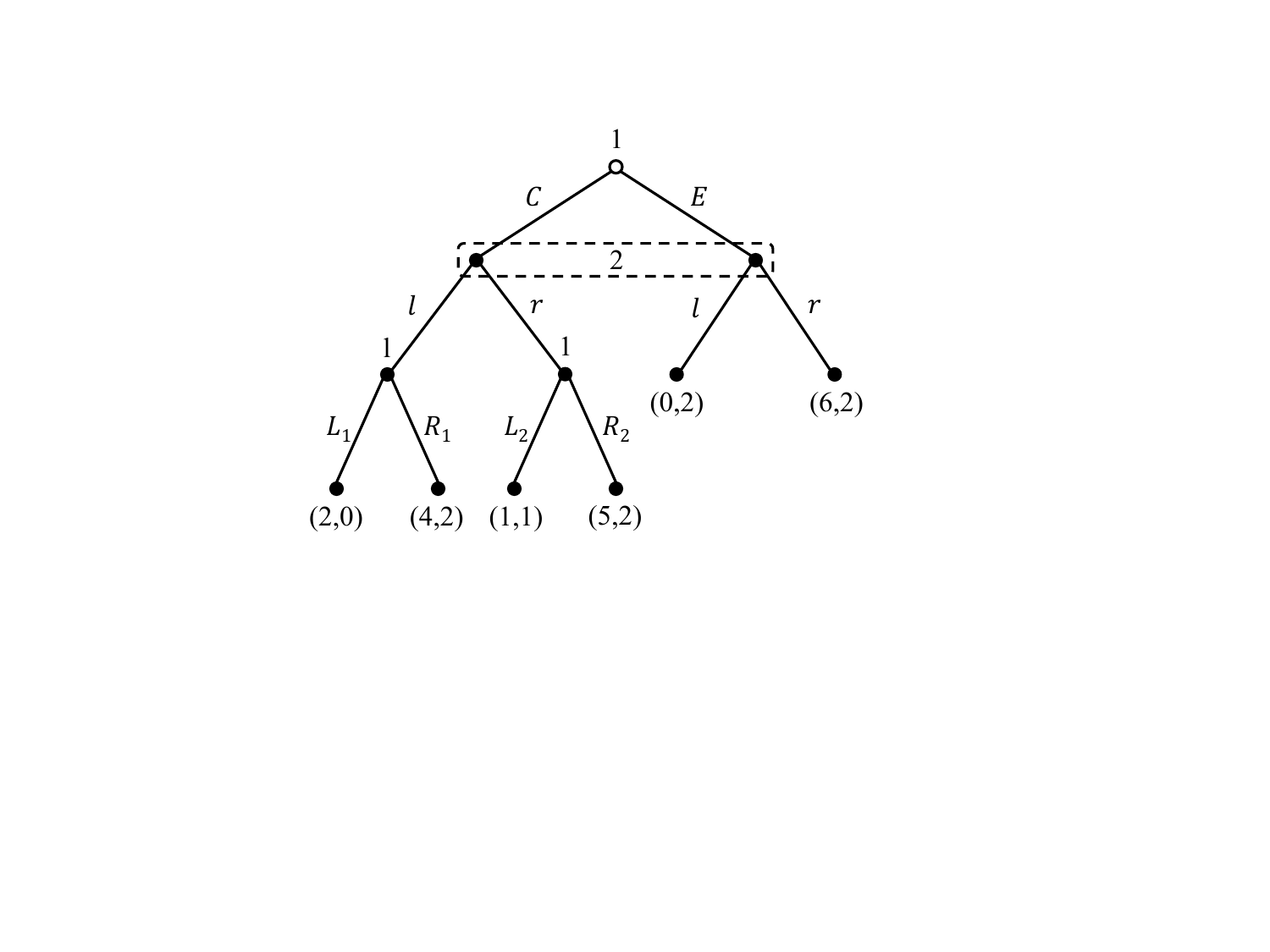}
		\captionof{figure}{An extensive-form game}
		\label{fig:game1}
	\end{minipage}\hfill
	\begin{minipage}{0.54\textwidth}
		\centering
		\captionof{table}{Normal-form representation of Fig.~\ref{fig:game1}}
		\label{tab:normalform}
		\begin{tabular}{l|cc} 
			\diagbox{$S^1$}{$S^2$}	& $s^2_1=\{l\}$ & $s^2_2=\{r\}$\\\hline
			$s^1_1=\{C,L_1,L_2\}$ 	& (2,0) & (1,1)\\ 
			$s^1_2=\{C,L_1,R_2\}$ 	& (2,0) & (5,2)\\ 
			$s^1_3=\{C,R_1,L_2\}$ 	& (4,2) & (1,1)\\ 
			$s^1_4=\{C,R_1,R_2\}$ 	& (4,2) & (5,2)\\ 
			$s^1_5=\{E\}$ 			& (0,2) & (6,2) 
		\end{tabular}
	\end{minipage}
	\begin{example}\label{exm1}{\em
		Consider the extensive-form game $\Gamma$ shown in Fig.~\ref{fig:game1}. The game involves two players with \( P(\emptyset)=P(\langle C,l\rangle)=P(\langle C,r\rangle)=1 \) and \( P(\langle C\rangle)=P(\langle E\rangle)=2 \). The players' information sets are given by \( \mathcal{I}_1=\{I^1_1,I^2_1,I^3_1\} \) and \( \mathcal{I}_2=\{I^1_2\} \), where \( I^1_1=\{\emptyset\} \), \( I^2_1=\{\langle C,l\rangle\} \), \( I^3_1=\{\langle C,r\rangle\} \), and \( I^1_2=\{\langle C\rangle,\langle E\rangle\} \). The corresponding sets of available actions are \( A(I^1_1)=\{C,E\} \), \( A(I^2_1)=\{L_1,R_1\} \), \( A(I^3_1)=\{L_2,R_2\} \), and \( A(I^1_2)=\{l,r\} \). The normal-form representation of the extensive-form game is summarized in Table~\ref{tab:normalform}. The corresponding mixed strategies are probability measures 
		$\sigma^1=(\sigma^1(s^1_1),\sigma^1(s^1_2),\sigma^1(s^1_3),\sigma^1(s^1_4),\sigma^1(s^1_5))^\top$, $\sigma^2=(\sigma^1(s^2_1),\sigma^1(s^2_2))^\top$. Based on the definition~\ref{nfpr-def-pro1}, the normal-form proper equilibria of the game can be derived analytically. The results along with their final expected payoffs $u(\sigma)=(u^1(\sigma),u^2(\sigma))$ are below.
		\begin{enumerate}
			\item $\sigma^1 = (0,0,0,0,1)^\top$, $\sigma^2 = (0, 1)^\top$; payoff $u(\sigma)=(6,2)$.
			\item $\sigma^1 = (0,0,0,1,0)^\top$, $\sigma^2 = (1, 0)^\top$; payoff $u(\sigma)=(4,2)$.
			\item $\sigma^1 = (0,0,0,1,0)^\top$, $\sigma^2 = (\frac{2}{3}, \frac{1}{3})^\top$; payoff $u(\sigma)=(\frac{13}{5},2)$.
		\end{enumerate}
		The sequence-form representation of the extensive-form game is summarized in Table~\ref{tab:sequenceform}. We have
		\begin{equation}
			\begin{array}{l}
				\{(\varpi^1_{I^2_1}L_1,\varpi^2_{I^1_2}l),\;(\varpi^1_{I^2_1}R_1,\varpi^2_{I^1_2}l),\;(\varpi^1_{I^3_1}L_2,\varpi^2_{I^1_2}r),\;(\varpi^1_{I^3_1}R_2,\varpi^2_{I^1_2}r),\;(\varpi^1_{I^1_1}E,\varpi^2_{I^1_2}l),\;(\varpi^1_{I^1_1}E,\varpi^2_{I^1_2}r)\}\subseteq Z.
			\end{array}
			\nonumber
		\end{equation}
		The corresponding realization plans satisfy
		\begin{equation}
			\begin{array}{ll}
				\gamma^1(\varpi^1_{I^1_1}C)+\gamma^1(\varpi^1_{I^1_1}E)=\gamma^1(\emptyset)=1,& 0\le\gamma^1(\varpi^1_{I^1_1}E),\\
				\gamma^1(\varpi^1_{I^2_1}L_1)+\gamma^1(\varpi^1_{I^2_1}R_1)=\gamma^1(\varpi^1_{I^1_1}C),& 0\le\gamma^1(\varpi^1_{I^2_1}L_1),\; 0\le\gamma^1(\varpi^1_{I^2_1}R_1),\\
				\gamma^1(\varpi^1_{I^3_1}L_2)+\gamma^1(\varpi^1_{I^3_1}R_2)=\gamma^1(\varpi^1_{I^1_1}C),& 0\le\gamma^1(\varpi^1_{I^3_1}L_2),\; 0\le\gamma^1(\varpi^1_{I^3_1}R_2),\\
				\gamma^2(\varpi^2_{I^1_2}l)+\gamma^2(\varpi^2_{I^1_2}r)=\gamma^2(\emptyset)=1, &0\le\gamma^2(\varpi^2_{I^1_2}l),\; 0\le\gamma^2(\varpi^2_{I^1_2}r).
			\end{array}
			\nonumber
		\end{equation}
		\begin{table}[ht]
			\centering
			\caption{Sequence form representation of Fig.~\ref{fig:game1}\label{tab:sequenceform}}
			\begin{tabular}{c|ccccccc}
				\diagbox{$W^2$}{$W^1$} & $\emptyset$ & $\varpi^1_{I^1_1}C$ & $\varpi^1_{I^1_1}E$ & $\varpi^1_{I^2_1}L_1$ & $\varpi^1_{I^2_1}R_1$ & $\varpi^1_{I^3_1}L_2$ & $\varpi^1_{I^3_1}R_2$\\ 
				\hline
				$\emptyset$ & (0,0) & (0,0) & (0,0) & (0,0) & (0,0) & (0,0) & (0,0)\\
				$\varpi^2_{I^1_2}l$ & (0,0) & (0,0) & \textbf{(0,2)} & \textbf{(2,0)} & \textbf{(4,2)} & (0,0) & (0,0)\\
				$\varpi^2_{I^1_2}r$ & (0,0) & (0,0) & \textbf{(6,2)} & (0,0) & (0,0) & \textbf{(1,1)} & \textbf{(5,2)}\\
			\end{tabular}
		\end{table}
	}
	\end{example}
	
	\section{A Sequence-Form Characterization of Normal-Form Proper Equilibria}\label{nfpr-sec-prm2}
	Consider an extensive-form game, $\Gamma$, with $\Gamma_n$ as its normal form and $\Gamma_s$ as its sequence form. Given any pure strategy $s^i\in S^i$ of player $i\in N_c$, we define $s^i(\varpi^i)=\prod_{a\in \varpi^i}s^i(a)$ for $\varpi^i\in W^i$. For any $\sigma\in \Xi$, let $\gamma(\sigma)=(\gamma^{i}(\sigma^i;\varpi^i):i\in N_c,\varpi^i\in W^i)$, where 
	\begin{equation}\label{ms2rp}
		\gamma^{i}(\sigma^i;\varpi^i)=\sum\limits_{s^i\in S^i}s^i(\varpi^i)\sigma^i(s^i),\;i\in N_c,\varpi^i\in W^i.
	\end{equation}
	It follows that $\gamma^{i}(s^i;\varpi^i)=s^i(\varpi^i)$ and $\gamma(\sigma)\in\Lambda$. This relation is captured by the set $T=\left\{(\sigma,\gamma)\left|\sigma\in  \Xi,\gamma=\gamma(\sigma)\right.\right\}$, which leads to the following conclusions. For any $\gamma\in\Lambda$, there exists a mixed strategy profile $\sigma$ such that $(\sigma,\gamma)\in T$. If $(\sigma,\gamma)\in T$, $u^i(\sigma)=g^i(\gamma)$ holds for every player $i\in N$.

	In Example~\ref{exm1}, we focus on a strategy profile that can be verified as a normal-form perfect equilibrium, denoted by $\sigma^*=(\sigma^{*1},\sigma^{*2})$ with $\sigma^{*1}=(0,0,0,1,0)$ and $\sigma^{*2}=(0.5,0.5)$. In the following, we employ the definition of normal-form proper equilibrium to demonstrate that the given strategy profile does not qualify as such, thereby highlighting specific insights. Suppose that the profile does constitute a normal-form proper equilibrium. Let $\{\sigma(\varepsilon^k)\}_{k=1}^{\infty}$ denote a convergent sequence with limit $\sigma^*$, where each $\sigma(\varepsilon^k)$ is an $\varepsilon^k$–normal-form proper equilibrium. When $\varepsilon^k$ is sufficiently small, the expected payoffs to player 1 for each pure strategy are given by
	\[
	\begin{array}{llllll}\toprule
		& s^1_1 & s^1_2 & s^1_3 & s^1_4 & s^1_5\\\midrule
		u^1(s^1,\sigma^2(\varepsilon^k)) & 1+\sigma^2(\varepsilon^k;s^2_1) & 2+3\sigma^2(\varepsilon^k;s^2_2) & 1+3\sigma^2(\varepsilon^k;s^2_1) & 4+\sigma^2(\varepsilon^k;s^2_2) & 6\sigma^2(\varepsilon^k;s^2_2)\\\bottomrule
	\end{array}
	\]
	As $\sigma^2(\varepsilon^k)$ approaches $(0.5,0.5)$, the expected payoffs follow the ordering
	\[u^1(s^1_1,\sigma^2(\varepsilon^k))< u^1(s^1_3,\sigma^2(\varepsilon^k))
	< u^1(s^1_5,\sigma^2(\varepsilon^k))
	< u^1(s^1_2,\sigma^2(\varepsilon^k))
	< u^1(s^1_4,\sigma^2(\varepsilon^k)),\]
	which induce the asymptotic hierarchy of the mixed strategy
	\[
	\sigma^1(\varepsilon^k;s^1_1)\leq\varepsilon^k\sigma^1(\varepsilon^k;s^1_3)
	\leq(\varepsilon^k)^2\sigma^1(\varepsilon^k;s^1_5)
	\leq(\varepsilon^k)^3\sigma^1(\varepsilon^k;s^1_2)
	\leq(\varepsilon^k)^4\sigma^1(\varepsilon^k;s^1_4).
	\]
	Since the expected payoffs of player $2$ are given by 
	\[
	\begin{array}{ll}
		u^2(\sigma^1(\varepsilon^k),s^2_1) &= 2\sigma^1(\varepsilon^k;s^1_4)+2\sigma^1(\varepsilon^k;s^1_5)+2\sigma^1(\varepsilon^k;s^1_3),\\[2pt]
		u^2(\sigma^1(\varepsilon^k),s^2_2) &= 2\sigma^1(\varepsilon^k;s^1_4)+2\sigma^1(\varepsilon^k;s^1_2)+2\sigma^1(\varepsilon^k;s^1_5)+\sigma^1(\varepsilon^k;s^1_3)
		+\sigma^1(\varepsilon^k;s^1_1),
	\end{array}
	\]
	the asymptotic hierarchy of $\sigma^1(\varepsilon^k)$ implies that $u^2(\sigma^1(\varepsilon^k),s^2_1)<u^2(\sigma^1(\varepsilon^k),s^2_2)$. Consequently, $\sigma^2(\varepsilon^k;s^2_1)\leq\varepsilon^k\sigma^2(\varepsilon^k;s^2_2)$, which results in a contradiction. Therefore, $\sigma^*$ is excluded from the set of normal-form proper equilibria.
	
	Actually the expected payoffs of player 2 are determined by the probabilities assigned to the terminal leaves, which depend on the realization plan over terminal sequences. The asymptotic hierarchy of mixed strategy induces a corresponding asymptotic hierarchy in the realization plans over terminal sequences, thereby giving rise to the relationship. Specifically, the expected payoffs of player $2$ can be expressed as
	\[
	\begin{array}{ll}
		u^2(\sigma^1(\varepsilon^k),s^2_1)=u^2(\gamma^1(\sigma^1(\varepsilon^k)),\gamma^2(s^2_1))&= 2\gamma^1(\sigma^1(\varepsilon^k);\emptyset) - 2\gamma^1(\sigma^1(\varepsilon^k);\varpi^1_{I^2_1}L_1),\\[2pt]
		u^2(\sigma^1(\varepsilon^k),s^2_2)=u^2(\gamma^1(\sigma^1(\varepsilon^k)),\gamma^2(s^2_2)) &=2\gamma^1(\sigma^1(\varepsilon^k);\emptyset) - \gamma^1(\sigma^1(\varepsilon^k);\varpi^1_{I^3_1}L_2).
	\end{array}
	\]
	From $\gamma^1(\sigma^1(\varepsilon^k);\varpi^1_{I^2_1}L_1)=\sigma^1(\varepsilon^k;s^1_1)+\sigma^1(\varepsilon^k;s^1_2)$ and $\gamma^1(\sigma^1(\varepsilon^k);\varpi^1_{I^3_1}L_2)=\sigma^1(\varepsilon^k;s^1_1)+\sigma^1(\varepsilon^k;s^1_3)$, we have $\gamma^1(\sigma^1(\varepsilon^k);\varpi^1_{I^3_1}L_2)\leq\varepsilon^k\gamma^1(\sigma^1(\varepsilon^k);\varpi^1_{I^2_1}L_1)$. Consequently, we again obtain $u^2(\sigma^1(\varepsilon^k),s^2_1)<u^2(\sigma^1(\varepsilon^k),s^2_2)$. This observation suggests that the asymptotic hierarchy of mixed strategies can be equivalently translated into that of realization plans, it is natural to ask whether the asymptotic hierarchy of realization plans can in turn be used to reconstruct that of mixed strategies. This raises the possibility of deriving the normal-form proper equilibrium directly from realization plans.

	\subsection{Free Strategy Variables}

	For each information set $I^j_i$,$i\in N,j\in M_i$, given $\gamma^i(\varpi^i_{I^j_i})$, although the extended sequences $\varpi^i_{I^j_i}a,a\in A(I^j_i)$ introduce $|A(I^j_i)|$ realizaiton-plan variables, the constraints (\ref{nfpr-equ-pre3}) imply that only $|A(I^j_i)|-1$ of them can vary freely, while the last one is uniquely determined. To quantify the number of free variables associated with all information sets reachable after a sequence, we define a function $\kappa_i : W^i \rightarrow \mathbb{Z}^+$ by
	\[\kappa_i(\varpi^i)=\sum\limits_{j\in M_i(\varpi^i)}(|A(I^j_i)|-1)+1,\]
	so that $\kappa_i(\emptyset)-1$ captures the intrinsic dimensionality of player $i$'s realization-plan space since $\gamma^i(\emptyset)=1$ is fixed. Let $K_i=\{1,2,\ldots,\kappa_i(\emptyset)\}$. Accordingly, we construct an ordering $\pi^i=\langle \varpi^i_1,\varpi^i_2,\dots,\varpi^i_{\kappa_i(\emptyset)}\rangle$ of sequences associated with the $\kappa_i(\emptyset)$ free realization-plan variables, satisfying:
	\begin{enumerate}
		\item for each $j\in M_i$, the intersection $\{\varpi^i_{I^j_i}a : a\in A(I^j_i)\}\cap\pi^i$ has cardinality $|A(I^j_i)|-1$,
		\item for every $l\in K_i$, no element $\varpi^i_{l'}$ with $l'>l$ is contained in $\varpi^i_{l}$.
	\end{enumerate}
	By construction, $\varpi^i_{\kappa_i(\emptyset)}$ denotes the empty sequence. Specifying realization probabilities only for the sequences in $\pi^i$ uniquely determines all remaining basic realization-plan variables via (\ref{nfpr-equ-pre3}). Give any $\gamma=(\gamma^i:i\in N)\in\text{int}(\Lambda)$, we construct the sequence ordering $\pi^i(\gamma^i) \in \Pi^i$ for each player $i$ as follows. For every information set $I^j_i$ with $j\in M_i$, we remove one maximal element $\gamma^i(\varpi^i_{I^j_i}a)$ over $a\in A(I^j_i)$. The remaining elements, when arranged in ascending order, yield the sequence ordering $\pi^i(\gamma^i)=\langle \varpi^i_1,\varpi^i_2,\dots,\varpi^i_{\kappa_i(\emptyset)}\rangle$.
	
	Given $\pi^i\in\Pi^i$, (\ref{ms2rp}) holds for all sequences in $W^i$ if and only if it holds for the $\kappa_i(\emptyset)$ sequences contained in $\pi^i$. Consequently, for any $\gamma^i\in\Lambda^i$, the mixed strategy $\sigma^i$ satisfying (\ref{ms2rp}) has $|S^i|-\kappa_i(\emptyset)$ free variables. We associate each sequence $\varpi^i_l\in\pi^i,l\in K_i$ with a pure strategy $s^i_l\in S^i$ such that $s^i_l(\varpi^i_{l})=1$ and, for any sequence $\varpi^i\nsubseteq \varpi^i_{l}$, $s^i_l(\varpi^i)=1$ implies $\varpi^i\in W^i\setminus\pi^i$. Define $S^i(\pi^i)=\langle s^i_1,s^i_2,\dots,s^i_{\kappa_i(\emptyset)}\rangle$. The set of pure strategies associated with the $|S^i|-\kappa_i(\emptyset)$ free mixed-strategy variables is then given by $S^i\setminus S^i(\pi^i)$. Specifying the mixed strategy $\sigma^i$ only on $S^i\setminus S^i(\pi^i)$ uniquely determines all remaining components via (\ref{ms2rp}).
	
	\subsection{Sequence-Form Proper Equilibrium}
	
	Given $\gamma\in\Lambda$, we define the expected payoff, leading by the sequence $\varpi^i\in W^i$, for player $i\in N$ as
	\begin{equation}\label{sqexpayoff}
			\textstyle g^i(\gamma;\varpi^i)=\sum\limits_{\tilde{\varpi}^i\in W^i,\varpi^i\subseteq\tilde{\varpi}^i}\gamma^i(\tilde{\varpi}^i)g^i(\tilde{\varpi}^i,\gamma^{-i}),
	\end{equation}
	and the maximal expected payoff attainable while committing to the sequence $\varpi^i\in W^i$ as
	\begin{equation}\label{maxpayoff}
			\textstyle g^i_m(\varpi^i,\gamma^{-i})=\max\limits_{\tilde\gamma^i\in \Lambda^i,\tilde\gamma^i(\varpi^i)=1} g^i(\tilde\gamma^i,\gamma^{-i}).
	\end{equation}
	
	\begin{definition}\label{bestresponse}{\em
			Consider a realization plan profile $\gamma\in \Lambda$. For any $i\in N,j\in M_i,a\in A(I^j_i)$, we refer to $\varpi^i_{I^j_i}a$ as an $I^j_i$-best-response sequence to $\gamma$ if the following equality holds for any $a'\in A(I^j_i)$,
			\begin{equation}\label{defbest}
				\max\limits_{\tilde\gamma^i\in \Lambda^i} g^i(\tilde\gamma^i,\gamma^{-i};\varpi^i_{I^j_i}a)\geq\max\limits_{\tilde\gamma^i\in \Lambda^i} g^i(\tilde\gamma^i,\gamma^{-i};\varpi^i_{I^{j}_i}a').
			\end{equation}
			We define $\varpi^i_{I^j_i}a$ as a best-response sequence to $\gamma$ if, for any $j_q\in M_i,a_q\in A(I^{j_q}_i)$ with $a_q\in w^i_{I^j_i}a$, $w^i_{I^{j_q}_i}a_q$ qualifies as an $I^{j_q}_i$-best-response sequence to $\gamma$.}
	\end{definition}
	Since \[\begin{array}{l}
			g^i_m(\varpi^i_{I^j_i}a,\gamma^{-i})\\
			\hspace{1cm}=\max\limits_{\tilde\gamma^i\in \Lambda^i} g^i(\tilde\gamma^i,\gamma^{-i};\varpi^i_{I^j_i}a)+\max\limits_{\tilde\gamma^i\in \Lambda^i,\tilde\gamma^i(\varpi^i_{I^j_i})=1}\sum\limits_{\tilde{\varpi}^i\in W^i,\tilde{\varpi}^i\cap A(I^j_i)=\emptyset}\gamma^i(\tilde{\varpi}^i)g^i(\tilde{\varpi}^i,\gamma^{-i}),
		\end{array}\]
	the definition~\ref{bestresponse} remains unchanged when equation (\ref{defbest}) is replaced by 
	\[g^i_m(\varpi^i_{I^j_i}a,\gamma^{-i})\geq g^i_m(\varpi^i_{I^j_i}a',\gamma^{-i}).\]
	\begin{proposition}{\em
		For player $i\in N$, $\varpi^i\in W^i$ is a best-response sequence to a given $\gamma\in\Lambda$ if and only if $g^i_m(\varpi^i,\gamma^{-i})\geq g^i_m(\tilde \varpi^i,\gamma^{-i})$ for any $\tilde \varpi^i\in W^i$.}
	\end{proposition}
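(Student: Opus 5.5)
The plan is to reduce both directions to a single structural identity: $g^i_m(\varpi^i,\gamma^{-i})$ is nondecreasing along prefixes, with equality exactly at locally optimal choices. Let $V^*=\max_{\tilde\gamma^i\in\Lambda^i}g^i(\tilde\gamma^i,\gamma^{-i})$ denote player $i$'s best-response value. Note that $g^i_m(\emptyset,\gamma^{-i})=V^*$, since $\tilde\gamma^i(\emptyset)=1$ always holds.

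For any $j\in M_i$, let $\varpi^i_{I^j_i}$ be the sequence leading to $I^j_i$. The first step is
\[
g^i_m(\varpi^i_{I^j_i},\gamma^{-i})=\max_{a\in A(I^j_i)}g^i_m(\varpi^i_{I^j_i}a,\gamma^{-i}).
\]
I would prove this from the decomposition displayed just before the proposition, which holds for every $a\in A(I^j_i)$. Its second term does not depend on $a$. A realization plan with $\tilde\gamma^i(\varpi^i_{I^j_i})=1$ must split mass $1$ among the extensions $\varpi^i_{I^j_i}a$. By linearity of the first term in that split, together with perfect recall (the subtrees below different actions at $I^j_i$ involve disjoint sets of sequences, which can be optimized independently), the maximum is attained by putting all mass on a maximizing $a$. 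Hence the displayed identity follows, and also $g^i_m(\varpi^i_{I^j_i}a,\gamma^{-i})\le g^i_m(\varpi^i_{I^j_i},\gamma^{-i})$. The latter also follows directly, since the feasible set for $\varpi^i_{I^j_i}a$ is contained in that for its prefix.

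Iterating along the chain of prefixes $\emptyset\subset\varpi^i_{I^{j_1}_i}a_1\subset\cdots\subset\varpi^i$ (well defined by perfect recall) gives $g^i_m(\varpi^i,\gamma^{-i})\le V^*$ for every $\varpi^i\in W^i$. Moreover, equality holds if and only if every step in the chain preserves the value.

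By the replaced form of Definition~\ref{bestresponse}, $\varpi^i_{I^{j_q}_i}a_q$ is an $I^{j_q}_i$-best-response sequence iff $g^i_m(\varpi^i_{I^{j_q}_i}a_q,\gamma^{-i})=\max_{a'}g^i_m(\varpi^i_{I^{j_q}_i}a',\gamma^{-i})$. By the first step, this holds iff $g^i_m(\varpi^i_{I^{j_q}_i}a_q,\gamma^{-i})=g^i_m(\varpi^i_{I^{j_q}_i},\gamma^{-i})$. Thus $\varpi^i$ is a best-response sequence iff every step in its chain preserves the value, iff $g^i_m(\varpi^i,\gamma^{-i})=V^*$. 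Since $V^*$ is attained at $\emptyset$ and bounds all sequences, this is exactly the statement that $g^i_m(\varpi^i,\gamma^{-i})\ge g^i_m(\tilde\varpi^i,\gamma^{-i})$ for all $\tilde\varpi^i$. The empty sequence is treated by convention as vacuously a best-response sequence, consistent with $g^i_m(\emptyset,\gamma^{-i})=V^*$.

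The main obstacle is the first step. Rigorously justifying that the constrained maximum splits as claimed requires checking that, under $\tilde\gamma^i(\varpi^i_{I^j_i})=1$, the contributions of sequences outside the subtree of $I^j_i$ can be optimized independently of the choice at $I^j_i$. I would handle this via the recursive constraints (\ref{nfpr-equ-pre3}): given the fixed prefix, they decouple into independent polytopes for the disjoint sub-collections of information sets, and $g^i(\cdot,\gamma^{-i})$ is additively separable over them.
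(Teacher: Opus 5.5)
Your proposal is correct and follows essentially the same route as the paper: an $I^j_i$-best-response step preserves $g^i_m$, so chaining along the prefixes of $\varpi^i$ gives $g^i_m(\varpi^i,\gamma^{-i})=g^i_m(\emptyset,\gamma^{-i})$, which is the global maximum. Your version is more complete than the paper's in two places. First, you justify the local identity $g^i_m(\varpi^i_{I^j_i},\gamma^{-i})=\max_{a}g^i_m(\varpi^i_{I^j_i}a,\gamma^{-i})$, which the paper uses without proof. Second, you state the prefix monotonicity explicitly; this is what the direction the paper calls ``immediate'' actually needs.
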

	\begin{proof}
		The sufficiency direction is immediate from the definition. We now show necessity. For any $i\in N,j\in M_i,a\in A(I^j_i)$, suppose that $\varpi^i_{I^j_i}a$ is an $I^j_i$-best-response sequence to $\gamma$, then $g^i_m(\varpi^i_{I^j_i}a,\gamma^{-i})= g^i_m(\varpi^i_{I^j_i},\gamma^{-i})$. Next, consider any $\varpi^i\in W^i$ that is a best-response sequence to a given $\gamma\in\Lambda$. By recursively invoking the local equality above along the sequence, one arrives at $g^i_m(\varpi^i,\gamma^{-i})= g^i_m(\emptyset,\gamma^{-i})$. This identity immediately yields $g^i_m(\varpi^i,\gamma^{-i})\geq g^i_m(\tilde \varpi^i,\gamma^{-i})$ for any $\tilde \varpi^i\in W^i$, completing the proof.
	\end{proof}
	\begin{lemma}\label{nfprpure}{\em
			Let $(\sigma,\gamma)\in T$, and consider $\varpi^i\in W^i,s^i\in S^i$ of player $i$ with $s^i(\varpi^i)=1$. The following statements are equivalent:
			\begin{enumerate}[label=(\arabic*)]
				\item For all $\tilde s^i\in S^i$ satisfying $\tilde s^i(\varpi^i)=1$, it holds that $u^i(s^i,\sigma^{-i})\geq u^i(\tilde s^i,\sigma^{-i})$.
				\item For any sequence $\varpi^i_{I^j_i}a\nsubseteq\varpi^i$, with $j\in M_i,a\in A(I^j_i)$, such that $s^i(\varpi^i_{I^j_i}a)=1$, $\varpi^i_{I^j_i}a$ is a best-response sequence to $\gamma$.
			\end{enumerate} }
	\end{lemma}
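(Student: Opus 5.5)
The plan is to reduce both statements to a backward-induction (dynamic programming) decomposition of player $i$'s payoff over his own information sets, using $u^i(\tilde s^i,\sigma^{-i})=g^i(\gamma^i(\tilde s^i),\gamma^{-i})$ for $(\sigma,\gamma)\in T$.

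\textbf{Decomposition.} First, for a pure $\tilde s^i$ I write
\[
u^i(\tilde s^i,\sigma^{-i})=\sum_{\tilde\varpi^i}\tilde s^i(\tilde\varpi^i)\,g^i(\tilde\varpi^i,\gamma^{-i}).
\]
By perfect recall, for every $I^j_i$ reached by $\tilde s^i$, the terms with $\tilde\varpi^i\supseteq\varpi^i_{I^j_i}$ split into the blocks $g^i(\gamma^i(\tilde s^i);\varpi^i_{I^j_i}a)$ from (\ref{sqexpayoff}), one for each $a\in A(I^j_i)$. The blocks belonging to different information sets that are not nested in each other's subtrees are additive and independent.

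\textbf{Value function.} Let $V(\varpi^i_{I^j_i}a)=\max_{\tilde\gamma^i\in\Lambda^i} g^i(\tilde\gamma^i,\gamma^{-i};\varpi^i_{I^j_i}a)$. I then establish the recursion
\[
V(\varpi^i_{I^j_i}a)=\gamma^{-i}\text{-payoff of }\varpi^i_{I^j_i}a \text{ itself}\;+\sum_{j'\in M_i(\varpi^i_{I^j_i}a)}\ \max_{a'} V(\varpi^i_{I^{j'}_i}a').
\]
Here the sum runs over the information sets immediately following $\varpi^i_{I^j_i}a$. Because the realization-plan polytope factors over these sets, the maximum is attained by a pure plan, and the maximizers are exactly the plans that choose a maximizing $a'$ at each of these sets and are themselves optimal below. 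I will prove this by induction along the ordering of sequences: deeper sequences are handled first, matching condition 2 of the ordering $\pi^i$. The identity for $g^i_m$ displayed just before the Proposition is the same statement, so I can lean on it.

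\textbf{The equivalence.} Next, I fix $\varpi^i$ and the set $\mathcal J$ of information sets $I^j_i$ reached by $s^i$ whose chosen action is not in $\varpi^i$. The pure strategies $\tilde s^i$ with $\tilde s^i(\varpi^i)=1$ are precisely those that agree with $\varpi^i$ on its own path and are free everywhere else. Their payoff is a constant, coming from the terminal contributions forced by $\varpi^i$, plus a sum over the maximal elements of $\mathcal J$ of the block values.

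Hence $s^i$ maximizes over this class if and only if, at each maximal element of $\mathcal J$, it attains $\max_{a}V(\varpi^i_{I^j_i}a)$ and is optimal below. Unfolding the recursion, this holds if and only if every $I^j_i\in\mathcal J$ has its chosen action $\varpi^i_{I^j_i}a$ satisfying (\ref{defbest}), that is, being an $I^j_i$-best-response sequence. Given the Proposition and the remark after Definition~\ref{bestresponse}, this is the same as (2), provided the best-response requirement is read along the portion of the sequence not contained in $\varpi^i$. For (1)$\Rightarrow$(2) I argue by contradiction: if some chosen $\varpi^i_{I^j_i}a$ fails (\ref{defbest}), I replace $s^i$ in the subtree of $I^j_i$ by an optimal continuation starting with the better $a'$. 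This yields $\tilde s^i$ with $\tilde s^i(\varpi^i)=1$ and strictly larger payoff. For (2)$\Rightarrow$(1), the blockwise optimality sums up directly.

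\textbf{Main obstacle.} I expect the main difficulty to be the bookkeeping in the reduced normal form. Pure strategies specify actions only at reachable sets, so I must check that the local replacement in the contradiction step is again a legitimate reduced strategy still containing $\varpi^i$, and that the blocks for parallel information sets really do not interact. A further delicate point is that Definition~\ref{bestresponse} also constrains prefix actions lying inside $\varpi^i$, so the precise scope of (2) needs care. I would handle this by showing that, inside $\varpi^i$, only the local comparisons at sets in $\mathcal J$ enter, via the additive $g^i_m$ identity stated before the Proposition.
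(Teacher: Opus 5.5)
Your argument follows the paper's. (2)$\Rightarrow$(1) is the same backward-induction decomposition of the maximum over plans with $\tilde\gamma^i(\varpi^i)=1$. (1)$\Rightarrow$(2) is the same contradiction obtained by switching to a better action with an optimal continuation. But the scope issue you flag at the end cannot be ``handled'': with Definition~\ref{bestresponse} read literally, (1)$\Rightarrow$(2) is false.

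Here is a counterexample. Let only player $i$ move, at a root set $I^1_i$ with actions $A,B$ and at a set $I^2_i$ after $A$ with actions $x,y$. Payoffs are $10$ after $B$, $0$ after $\langle A,x\rangle$ and $1$ after $\langle A,y\rangle$. Take $\varpi^i=\varpi^i_{I^1_i}A$ and $s^i=\{A,y\}$; then (1) holds. Yet the only sequence $\varpi^i_{I^2_i}y\nsubseteq\varpi^i$ used by $s^i$ contains $A$, and $g^i_m(\varpi^i_{I^1_i}A,\gamma^{-i})=1<10=g^i_m(\varpi^i_{I^1_i}B,\gamma^{-i})$. So it is not a best-response sequence, and (2) fails.

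The paper's own proof of (1)$\Rightarrow$(2) has exactly this hole. The failing prefix action $a_q$ may belong to $\varpi^i$, and then switching to $a'_q$ produces $\tilde s^i$ with $\tilde s^i(\varpi^i)=0$, which contradicts nothing in (1).

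So state and prove the lemma with (2) restricted to your set $\mathcal J$: at every $I^j_i\in\mathcal J$, the action chosen by $s^i$ is an $I^j_i$-best response. Equivalently, the best-response requirement is imposed only on prefix actions not in $\varpi^i$. Your contradiction step proves this version correctly. By perfect recall, no action of $\varpi^i$ lies at or below a set in $\mathcal J$, so the modified strategy still contains $\varpi^i$.

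This restricted form is also what the paper actually uses later. In the proof of the proper-equilibrium equivalence, the lemma is invoked for $\varpi^i_l\in\pi^i$ and for $\varpi^i$ with $g^i_m(\varpi^i,\gamma^{-i})<g^i_m(\tilde\varpi^i,\gamma^{-i})$. These need not be (respectively, are not) best-response sequences, so the literal (2) would generally fail there. In Theorem~\ref{nfpr-thm-sfc1} and Appendix~\ref{formulations} it is used with $\varpi^i=\emptyset$, where both readings agree.

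One smaller repair: define $V(\varpi^i_{I^j_i}a)$ as a maximum over plans with $\tilde\gamma^i(\varpi^i_{I^j_i}a)=1$, i.e.\ work with $g^i_m$ throughout. With the unconstrained maximum of (\ref{defbest}), which the paper also writes, a subtree of negative value can simply be switched off. Then $V$ becomes the positive part of the conditional value and your recursion fails. Two actions of different negative values would also both pass (\ref{defbest}), breaking (2)$\Rightarrow$(1).
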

	\begin{proof}$\bm{(1)\Rightarrow(2)}$: Assume that (1) holds while (2) fails. Then there exists a sequence $\varpi^i_{I^j_i}a\nsubseteq\varpi^i$, with $j\in M_i,a\in A(I^j_i)$, such that $s^i(\varpi^i_{I^j_i}a)=1$ but $\varpi^i_{I^j_i}a$ is not a best-response sequence to $\gamma$. This entails that, for some $j_q\in M_i,a_q\in A(I^{j_q}_i)$  with $a_q\in \varpi^i_{I^j_i}a$, $\varpi^i_{I^{j_q}_i}a_q$ is not an $I^{j_q}_i$-best-response sequence to $\gamma$. There exists $a'_q\in A(I^{j_q}_i)$ such that
		\[\max\limits_{\tilde\gamma^i\in \Lambda^i} g^i(\tilde\gamma^i,\gamma^{-i};\varpi^i_{I^{j_q}_i}a_q)<\max\limits_{\tilde\gamma^i\in \Lambda^i} g^i(\tilde\gamma^i,\gamma^{-i};\varpi^i_{I^{j_q}_i}a'_q),\]
	producing a pure strategy $\tilde s^i$ such that $\gamma^i(\tilde s^i)\in\arg\max_{\tilde\gamma^i\in \Lambda^i} g^i(\tilde\gamma^i,\gamma^{-i};\varpi^i_{I^{j_q}_i}a'_q)$ and $\gamma^i(\tilde s^i;\tilde\varpi^i)=\gamma^i(s^i;\tilde\varpi^i)$ for any $\tilde\varpi^i\in W^i$ with $a_q,a'_q\notin \tilde\varpi^i$. It follows that $g^i(\gamma^i(s^i),\gamma^{-i})<g^i(\gamma^i(\tilde s^i),\gamma^{-i})$ and hence $u^i(s^i,\sigma^{-i})< u^i(\tilde s^i,\sigma^{-i})$, leading to a contradiction.
	
	$\bm{(2)\Rightarrow(1)}$:
		Assume that condition (2) is satisfied. For any $j\in M_i,a\in A(I^j_i)$ with $s^i(\varpi^i_{I^{j}_i}a)=1$, it holds that
		\[\begin{array}{l}
			\max\limits_{\tilde\gamma^i\in \Lambda^i,\tilde\gamma^i(\varpi^i)=1} g^i(\tilde\gamma^i,\gamma^{-i};\varpi^i_{I^{j}_i}a)\\
			\hspace{1.5cm}=\sum\limits_{j_q\in M_i(\varpi^i_{I^{j}_i}a)}\max\limits_{\tilde\gamma^i\in \Lambda^i,\tilde\gamma^i(\varpi^i)=1}\sum\limits_{a_q\in A(I^{j_q}_i)} \tilde\gamma^i(\varpi^i_{I^{j_q}_i}a_q)g^i(\tilde\gamma^i,\gamma^{-i};\varpi^i_{I^{j_q}_i}a_q)+g^i(\varpi^i_{I^{j}_i}a,\gamma^{-i})\\
			\hspace{1.5cm}=\sum\limits_{j_q\in M_i(\varpi^i_{I^{j}_i}a)}\sum\limits_{a_q\in A(I^{j_q}_i)}s^i(\varpi^i_{I^{j_q}_i}a_q)\max\limits_{\tilde\gamma^i\in \Lambda^i,\tilde\gamma^i(\varpi^i)=1} g^i(\tilde\gamma^i,\gamma^{-i};\varpi^i_{I^{j_q}_i}a_q)+g^i(\varpi^i_{I^{j}_i}a,\gamma^{-i})\\
		\end{array}
		\]
		The second equality follows immediately from condition (2). Applying forward induction, one derives
		\[\begin{array}{ll}
			\max\limits_{\tilde\gamma^i\in \Lambda^i,\tilde\gamma^i(\varpi^i)=1} g^i(\tilde\gamma^i,\gamma^{-i})&=\sum\limits_{j\in M_i(\emptyset)}\max\limits_{\tilde\gamma^i\in \Lambda^i,\tilde\gamma^i(\varpi^i)=1}\sum\limits_{a\in A(I^j_i)} \tilde\gamma^i(\varpi^i_{I^j_i}a)g^i(\tilde\gamma^i,\gamma^{-i};\varpi^i_{I^j_i}a)+g^i(\emptyset,\gamma^{-i})\\
			&=\sum\limits_{j\in M_i(\emptyset)}\sum\limits_{a\in A(I^j_i)}s^i(\varpi^i_{I^j_i}a)\max\limits_{\tilde\gamma^i\in \Lambda^i,\tilde\gamma^i(\varpi^i)=1} g^i(\tilde\gamma^i,\gamma^{-i};\varpi^i_{I^j_i}a)+g^i(\emptyset,\gamma^{-i})\\
			&=\sum\limits_{\tilde\varpi^i\in W^i}s^i(\tilde\varpi^i) g^i(\tilde\varpi^i,\gamma^{-i})\\
			&=g^i(\gamma^i(s^i),\gamma^{-i}).
		\end{array}
		\]
		It follows that $u^i(s^i,\sigma^{-i})=\max_{\tilde \sigma^i\in \Xi^i,\tilde \sigma^i(\varpi^i)=1} u^i(\tilde \sigma^i,\sigma^{-i})$, which completes the proof.
	\end{proof}
 	\begin{theorem}\label{nfpr-thm-sfc1}{\em
		For $(\sigma^*,\gamma^*)\in T$, $\sigma^*$ is a Nash equilibrium if and only if, for any player $i\in N$, $\gamma^{*i}(\varpi^i)=0$ whenever $g^i_m(\varpi^i,\gamma^{*-i})<g^i_m(\tilde \varpi^i,\gamma^{*-i})$ for some $\tilde \varpi^i\in W^i$.}
	\end{theorem}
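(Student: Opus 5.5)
The plan is to prove both directions by passing between pure strategies in the support of $\sigma^*$ and sequences with positive realization weight. The key tools are Lemma~\ref{nfprpure}, applied with $\varpi^i=\emptyset$, and the preceding Proposition.

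Two observations come first. By the Proposition, a sequence $\varpi^i$ satisfies $g^i_m(\varpi^i,\gamma^{*-i})\ge g^i_m(\tilde\varpi^i,\gamma^{*-i})$ for all $\tilde\varpi^i$ exactly when it is a best-response sequence to $\gamma^*$; in that case $g^i_m(\varpi^i,\gamma^{*-i})=g^i_m(\emptyset,\gamma^{*-i})=\max_{\tilde\gamma^i}g^i(\tilde\gamma^i,\gamma^{*-i})$. So the condition in the theorem says that every sequence with $\gamma^{*i}(\varpi^i)>0$ is a best-response sequence. Second, by (\ref{ms2rp}) we have $\gamma^{*i}(\varpi^i)=\sum_{s^i}s^i(\varpi^i)\sigma^{*i}(s^i)$. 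Hence $\gamma^{*i}(\varpi^i)>0$ if and only if some $s^i$ with $\sigma^{*i}(s^i)>0$ satisfies $s^i(\varpi^i)=1$.

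\textbf{Necessity.} Suppose $\sigma^*$ is a Nash equilibrium and $\gamma^{*i}(\varpi^i)>0$. Pick $s^i$ in the support of $\sigma^{*i}$ with $s^i(\varpi^i)=1$. Since $s^i$ is a best response in $\Gamma_n$, statement (1) of Lemma~\ref{nfprpure} holds with the empty sequence. Statement (2) then says that every nonempty sequence $\tilde\varpi^i$ with $s^i(\tilde\varpi^i)=1$ is a best-response sequence, and in particular $\varpi^i$ is. The case $\varpi^i=\emptyset$ is trivial. By the Proposition, $g^i_m(\varpi^i,\gamma^{*-i})$ is maximal, which gives the contrapositive of the stated condition.

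\textbf{Sufficiency.} Suppose every sequence with positive $\gamma^{*i}$-weight maximizes $g^i_m$, and take $s^i$ with $\sigma^{*i}(s^i)>0$. Every sequence $\tilde\varpi^i$ with $s^i(\tilde\varpi^i)=1$ has $\gamma^{*i}(\tilde\varpi^i)\ge\sigma^{*i}(s^i)>0$, so by the Proposition it is a best-response sequence. Statement (2) of Lemma~\ref{nfprpure} therefore holds for $s^i$ with $\varpi^i=\emptyset$. By (1), $u^i(s^i,\sigma^{*-i})\ge u^i(\tilde s^i,\sigma^{*-i})$ for all $\tilde s^i$, since $\tilde s^i(\emptyset)=1$ always. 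So every pure strategy in the support is a best response, and $\sigma^*$ is a Nash equilibrium.

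One point needs care: Lemma~\ref{nfprpure} and the payoffs $g^i_m$ must refer to the same opponents' profile. Lemma~\ref{nfprpure} is stated for $(\sigma,\gamma)\in T$ and compares $u^i(\cdot,\sigma^{-i})$ with best-response sequences to $\gamma$. The identity $u^i(\sigma)=g^i(\gamma)$, applied player by player, gives $u^i(\tilde s^i,\sigma^{*-i})=g^i(\gamma^i(\tilde s^i),\gamma^{*-i})$, which justifies the translation. The remaining step is routine: checking that, in reduced normal form, every sequence consistent with $s^i$ is either of the form $\varpi^i_{I^j_i}a$ covered by (2) or the empty sequence.
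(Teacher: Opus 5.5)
Your proposal is correct and follows the paper's proof essentially step by step. Both directions pass between strategies in the support of $\sigma^{*i}$ and sequences with positive weight via (\ref{ms2rp}), and then apply Lemma~\ref{nfprpure} to conclude. You additionally make explicit two steps the paper leaves implicit: you use the Proposition to identify maximizers of $g^i_m$ with best-response sequences, and you instantiate the lemma at $\varpi^i=\emptyset$.
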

	\begin{proof}
		Assume that $\sigma^*$ is a Nash equilibrium. We show that any sequence $\varpi^i\in W^i$ satisfying $\gamma^{*i}(\varpi^i)>0$ is a best-response sequence to $\gamma^*$. Since $\gamma^{*i}(\varpi^i)=\sum_{s^i\in S^i}s^i(\varpi^i)\sigma^{*i}(s^i)>0$, there exists $s^i\in S^i$ such that $s^i(\varpi^i)=1$ and $\sigma^{*i}(s^i)>0$. Consequently, $u^i(s^i,\sigma^{-i})\geq u^i(\tilde s^i,\sigma^{-i})$ for all $\tilde s^i\in S^i$. It then follows from Lemma~\ref{nfprpure} that $\varpi^i$ is a best-response sequence to $\gamma^*$.
	
		Suppose that any sequence  $\varpi^i\in W^i$ of play $i$ with $\gamma^{*i}(\varpi^i)>0$ is a best-response sequence to $\gamma^*$. Consider $s^i\in S^i$ such that $\sigma^{*i}(s^i)>0$. Then, for any $\varpi^i$ with $s^i(\varpi^i) = 1$, we have $\gamma^{*i}(\varpi^i)=\sum_{\tilde s^i\in S^i}\tilde s^i(\varpi^i)\sigma^{*i}(\tilde s^i)\geq s^i(\varpi^i)\sigma^{*i}(s^i)>0$. implying that $\varpi^i$ is a best-response sequence to $\gamma^*$. By Lemma~\ref{nfprpure}, it follows that $u^i(s^i, \sigma^{*-i}) \ge u^i(\tilde{s}^i, \sigma^{*-i})$ holds for any $\tilde{s}^i \in S^i$, and therefore $\sigma^*$ is a Nash equilibrium.
	\end{proof}
	\begin{definition}\label{nfpr-def-sfpe}{\em
		Let $\Gamma$ be an extensive form game. For any sufficiently small $\varepsilon > 0$, a totally realization plan profile $\gamma(\varepsilon)\in \Lambda$ is an $\varepsilon$-sequence-form perfect equilibrium of $\Gamma$ if $\gamma^i(\varepsilon;\varpi^i)\leq\varepsilon$ whenever $g^i_m(\varpi^i,\gamma^{-i}(\varepsilon)) < g^i_m(\tilde\varpi^i,\gamma^{-i}(\varepsilon))$ for all $i\in N$ and $\varpi^i,\tilde\varpi^i\in W^i$. A realization plan profile $\gamma^*\in \Lambda$ is defined as a sequence-form perfect equilibrium of game $\Gamma$ if $\gamma^*$ is a limit point of some sequence $\{\gamma(\varepsilon^k)\}_{k=1}^\infty$, where $\lim_{k\to\infty}\varepsilon^k=0$ and each $\gamma(\varepsilon^k)$ is an $\varepsilon^k$-sequence-form perfect equilibrium of $\Gamma$.}
	\end{definition}
	In Appendix~\ref{formulations}, we provide a proof of the equivalence between normal-form perfect equilibria and sequence-form perfect equilibria. Based on this definitional framework, we also consider two additional equilibrium refinement concepts. We next turn our attention to the key concept underlying this study.
	\begin{definition}\label{nfpr-def-sfc1}{\em 
			Let $\Gamma$ be an extensive form game. For any sufficiently small $\varepsilon > 0$, a totally realization plan profile $\gamma(\varepsilon)\in \Lambda$ is an $\varepsilon$-sequence-form proper equilibrium of $\Gamma$ if $\gamma^i(\varepsilon;\varpi^i)\leq\varepsilon\gamma^i(\varepsilon;\tilde \varpi^i)$ whenever $g^i_m(\varpi^i,\gamma^{-i}(\varepsilon)) < g^i_m(\tilde \varpi^{i},\gamma^{-i}(\varepsilon))$ for all $i\in N$ and $\varpi^i,\tilde \varpi^i\in W^i$. A realization plan profile $\gamma^*\in \Lambda$ is defined as a sequence-form proper equilibrium of game $\Gamma$ if $\gamma^*$ is a limit point of some sequence $\{\gamma(\varepsilon^k)\}_{k=1}^\infty$, where $\lim_{k\to\infty}\varepsilon^k=0$ and each $\gamma(\varepsilon^k)$ is an $\varepsilon^k$-sequence-form proper equilibrium of $\Gamma$.}
	\end{definition}
	Next, we demonstrate the equivalence between normal-form proper equilibria and sequence-form proper equilibria, formalized in the following theorem.
	\begin{theorem}\label{nfpr-lem-sfc3}{\em
			For $(\sigma^*,\gamma^*)\in T$, $\sigma^*$ is a normal-form proper equilibrium if and only if $\gamma^*$ is a sequence-form proper equilibrium.}
	\end{theorem}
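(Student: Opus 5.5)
The proof goes by transferring $\varepsilon$-proper sequences through the map $\sigma\mapsto\gamma(\sigma)$ of (\ref{ms2rp}). Because this map is continuous and linear, limit points are preserved. The whole argument therefore reduces to two claims:
\begin{itemize}
\item[(a)] every $\varepsilon$-normal-form proper equilibrium $\sigma(\varepsilon)$ induces a realization plan $\gamma(\sigma(\varepsilon))$ that is an $\varepsilon'$-sequence-form proper equilibrium, with $\varepsilon'\to 0$ as $\varepsilon\to 0$;
\item[(b)] conversely, every $\varepsilon$-sequence-form proper equilibrium $\gamma(\varepsilon)$ admits a preimage $\sigma(\varepsilon)$ that is an $\varepsilon'$-normal-form proper equilibrium, with $\varepsilon'\to0$.
\end{itemize}
For the "if" direction one also needs the preimages $\sigma(\varepsilon^k)$ to accumulate at a $\sigma^*$ with $(\sigma^*,\gamma^*)\in T$. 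This follows from compactness of $\Xi$ together with continuity of the map, after passing to a subsequence. A sequence-form proper equilibrium is determined only up to $T$, so the claim should be read as being about some $\sigma^*$ in the fibre over $\gamma^*$; I would check this reading against the statement.

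\textbf{Direction (a).} The basic link between the two payoff notions is the identity
\[
g^i_m(\varpi^i,\gamma^{-i})=\max\{u^i(s^i,\sigma^{-i}) : s^i(\varpi^i)=1\}.
\]
It follows from the maximization in (\ref{maxpayoff}) over pure realization plans, together with Lemma~\ref{nfprpure}.

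Now suppose $g^i_m(\varpi^i,\cdot)<g^i_m(\tilde\varpi^i,\cdot)$. Let $\tilde s^i$ attain the maximum for $\tilde\varpi^i$. Every $s^i$ with $s^i(\varpi^i)=1$ then satisfies $u^i(s^i,\sigma^{-i})<u^i(\tilde s^i,\sigma^{-i})$, so $\sigma^i(s^i)\le\varepsilon\,\sigma^i(\tilde s^i)$. Summing over such $s^i$ gives
\[
\gamma^i(\varpi^i)\le |S^i|\,\varepsilon\,\sigma^i(\tilde s^i)\le |S^i|\,\varepsilon\,\gamma^i(\tilde\varpi^i),
\]
since $\tilde s^i(\tilde\varpi^i)=1$. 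So $\varepsilon'=|S^i|\varepsilon$ works. I expect this direction to be routine.

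\textbf{Direction (b).} This is the main obstacle, because many $\sigma$ map to the same $\gamma$ and an arbitrary preimage need not be proper. My plan is to build a \emph{canonical} preimage using the free-variable machinery:
\begin{enumerate}
\item For each $\gamma^i(\varepsilon)$, form the ordering $\pi^i(\gamma^i)$ and the associated pure strategies $S^i(\pi^i)=\langle s^i_1,\dots,s^i_{\kappa_i(\emptyset)}\rangle$.
\item Take the preimage $\sigma^i$ supported on $S^i(\pi^i)$, plus an additional tiny weight of order $\varepsilon^{L}$ for large $L$ on $S^i\setminus S^i(\pi^i)$ to make it totally mixed. The weights on $S^i(\pi^i)$ are solved triangularly from the lowest to the highest sequence, using property 2 of the ordering. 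The result is a behavior-like, lexicographic decomposition in which each $\sigma^i(s^i_l)$ is comparable to $\gamma^i(\varpi^i_l)$ up to bounded factors.
\item Check the proper inequalities. For pure strategies $s^i,\tilde s^i$ with $u^i(s^i)<u^i(\tilde s^i)$, locate the first information set where $s^i$ deviates from a best continuation. By Lemma~\ref{nfprpure}, some sequence $\varpi$ of $s^i$ is not a best response while $\tilde s^i$ is better there. The sequence-form inequality for $\varpi$ against the corresponding sequence of $\tilde s^i$ should then give $\sigma^i(s^i)\le C\varepsilon\,\sigma^i(\tilde s^i)$.
\end{enumerate}

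The difficulty in step 3 is controlling the payoff comparison among pure strategies that share the same deviation sequence but differ elsewhere. This is where the recursive rather than symmetric structure bites, and I would handle it by induction on the ordering $\pi^i$. Failing that, I would exploit the fact that properness of $\sigma^*$ can be certified by any sequence with $\varepsilon'\to0$. It is then enough to show that the canonical preimage satisfies the ratio bound with $\varepsilon'=C\varepsilon^{1/d}$, where $d$ is the depth of the game tree and each level of the tree costs one factor.
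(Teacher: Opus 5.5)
\textbf{Verdict.} Your direction (a) is correct and is the paper's argument: you get the factor $|S^i|\varepsilon$ where the paper writes $\sqrt{\varepsilon^k}$. Direction (b) has a genuine gap: the preimage you propose does not work in general, and step 3 cannot be carried out as you sketch it. There are three concrete failures.

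\emph{(i) The triangular solve can go negative.} A preimage supported on $S^i(\pi^i)$, with only $O(\varepsilon^L)$ elsewhere, need not be nonnegative, let alone satisfy $\sigma^i(s^i_l)\asymp\gamma^i(\varpi^i_l)$. Let chance pick one of three matching-pennies subgames, observed by both players. Exact $\tfrac12$-mixing is an $\varepsilon$-sequence-form proper equilibrium for every $\varepsilon$, since all $g^1_m$ coincide. Each of the three non-empty sequences in $\pi^1$ has $\gamma^1=\tfrac12$. So the three associated pure strategies each need weight $\approx\tfrac12$, which leaves $s^1_{\kappa_1(\emptyset)}$ with weight $\approx-\tfrac12$. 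Inside an indifference class, the free strategies must carry weight comparable to the $S^i(\pi^i)$ ones, e.g.\ product weights.

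\emph{(ii) A uniform $\varepsilon^L$ on $S^i\setminus S^i(\pi^i)$ breaks properness.} Let player 1 choose $A$ or $B$ at the root, with $B$ losing $d_B$. After $A$, chance leads to two parallel information sets with mistakes $x_2,y_2$ losing $d_X<d_Y$, where $d_X+d_Y<d_B$. The strategy $(A,x_2,y_2)$ is not in $S^1(\pi^1)$, yet it is better than $(B)$ and worse than $(A,x_1,y_2)$. Properness therefore requires $\gamma^1(B)/\varepsilon'\le\sigma^1(A,x_2,y_2)\le\varepsilon'\gamma^1(Ay_2)$. Sequence-form properness only guarantees $\gamma^1(B)\le\varepsilon\gamma^1(Ay_2)$, with the magnitudes otherwise arbitrary. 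So the free weight must be pinned to the $\gamma$-values of its neighbours in the payoff ranking. A large fixed $L$ makes it $\ll\gamma^1(B)=\sigma^1(B)$.

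\emph{(iii) No single sequence-form inequality gives the pure-strategy ratios.} At the deviation $x_2$ you get $\gamma^1(Ax_2)\le\varepsilon\gamma^1(Ax_1)$. What you need is $\sigma^1(A,x_2,y_2)\le\varepsilon'\gamma^1(Ay_2)$, and properness in fact forces $\gamma^1(Ay_2)\le\varepsilon\gamma^1(Ax_2)$. Normal-form rankings add losses across parallel information sets, so these ratios must be \emph{enforced} through the choice of free weights. The exponent in $\varepsilon'$ is then governed by how many free strategies must be interleaved inside one $\varepsilon$-gap, not by the depth of the tree.

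\textbf{What the paper does.} It orders all of $S^i$ by payoff, with $s^i_{\kappa_i(\emptyset)}$ last. Each free strategy gets weight $\max\{\sigma^{*i}(\tilde s^i_q),\tilde\varepsilon^k_s\sigma^{ki}(\tilde s^i_{q+1})\}$, where $\tilde\varepsilon^k_s=(\varepsilon^k)^{1/w_0}$ and $w_0=\max_i\{|S^i|-\kappa_i(\emptyset)+1\}$. It then readjusts the $S^i(\pi^i)$ weights via (\ref{ms2rp}) and checks properness with $\sqrt{\tilde\varepsilon^k_s}$. This addresses (ii)--(iii) but not (i). If the three-subgame block follows a strictly inferior first move, $\sigma^{*i}$ vanishes on the block, and whatever the tie-breaking the readjusted weights are not all nonnegative. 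A complete argument must combine the payoff-ranked interpolation with non-negligible (e.g.\ product) weights inside indifference classes.

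\textbf{The fibre issue.} Your reading of the ``if'' direction is weaker than the statement. The statement concerns the given pair $(\sigma^*,\gamma^*)\in T$, and the $\sigma^{*i}$ term above is there precisely to force $\sigma^k\to\sigma^*$. Compactness only yields some $\hat\sigma$ in the fibre over $\gamma^*$. This is easy to repair. Suppose $\sigma(\varepsilon)\to\hat\sigma$ are $\varepsilon$-normal-form proper and $\gamma(\sigma^*)=\gamma(\hat\sigma)$. Put $\sigma'^i(\varepsilon)=\sigma^i(\varepsilon)-c_i\hat\sigma^i+c_i\sigma^{*i}$ with $c_i=(1-\sqrt{\varepsilon})\min_{s^i\in\mathrm{supp}\,\hat\sigma^i}\sigma^i(\varepsilon;s^i)/\hat\sigma^i(s^i)\to1$. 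Then:
\begin{itemize}
\item $\sigma'^i(\varepsilon)\ge\sqrt{\varepsilon}\,\sigma^i(\varepsilon)$ componentwise.
\item All realization plans, and hence all payoff rankings, are unchanged.
\item For small $\varepsilon$, every non-best response has zero weight under both $\hat\sigma^i$ and $\sigma^{*i}$, because $\sigma^{*i}$ earns the same payoff as $\hat\sigma^i$.
\end{itemize}
Hence $\sigma'(\varepsilon)$ is $\sqrt{\varepsilon}$-normal-form proper and converges to $\sigma^*$.
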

	\begin{proof}
		$\Rightarrow$ Suppose that $\gamma^*$ is the limit point of a convergent sequence $\{\gamma(\varepsilon^k), k = 1, 2, \ldots\}$ with $\lim_{k\to\infty} \varepsilon^k = 0$, where $\gamma(\varepsilon^k)$ denotes an $\varepsilon^k$-sequence-form proper equilibrium. For each $k$, multiple totally mixed strategies $\sigma^k$ satisfy $(\sigma^k,\gamma(\varepsilon^k))\in T$. We establish that there exists at least one such $\sigma^k$ that constitutes an $\tilde{\varepsilon}^k$-normal-form proper equilibrium for some $\tilde{\varepsilon}^k>0$.
			
			Let $w_0=\max_{i\in N}\{|S^i|-\kappa_i(\emptyset)+1\}$ and define $\tilde\varepsilon_s^k={(\varepsilon^k)}^{\frac{1}{w_0}}$. For each player $i\in N$, we construct $\sigma^{ki}$ as follows. Let $\pi^i(\gamma^i(\varepsilon^k))=\langle \varpi^i_1,\varpi^i_2,\dots,\varpi^i_{\kappa_i(\emptyset)}\rangle$ and $S^i(\pi^i(\gamma^i(\varepsilon^k)))=\langle s^i_1,s^i_2,\dots,s^i_{\kappa_i(\emptyset)}\rangle$. For any $i\in N,j\in M_i,a\in A(I^i_j)$ such that $\varpi^i_{I^j_i}a\in W^i\setminus\pi^i(\gamma^i(\varepsilon^k))$, $\varpi^i_{I^j_i}a$ is an $I^j_i$-best-response sequence to $\gamma(\varepsilon^k)$, because $\gamma^i(\varepsilon^k;\varpi^i_{I^j_i}a)\geq \gamma^i(\varepsilon^k;\varpi^i_{I^j_i}a')$ derives $g^i_m(\varpi^i_{I^j_i}a,\gamma^{-i})\geq g^i_m(\varpi^i_{I^j_i}a',\gamma^{-i})$ for all $a'\in A(I^j_i)$. Moreover, by the construction of $s^i_l$ for $l\in K_i$ and Lemma~\ref{nfprpure}, one have $g^i_m(\varpi^i_l,\gamma^{-i})=u^i(s^i_l,\sigma^{-i}(\gamma(\varepsilon^k)))$ for $l\in K_i$. Order the pure strategies in ascending expected payoff $u^i(s^i,\sigma^{-i}(\gamma(\varepsilon^k)))$, placing $s^i_{\kappa_i(\emptyset)}$ at the last position, yielding $\{\tilde s^i_q\}_{q=1}^{|S^i|}$. We then set
			\[
			\sigma^{ki}(\tilde s^i_q)=\begin{cases}
				\gamma^i(\varepsilon^k;\varpi^i_l)&\text{if $\tilde s^i_q=s^i_l$ for some $l\in K_i$,}\\
				\max\{\sigma^{*i}(\tilde s^i_q), \; \tilde\varepsilon_s^k \sigma^{ki}(\tilde s^i_{q+1})\}&\text{otherwise.}
			\end{cases}
			\]
			Subsequent adjustment of $\sigma^{ki}(s^i_l)$ for $l\in K_i$ according to (\ref{ms2rp}) completes the construction. When $k$ is sufficiently large, take any $1\leq q\leq |S^i|-1$ with $u^i(\tilde s^i_q,\sigma^{-i}(\gamma(\varepsilon^k)))<u^i(\tilde s^i_{q+1},\sigma^{-i}(\gamma(\varepsilon^k)))$.
			\begin{itemize}
				\item If $\tilde s^i_{q+1}=s^i_l$ for some $l\in K_i$, then $\sigma^{ki}(\tilde s^i_{q+1})\geq \gamma^i(\varepsilon^k;\varpi^i_l)-\gamma^i(\varepsilon^k;\varpi^i_l)\sum^{|S^i|-\kappa_i(\emptyset)}_{w=1}(\tilde\varepsilon_s^k)^w\geq \sqrt{\tilde\varepsilon_s^k}\gamma^i(\varepsilon^k;\varpi^i_l)\geq 1/\sqrt{\tilde\varepsilon_s^k}\sigma^{ki}(\tilde s^i_{q})$.
				\item If $\tilde s^i_{q}=s^i_l$ for some $l\in K_i$, then $\sigma^{ki}(\tilde s^i_{q})\leq \gamma^i(\varepsilon^k;\varpi^i_l)\leq (\tilde\varepsilon_s^k)^{|S^i|-\kappa_i(\emptyset)+1}\gamma^i(\varepsilon^k;\varpi^i_{l+1})\leq{\tilde\varepsilon_s^k}\sigma^{ki}(\tilde s^i_{q+1})\leq \sqrt{\tilde\varepsilon_s^k}\sigma^{ki}(\tilde s^i_{q+1})$.
				\item Otherwise, we have $\sigma^{ki}(\tilde s^i_q)=\tilde\varepsilon_s^k \sigma^{ki}(\tilde s^i_{q+1})\leq \sqrt{\tilde\varepsilon_s^k}\sigma^{ki}(\tilde s^i_{q+1})$.
			\end{itemize} 
			As a result, $\sigma^k$ is an $\sqrt{\tilde\varepsilon_s^k}$-normal-form proper equilibrium with $(\sigma^k,\gamma(\varepsilon^k))\in T$ and $\lim_{k\to\infty} \sigma^k = \sigma^*$.
			
			$\Leftarrow$ Conversely, assume that $\sigma^*$ is a normal-form proper equilibrium of $\Gamma$, and let $\gamma^*=\gamma(\sigma^*)$. Then, there exists a sequence of totally mixed strategies $\{\sigma(\varepsilon^k)\}_{k=1}^{\infty}$ such that $\lim_{k\to\infty}\varepsilon^k=0$ and $\lim_{k\to\infty}\sigma(\varepsilon^k)=\sigma^*$, where each $\sigma(\varepsilon^k)$ is an $\varepsilon^k$-normal form proper equilibrium. Consider a specific $\sigma(\varepsilon^k)$ with sufficiently large $k$. If $u^i(s^i, \sigma^{-i}(\varepsilon^k)) < u^i(\tilde s^{i},\sigma^{-i}(\varepsilon^k))$ holds for some $\tilde s^i\in S^i$, then $\sigma^i(\varepsilon^k;s^i)\leq\varepsilon^k\sigma^i(\varepsilon^k;\tilde s^i)$. We prove that $\gamma(\sigma(\varepsilon^k))$ is a $\sqrt{\varepsilon^k}$-sequence-form proper equilibrium for all $k$. 
			
			Consider any two sequences $\varpi^i,\tilde\varpi^i\in W^i$ with $g^i_{m}(\varpi^i,\gamma^{-i}(\sigma(\varepsilon^k)))< g^i_{m}(\tilde\varpi^i,\gamma^{-i}(\sigma(\varepsilon^k)))$. From the equation (\ref{maxpayoff}) and Lemma~\ref{nfprpure}, we can find $s^{*i}, \tilde s^{*i}\in S^i$ with $s^{*i}(\varpi^i)=1$ and $\tilde s^{*i}(\tilde\varpi^i)=1$ satisfying $u^i(s^{*i},\sigma^{-i}(\varepsilon^k))=g^i_{m}(\varpi^i,\gamma^{-i}(\sigma(\varepsilon^k)))$ and $u^i(\tilde s^{*i},\sigma^{-i}(\varepsilon^k))=g^i_{m}(\tilde\varpi^i,\gamma^{-i}(\sigma(\varepsilon^k)))$. Furthermore, we have
			\begin{align*}
				u^i(s^{*i},\sigma^{-i}(\varepsilon^k))=\max\limits_{s^i\in S^i,s^i(\varpi^i)=1}u^i(s^{i},\sigma^{-i}(\varepsilon^k)).
			\end{align*}
			Then, for any $s^{i}\in S^i$ with $s^{i}(\varpi^i)=1$, we have $\sigma^i(\varepsilon^k;s^{i})\leq\varepsilon^k\sigma^i(\varepsilon^k;\tilde s^{*i})$. Finally, it follows that
			
			\[\textstyle\gamma^{i}(\sigma(\varepsilon^k);\varpi^i)=\sum\limits_{s^i\in S^i}s^i(\varpi^i)\sigma^i(\varepsilon^k;s^i)\leq\sqrt{\varepsilon^k}\sigma^i(\varepsilon^k;\tilde s^{*i})\leq\sqrt{\varepsilon^k}\sum\limits_{s^i\in S^i}s^i(\tilde\varpi^i)\sigma^i(\varepsilon^k;s^i)=\sqrt{\varepsilon^k}\gamma^{i}(\sigma(\varepsilon^k);\tilde\varpi^i),\]
			which establishes the proof.
	\end{proof}

	As a byproduct, we introduce a sequence-form quasi-proper equilibrium, whose equivalence to the quasi-proper equilibrium in \cite{vandammeRelationPerfectEquilibria1984} is straightforward to verify.
	\begin{definition}\label{ch5:def:qproper}{\em 
			Let $\Gamma$ be an extensive form game. For any sufficiently small $\varepsilon > 0$, a totally realization plan profile $\gamma(\varepsilon)\in \Lambda$ is an $\varepsilon$-sequence-form quasi-proper equilibrium of $\Gamma$ if $\gamma^i(\varepsilon;\varpi^i_{I^j_i}a)\leq\varepsilon\gamma^i(\varepsilon;\varpi^i_{I^j_i}a')$ whenever $g^i_m(\varpi^i_{I^j_i}a,\gamma^{-i}(\varepsilon)) < g^i_m(\varpi^i_{I^j_i}a',\gamma^{-i}(\varepsilon))$ for all $i\in N,j\in M_i$, and $a,a'\in A(I^j_i)$. A realization plan profile $\gamma^*\in \Lambda$ is defined as a sequence-form quasi-proper equilibrium of game $\Gamma$ if $\gamma^*$ is a limit point of some sequence $\{\gamma(\varepsilon^k)\}_{k=1}^\infty$, where $\lim_{k\to\infty}\varepsilon^k=0$ and each $\gamma(\varepsilon^k)$ is an $\varepsilon^k$-sequence-form quasi-proper equilibrium of $\Gamma$.}
	\end{definition}
	\section{An Equivalent Definition of Sequence-Form Proper Equilibrium}\label{nfpr-sec-prm3}
	This subsection provides a equivalent definition of sequence-form proper equilibria through the introduction of perturbed games in the sequence form, where every Nash equilibrium of the perturbed game yields an $\varepsilon$-sequence-form proper equilibrium, suitable for computation. Let $\delta(\varepsilon) = (\delta^i(\varepsilon) : i \in N)$ be a vector with 
	$\delta^i(\varepsilon) = (\delta^i_l(\varepsilon) : l \in K_i)$ 
	and $0 < \delta^i_l(\varepsilon) \leq \varepsilon$. Furthermore, we define $r_{l}^{i}(\delta(\varepsilon))=\prod^{\kappa_i(\emptyset)-l}_{q=1}\delta^i_q(\varepsilon)$ for $l\in K_i$. For $i\in N$, we define $\mathcal{E}_0(W^i)=\{E^i\subset W^i|E^i\neq\emptyset\}$. 
	\begin{definition}{\em
		For $i\in N$, we define a collection of sequence sets $\mathcal{E}^i\subseteq \mathcal{E}_0(W^i)$ composed of all elements $E^i$ that satisfies the following conditions,
		\begin{enumerate}
			\item For any $j\in M_i$, $\{\varpi^i_{I^j_i}a:a\in A(I^j_i)\}\nsubseteq E^i$.
			\item If $\varpi^i,\tilde \varpi^i\in E^i$,  then $\tilde \varpi^i\nsubseteq \varpi^i$.
		\end{enumerate}}
	\end{definition}
	Let $e_i=|\mathcal{E}^i|$ and $e_0=\sum_{i\in N}e_i$. For $i\in N,E^i\in \mathcal{E}^i$, let $q^i(E^i) = \sum_{\varpi^i\in E^i}\kappa_i(\varpi^i)$. For sufficiently small $\varepsilon>0$, we acquire with $r_{q}^{i}(\delta(\varepsilon))$ and $\mathcal{E}^i$ a perturbed strategy space, $\Lambda(\delta(\varepsilon))=\prod_{i\in N} \Lambda^i(\delta(\varepsilon))$, where
	\begin{equation}\label{dflamda}
		\Lambda^i(\delta(\varepsilon))=\left\{\gamma^i\in \Lambda^i\middle|\sum_{\varpi^i\in E^i}\gamma^i(\varpi^i)\geq\sum_{q=1}^{q^i(E^i)}r_q^i(\delta(\varepsilon))\mathrm{~for~all~}E^i\in \mathcal{E}^i\right\}.
	\end{equation}
	We then construct a perturbed game in sequence form, denoted by $\Gamma_s(\delta(\varepsilon))$, where the optimal strategy for player $i$ with the strategies of other players fixed at $\hat\gamma^{-i}\in \Lambda^{-i}(\delta(\varepsilon))$ is determined by solving the linear optimization problem,
	\begin{equation}
		\label{nfpr-opt-sfg1}
		\begin{array}{rl}
			\max\limits_{\gamma^i} &\sum\limits_{j\in M_i}\sum\limits_{a\in A(I^j_i)}\gamma^i(\varpi^i_{I^j_i}a)g^i(\varpi^i_{I^j_i}a,\hat\gamma^{-i})\\
			\text{s.t.}&\sum\limits_{a\in A(I^j_i)}\gamma^i(\varpi^i_{I^j_i}a)-\gamma^i(\varpi^i_{I^j_i})=0,\;j\in M_i,\\ 
			&\sum\limits_{\varpi^i\in E^i}\gamma^i(\varpi^i)\geq\sum\limits_{q=1}^{q^i(E^i)}r_q^i(\delta(\varepsilon)),\;E^i\in \mathcal{E}^i.
		\end{array}
	\end{equation}
	We omit the payoff associated with the empty sequence in the objective function, as it does not depend on $\gamma^i$.
	
	In accordance with the Nash equilibrium principle, we define $\gamma^*$ as a Nash equilibrium of $\Gamma_s(\delta(\varepsilon))$ precisely when $\gamma^{*i}$ individually solves the optimization problem~(\ref{nfpr-opt-sfg1}) against $\gamma^{*-i}$ for every player $i\in N$. By applying the optimality conditions to the problem~(\ref{nfpr-opt-sfg1}) for all players and setting $\hat\gamma=\gamma$, we derive the polynomial equilibrium system,
	\begin{equation}\label{nfpr-eqt-sfg1}\begin{array}{l}
			g^i(\varpi^i_{I^j_i}a,\gamma^{-i})+\sum\limits_{E^i\in \mathcal{E}^i(\varpi^i_{I^j_i}a)}\lambda^i(E^i)-\nu^i_{I^j_i}+\zeta^i_{I^j_i}(a)=0,\;i\in N,j\in M^i,a\in A(I^j_i),\\
			
			\sum\limits_{a\in A(I^j_i)}\gamma^i(\varpi^i_{I^j_i}a)-\gamma^i(\varpi^i_{I^j_i})=0,\;i\in N,j\in M_i,\\
			
			(\sum\limits_{\varpi^i\in E^i}\gamma^i(\varpi^i)-\sum\limits_{q=1}^{q^i(E^i)}r_q^i(\delta(\varepsilon)))\lambda^i(E^i)=0,\\
			\hspace{2.9cm}\sum\limits_{\varpi^i\in E^i}\gamma^i(\varpi^i)\geq\sum\limits_{q=1}^{q^i(E^i)}r_q^i(\delta(\varepsilon)),\;\lambda^i(E^i)\geq0,\;i\in N,E^i\in \mathcal{E}^i.
		\end{array}
	\end{equation}
	where $\zeta^i_{I^j_i}(a)=\sum_{j_q\in M_i(\varpi^i_{I^j_i}a)}\nu^i_{I^{j_q}_i}$ and $\mathcal{E}^i(\varpi^i_{I^j_i}a)=\{E^i\in \mathcal{E}^i|\varpi^i_{I^j_i}a\in E^i\}$.	Consequently, $\gamma^*$ is a Nash equilibrium of $\Gamma_s(\delta(\varepsilon))$ if and only if there exists a bounded pair $(\lambda^*,\nu^*)$ alongside $\gamma^*$ that satisfies the system~(\ref{nfpr-eqt-sfg1}).
	\begin{definition}\label{nfpr-def-sfc2}{\em
		For any sufficiently small $\varepsilon > 0$, a positive realization plan profile $\gamma^*(\varepsilon)\in \Lambda$ is called an $\varepsilon$-sequence-form proper equilibrium of $\Gamma$ if there exists a vector $\delta(\varepsilon)$ such that $\gamma^*(\varepsilon)$ is a Nash equilibrium of $\Gamma_s(\delta(\varepsilon))$. A realization plan profile $\gamma^*\in \Lambda$ is defined as a sequence-form proper equilibrium of game $\Gamma$ when it arises as a limit of a sequence $\{\gamma(\varepsilon^k)\}_{k=1}^\infty$ with $\lim_{k\to\infty}\varepsilon^k= 0$ and each $\gamma^*(\varepsilon^k)$ being an $\varepsilon^k$-sequence-form proper equilibrium of $\Gamma$.}
	\end{definition}
	\begin{restatable}{theorem}{maintheorem}\label{nfpr-the-sfc1} {\em Definition~\ref{nfpr-def-sfc1} and Definition~\ref{nfpr-def-sfc2} of sequence-form proper equilibrium are equivalent.}
	\end{restatable}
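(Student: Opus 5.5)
We prove the equivalence at the level of limit points, not $\varepsilon$ by $\varepsilon$. We show that every Nash equilibrium of a perturbed game $\Gamma_s(\delta(\varepsilon))$ is an $\eta(\varepsilon)$-sequence-form proper equilibrium in the sense of Definition~\ref{nfpr-def-sfc1}, with $\eta(\varepsilon)\to 0$. We also show the converse: every $\varepsilon$-sequence-form proper equilibrium in the sense of Definition~\ref{nfpr-def-sfc1} lies close to a Nash equilibrium of some $\Gamma_s(\delta(\varepsilon'))$, and their distance vanishes as $\varepsilon\to 0$. Sets of limit points are then preserved in both directions. Theorem~\ref{nfpr-lem-sfc3} lets us move between realization plans and mixed strategies when convenient, but the argument runs entirely in sequence form. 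Throughout we use the ordering $\pi^i(\gamma^i)$ of free sequences and the identity $g^i_m(\varpi^i_l,\gamma^{-i})=u^i(s^i_l,\sigma^{-i})$ from the proof of Theorem~\ref{nfpr-lem-sfc3}.

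\textbf{Step 1 (Definition~\ref{nfpr-def-sfc2} $\Rightarrow$ Definition~\ref{nfpr-def-sfc1}).} Let $\gamma^*$ solve the system~(\ref{nfpr-eqt-sfg1}). The constraint set $\Lambda^i(\delta(\varepsilon))$ is an $\varepsilon$-permutahedron-like polytope over the antichains $E^i\in\mathcal{E}^i$. Its lower bounds $\sum_{q\le q^i(E^i)}r^i_q$ are nested, and their leading term is $r^i_1=\prod_{q<\kappa_i(\emptyset)}\delta^i_q$. Consider two sequences with $g^i_m(\varpi^i,\gamma^{*-i})<g^i_m(\tilde\varpi^i,\gamma^{*-i})$. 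We use an exchange argument in the spirit of Myerson's permutahedron proof. If $\gamma^{*i}(\varpi^i)>\sqrt{\varepsilon}\,\gamma^{*i}(\tilde\varpi^i)$, we shift mass from the subtree of $\varpi^i$ toward a best completion of $\tilde\varpi^i$, which is a pure strategy realizing $g^i_m(\tilde\varpi^i,\cdot)$ by Lemma~\ref{nfprpure}. The key claim is that this shift stays feasible. The only constraints that can block it are the tight constraints $E^i$ containing $\varpi^i$ or its descendants. Because $E^i$ is an antichain containing no full action set of any information set, the tight family corresponds to a chain of "bottom-$q$" sets in the ordering $\pi^i(\gamma^{*i})$. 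Tightness therefore forces $\gamma^{*i}(\varpi^i)$ to be of order $r^i_q$ while $\gamma^{*i}(\tilde\varpi^i)\ge r^i_{q+1}$, so $\gamma^{*i}(\varpi^i)\le \delta^i_q\cdot(\text{const})\,\gamma^{*i}(\tilde\varpi^i)\le C\varepsilon\gamma^{*i}(\tilde\varpi^i)$. If no blocking constraint is tight, the shift strictly raises the objective of~(\ref{nfpr-opt-sfg1}), which contradicts optimality. Either way $\gamma^*$ is a $C\varepsilon$-sequence-form proper equilibrium, and rescaling $\varepsilon$ finishes this direction.

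\textbf{Step 2 (Definition~\ref{nfpr-def-sfc1} $\Rightarrow$ Definition~\ref{nfpr-def-sfc2}).} Take $\gamma^*$ as a limit of $\varepsilon^k$-sequence-form proper equilibria $\gamma(\varepsilon^k)$. Fix the ordering $\pi^i(\gamma^i(\varepsilon^k))$, passing to a subsequence so that the orderings and the payoff orderings of the $g^i_m$ values are constant. Then choose $\delta^i_l(\varepsilon^k)$ as ratios of consecutive free realization probabilities, $\delta^i_l=\gamma^i(\varpi^i_{l'})/\gamma^i(\varpi^i_{l'+1})$ for the appropriate index, capped at $\varepsilon^k$ where payoffs tie. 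With this choice, $\gamma(\varepsilon^k)$ meets every constraint in~(\ref{dflamda}), with equality exactly on the "lowest-payoff" antichains. We then verify the KKT system~(\ref{nfpr-eqt-sfg1}). The multipliers $\lambda^i(E^i)$ are built telescopically from payoff gaps $g^i_m(\varpi^i_{l+1})-g^i_m(\varpi^i_l)\ge 0$, assigned to the tight nested sets, and $\nu^i$ is obtained from the flow equations by backward induction over information sets. If exact verification fails because $\gamma(\varepsilon^k)$ is only approximately ordered, we fall back on existence instead. Consider the perturbed game with these $\delta$. It has an equilibrium by Kakutani's theorem, since the strategy sets are nonempty convex compact and the payoffs are multilinear. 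Step 1 makes that equilibrium $\sqrt{\varepsilon}$-proper with the same strict ordering, and continuity of $g^i_m$ then puts its limit at $\gamma^*$.

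\textbf{Main obstacle.} The hard part is the combinatorial core of Step 1. We must show that the tight constraints of $\Lambda^i(\delta(\varepsilon))$ correspond to nested antichains compatible with the $g^i_m$-ordering, and that the lower bound $\sum_{q\le q^i(E^i)}r_q^i$, weighted by $\kappa_i$, produces exactly one factor $\delta$ per payoff level. For this we will first prove a lemma: if $E^i\subsetneq F^i$ are both tight, then every sequence in $F^i\setminus E^i$ has realization at least $r^i_{q^i(E^i)+1}$ up to a constant. We prove it by induction on $q^i$, using $\kappa_i(\varpi^i_{I^j_i})=\sum_a(\kappa_i(\varpi^i_{I^j_i}a)-1)+1$ to handle the parent–child substitution at each information set.
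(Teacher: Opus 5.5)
Your Step 1 is essentially the paper's argument for Definition~\ref{nfpr-def-sfc2} $\Rightarrow$ Definition~\ref{nfpr-def-sfc1}. By Lemma~\ref{snpc}, feasibility is governed only by the reduced-prefix constraints $\sum_{\varpi^i\in\pi^i_l(\gamma^i)}\gamma^i(\varpi^i)\ge\sum_{q\le l}r^i_q$. By Lemma~\ref{nfper-lem-sfc4}, an equilibrium of $\Gamma_s(\delta(\varepsilon))$ makes these tight at every strict $g^i_m$-gap between $\varpi^i_l$ and $\varpi^i_{l+1}$. Hence $\gamma^{*i}(\varpi^i_l)\le\sum_{q\le l}r^i_q\le 2\varepsilon\, r^i_{l+1}\le 2\varepsilon\,\gamma^{*i}(\varpi^i_{l+1})$.

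\textbf{The gap in Step 2.} Taking $\delta^i_l$ to be ratios of consecutive free realization probabilities does not make $\gamma(\varepsilon^k)$ tight at the strict gaps, and in general not even feasible. With no ties the ratios telescope to $r^i_l=\gamma^i(\varpi^i_l)$. But the constraint sums run over antichains. When $\varpi^i_l$ is a prefix of earlier free sequences, those sequences are dropped from $\pi^i_l$. So the constraint sum grows by $\gamma^i(\varpi^i_l)$ minus their realizations, i.e.\ by the weight $\tilde\sigma^i(\gamma^i;s^i_l)$, not by $\gamma^i(\varpi^i_l)$.

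For a concrete failure, take Example~\ref{exm1} near the proper equilibrium in which player 1 plays $E$ and player 2 plays $r$. There $\pi^1=\langle\varpi^1_{I^3_1}L_2,\varpi^1_{I^2_1}L_1,\varpi^1_{I^1_1}C,\emptyset\rangle$ with all three gaps strict. The constraint for $E^1=\{\varpi^1_{I^1_1}C\}$, which has $q^1(E^1)=\kappa_1(\varpi^1_{I^1_1}C)=3$, becomes $\gamma^1(\varpi^1_{I^1_1}C)\ge\gamma^1(\varpi^1_{I^3_1}L_2)+\gamma^1(\varpi^1_{I^2_1}L_1)+\gamma^1(\varpi^1_{I^1_1}C)$, which is false. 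Capping at ties changes the $r^i_q$ but does nothing to enforce exact equality of the cumulative sums at strict gaps. That equality is necessary: otherwise moving mass from $s^i_l$ to $s^i_{l+1}$ is feasible and profitable. So your KKT verification cannot succeed with this $\delta$.

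\textbf{The fallback does not repair this.} A Kakutani equilibrium of $\Gamma_s(\delta)$ has its own payoff ordering, determined by the opponents' strategies in that equilibrium rather than by $\gamma^{-i}(\varepsilon^k)$. So nothing forces it to share the ordering of $\gamma(\varepsilon^k)$. Perturbed games typically have several equilibria; Example~\ref{exm1} already has three proper ones. Step 1 only tells you the Kakutani point is nearly proper, not that its limit is $\gamma^*$.

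\textbf{What is missing.} You need to prescribe the bounds $r^i_l$ directly rather than $\delta$, as the paper does:
\begin{itemize}
\item Start from a tiny $r^i_0=\varepsilon\min_{\varpi^i}\gamma^{*i}(\varepsilon;\varpi^i)$.
\item At each strict gap, set $r^i_l=\sum_{\varpi^i\in\pi^i_l}\gamma^{*i}(\varepsilon;\varpi^i)-\sum_{q<l}r^i_q$, so the bottom-$l$ constraint holds with equality.
\item At ties, set $r^i_l=r^i_{l-1}/\tilde\varepsilon$.
\item Take $\tilde\varepsilon=\varepsilon^{1/w_0}$ with $w_0=\max_{i\in N}|W^i|$, so the geometric growth across a run of ties is absorbed by the factor-$\varepsilon$ gaps.
\end{itemize}
Then $\delta^i_l=r^i_{\kappa_i(\emptyset)-l}/r^i_{\kappa_i(\emptyset)-l+1}\le\tilde\varepsilon$. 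Lemma~\ref{nfper-lem-sfc4} (removed maximal sequences are local best responses, and tightness holds at strict gaps) then shows that $\gamma^*(\varepsilon)$ is itself an exact equilibrium of $\Gamma_s(\delta(\tilde\varepsilon))$. This gives an $\varepsilon$-by-$\varepsilon$ correspondence up to reparametrization, so no approximation or limit argument is needed.
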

	The proof of Theorem~\ref{nfpr-the-sfc1} is provided in Appendix~\ref{proofthm3}.
	\section{Sequence-Form Differentiable Path-Following Methods for Computing a Normal-Form Proper Equilibrium}\label{nfpr-sec-prm4}
	Drawing on the established characterization of normal-form proper equilibrium in the sequence form, this section introduces a differentiable path-following method for computing normal-form proper equilibria, accompanied by a rigorous theoretical analysis.
	\subsection{A Logarithmic-Barrier Smooth Path}\label{nfpr-sct-log1}
	Furthermore, let $\gamma^0=(\gamma^{0i}(\varpi^i):i\in N,\varpi^i\in W^i)$ denote a given positive realization plan profile, which serves as a starting point for the smooth path discussed later. Let $y^0=(y^{0i}(E^i):i\in N,E^i\in \mathcal{E}^i)$ with $y^{0i}(E^i)=\sum_{\varpi^i\in E^i}\gamma^{0i}(\varpi^i)$. Let $c(t) =\exp(1-1/t)$. Let $\omega_0$ be a given number with $\omega_0>1$. For $t\in [0, 1]$ and $i\in N$, let
	$r_{q}^{i}(t)=\left(\frac{t}{\omega_{0}}\right)^{\kappa_i(\emptyset)-q},q\in K_i$.
 
    For $t\in(0,1]$, we constitute a logarithmic-barrier artificial game $\Gamma_{s}^l(t)$ in sequence form where each player $i$ determines an optimal response to a prescribed strategy $\hat\gamma\in\Lambda$ by solving the strictly convex optimization problem,
    \begin{equation}
    	\label{nfpr-opt-log1}
    	\begin{array}{rl}
    		\max\limits_{\gamma^i} & (1-c(t))\sum\limits_{j\in M_i}\sum\limits_{a\in A(I^j_i)}\gamma^i(\varpi^i_{I^j_i}a)g^i(\varpi^i_{I^j_i}a,\hat\gamma^{-i})\\
    		&+c(t)\sum\limits_{E^i\in \mathcal{E}^i}(y^{0i}(E^i)\ln(\sum\limits_{\varpi^i\in E^i}\gamma^i(\varpi^i)-\delta_0\sum\limits_{q=1}^{q^i(E^i)}r_q^i(t)(1-t))-\sum\limits_{\varpi^i\in E^i}\gamma^i(\varpi^i))\\
    		\text{s.t.} & \sum\limits_{a\in A(I^j_i)}\gamma^i(\varpi^i_{I^j_i}a)-\gamma^i(\varpi^i_{I^j_i})=0,\;j\in M_i.
    	\end{array}
    \end{equation}
	Let $\mathcal{E}^i(\varpi^i_{I^j_i}a)=\{E^i\in \mathcal{E}^i|\varpi^i_{I^j_i}a\in E^i\}$. Through the application of the optimality conditions to the problem~(\ref{nfpr-opt-log1}) and the assumption $\hat{\gamma}(t) = \gamma(t)$, we derive the polynomial equilibrium system of $\Gamma_{s}^l(t)$,
	\begin{equation}\label{nfpr-eqt-log1}\begin{array}{l}
			(1-c(t))g^i(\varpi^i_{I^j_i}a,\gamma^{-i})+\sum\limits_{E^i\in \mathcal{E}^i(\varpi^i_{I^j_i}a)}\lambda^i(E^i)-c(t)|\mathcal{E}^i(\varpi^i_{I^j_i}a)|\\
			\hspace{3.7cm}-\nu^i_{I^j_i}+\zeta^i_{I^j_i}(a)=0,\;i\in N,j\in M^i,a\in A(I^j_i),\\
			\sum\limits_{a\in A(I^j_i)}\gamma^i(\varpi^i_{I^j_i}a)-\gamma^i(\varpi^i_{I^j_i})=0,\;i\in N,j\in M_i,\\
			(\sum\limits_{\varpi^i\in E^i}\gamma^i(\varpi^i)-\delta_0\sum\limits_{q=1}^{q^i(E^i)}r_q^i(t)(1-t))\lambda^i(E^i)=c(t)y^{0i}(E^i),\; \sum\limits_{\varpi^i\in E^i}\gamma^i(\varpi^i)>0,\; i\in N,E^i\in \mathcal{E}^i,
		\end{array}
	\end{equation}
	where $\zeta^i_{I^j_i}(a)=\sum_{j_q\in M_i(\varpi^i_{I^j_i}a)}\nu^i_{I^{j_q}_i}$. It can be observed that $\gamma^*$ solves the optimization problem~(\ref{nfpr-opt-log1}) against itself if and only if there exists a unique pair $(\lambda^*, \nu^*)$ such that, along with $\gamma^*$, they fulfill the system~(\ref{nfpr-eqt-log1}). For values of $t\in(0,1]$, the condition $\gamma^* \in \text{int}(\Lambda)$ ensures that $\sigma(\gamma^*)$ is well-defined and $(\sigma(\gamma^*),\gamma^*)\in T$.
	
	We demonstrate the existence of a smooth path along which the points satisfy the system~(\ref{nfpr-eqt-log1}) in several steps. This path starts from a completely mixed realization plan and ultimately converges to a normal-form proper equilibrium. A rigorous proof follows the approach in Reference [A].
	\begin{itemize}
		\item \textbf{Uniqueness of the Starting Point.} At $t=1$, the system~(\ref{nfpr-eqt-log1}) has a unique solution, denoted by $(\gamma^*(1),\lambda^*(1),\nu^*(1))$, whose components satisfy $\gamma^{*i}(1;\varpi^i_{I^j_i}a)=\gamma^{0i}(\varpi^i_{I^j_i}a)$, $\lambda^{*i}(1;\varpi^{i}_{I^j_i}a)=1$, and $\nu^{*i}_{I^j_i}(1)=0$ for $i\in N,j\in M_i,a\in A(I^j_i)$. The existence of this solution ensures the presence of a solution component originating at $t=0$, while uniqueness guarantees a well-defined initial position for this component and precludes any possibility of its retracing to the initial level elsewhere.
		
		\item \textbf{Convergence Analysis}. The strategy space of $\Gamma_{s}^l(t)$, corresponding to the realization plan defined by the inequalities in the system~(\ref{nfpr-eqt-log1}), can be derived from the perturbed strategy space specified in (\ref{dflamda}) by setting $\delta^i_l(\varepsilon) = t/\omega_0$ for $i \in N$ and $l \in K_i$. The scaling term $\delta_0(1-t)$ preserves the perturbation structure and ensures that the feasible strategy set remains non-empty when $t$ is not sufficiently small. By applying a theorem from Luo and Luo~\cite{luoExtensionHoffmansError1994}, we obtain the following result. Let $\{\gamma^*(t_k)\}_{k=1}^{\infty}$ be a sequence of Nash equilibria defined by the system~(\ref{nfpr-eqt-log1}) with $t=t_k\in(0,1]$ and $\lim_{k\to \infty}t_k=0$. Then every limit point of the totally mixed strategy sequence $\{\sigma(\gamma^*(t_k))\}_{k=1}^{\infty}$ yields a normal-form proper equilibrium.
		
		\item \textbf{Feasible Connected Component}. Let $\widetilde{\mathscr{C}}_L=\{(\gamma,\lambda,\nu,t)\mid(\gamma,\lambda,\nu,t) \text{ satisfies the system~(\ref{nfpr-eqt-log1})}$ $\text{with } 0<t\leq 1\}$ and $\mathscr{C}_L$ be the closure of $\widetilde{\mathscr{C}}_L$. By invoking Browder's fixed-point theorem~\cite{Browdercontinuityfixedpoints1960} for a suitably constructed continuous best-response function, it follows that there is a connected component in $\mathscr{C}_L$ intersecting both $\mathbb{R}^{n_0}\times\mathbb{R}^{e_0}\times\mathbb{R}^{m_0}\times\{1\}$ and $\mathbb{R}^{n_0}\times\mathbb{R}^{e_0}\times\mathbb{R}^{m_0}\times\{0\}$. Consequently, a continuous solution path exists within the connected component that extends from $t=1$ to $t=0$, without premature termination or divergence.
		
		\item \textbf{Existence of Smooth Path}. Let $\alpha=(\alpha(\varpi^i_{I^j_i}a):i\in N, j\in M_i, a\in A(I^j_i))\in\mathbb{R}^{n_0}$ be an arbitrary vector with $\|\alpha\|$ sufficiently small. By subtracting $c(t)(1-t)\alpha$ from the left-hand side of the first group of equations, we obtain an approximately equivalent system whose original equilibrium results remain unchanged. Denote by $p(\gamma,\lambda,\nu,t;\alpha)$ the collection of left-hand sides of the equations in the modified system, and by $p_\alpha(\gamma,\lambda,\nu,t)$ the corresponding system for a fixed $\alpha$. It can be shown that the Jacobian matrices of $p(\gamma,\lambda,\nu,t;\alpha)$ and $p_\alpha(\gamma,\lambda,\nu,1)$ are of full row rank. Consequently, by applying the transversality theorem~\cite{EavesGeneralequilibriummodels1999} in conjunction with the implicit function theorem, we establish that, for almost every $\alpha\in\mathbb{R}^{n_0}$ with sufficiently small $\|\alpha\|$, the connected component previously identified gives rise to a smooth path in $\mathscr{C}_L$ connecting $(\gamma^*(1),\lambda^*(1),\nu^*(1),1)$ at $t=1$ to a normal-form proper equilibrium as $t\to 0$.
	\end{itemize}
	To eliminate the numerical influence of the inequality constraints in (\ref{nfpr-eqt-log1}), we employ a variable substitution as outlined in Cao et al.~\cite{Caodifferentiablepathfollowingmethod2022}. Let $\tau_0>0$ and $\kappa_0>1$, and introduce the functions
	\begin{small} \[\psi_1(v,r;\tau_0,\kappa_0)=\left(\frac{v+\sqrt{v^2+4\tau_0r}}{2}\right)^{\kappa_0} \text{ and } \psi_2(v,r;\tau_0,\kappa_0)=\left(\frac{-v+\sqrt{v^2+4\tau_0r}}{2}\right)^{\kappa_0},\]\end{small}satisfying $\psi_1(v,r;\tau_0,\kappa_0)\psi_2(v,r;\tau_0,\kappa_0)=(\tau_0r)^{\kappa_0}$. The condition $\kappa_0>1$ensures continuous differentiability of both functions over $\mathbb{R}\times(0,\infty)$. For $x=(x^i(E^i):i\in N,E^i\in \mathcal{E}^i)\in \mathbb{R}^{e_0}$, define the mappings $y(x,t)=(y^i(x,t;E^i):i\in N,E^i\in \mathcal{E}^i)$ and $\lambda(x,t)=(\lambda^i(x,t;E^i):i\in N,E^i\in \mathcal{E}^i)$ by
	\begin{equation}\label{nfpr-eqt-log3}\begin{array}{l}
			y^i(x,t;E^i)=\psi_1(x^i(E^i),c(t)^{1/\kappa_0}; y^{0i}(E^i)^{1/\kappa_0}, \kappa_0),\\
			\lambda^i(x,t;E^i)=\psi_2(x^i(E^i),c(t)^{1/\kappa_0};y^{0i}(E^i)^{1/\kappa_0},\kappa_0),\;i\in N,E^i\in \mathcal{E}^i, 
	\end{array}\end{equation}
	so that $y^i(x,t;E^i)\lambda^i(x,t;E^i)=c(t)y^{0i}(E^i)$ holds for $i\in N,E^i\in \mathcal{E}^i$. By defining $y^i(E^i)=\sum_{\varpi^i\in E^i}\gamma^i(\varpi^i)-\delta_0\sum_{q=1}^{q^i(E^i)}r_q^i(t)(1-t)$ and substituting $y^i(E^i)$ and $\lambda^i_{E^i}$ with $y^i(x,t;E^i)$ and $ \lambda^i(x,t;E^i)$ in the system~(\ref{nfpr-eqt-log1}), we obtain an equivalent formulation free of inequality constraints,
	\begin{equation}\label{nfpr-eqt-log4}\begin{array}{l}
			(1-c(t))g^i(\varpi^i_{I^j_i}a,\gamma^{-i})+\sum\limits_{E^i\in \mathcal{E}^i(\varpi^i_{I^j_i}a)}\lambda^i(x,t;E^i)-c(t)|\mathcal{E}^i(\varpi^i_{I^j_i}a)|\\
			\hspace{2.7cm}-\nu^i_{I^j_i}+\zeta^i_{I^j_i}(a)-c(t)(1-t)\alpha(\varpi^i_{I^j_i}a)=0,\; i\in N,j\in M^i,a\in A(I^j_i),\\
			\sum\limits_{\varpi^i\in E^i}\gamma^i(\varpi^i)-\delta_0\sum\limits_{q=1}^{q^i(E^i)}r_q^i(t)(1-t)-y^i(x,t;E^i)=0,\; i\in N,E^i\in \mathcal{E}^i,\\
			\sum\limits_{a\in A(I^j_i)}\gamma^i(\varpi^i_{I^j_i}a)-\gamma^i(\varpi^i_{I^j_i})=0,\; i\in N,j\in M_i.
		\end{array}
	\end{equation}
	At $t=1$, the system~(\ref{nfpr-eqt-log4}) has a unique solution $(\gamma^*(1),x^*(1),\nu^*(1),1)$ with $\gamma^{*i}(1;\varpi^i_{I^j_i}a)=\gamma^{0i}(\varpi^i_{I^j_i}a),x^{*i}(1;E^i)=y^{0i}(E^i)^{1/\kappa_0}-1$, and $\nu^{*i}_{I^j_i}(1)=0$ for $i\in N,j\in M_i,a\in A(I^j_i),E^i\in \mathcal{E}^i)$.
	
	Define $\widetilde{\mathscr{P}}_L=\{(\gamma,x,\nu,t)|(\gamma,x,\nu,t)\text{ satisfies the system~(\ref{nfpr-eqt-log4}) with } 0<t\leq 1\}$ and let $\mathscr{P}_L$ be its closure. The preceding discussion reveals that $\mathscr{P}_{L}$ contains a smooth path that originates from the point $(\gamma^*(1), x^*(1),\nu^*(1),1)$ and converges to a normal-form proper equilibrium as the parameter $t$ approaches $0$.
	
	\subsection{An Entropy-Barrier Smooth Path}
	As a comparative computational method, we present an alternative approach that incorporates entropy-barrier terms into the players' expected payoffs, thereby defining an artificial game denoted by $\Gamma_{s}^e(t)$. In the artificial game $\Gamma_{s}^e(t)$, each player $i$ determines an optimal response to a given strategy profile $\hat\gamma\in\Lambda$ by solving the convex optimization problem,
	\begin{equation}
		\label{nfpr-opt-ent1}
		\begin{array}{rl}
			\max\limits_{\gamma^i} & (1-c(t))\sum\limits_{j\in M_i}\sum\limits_{a\in A(I^j_i)}\gamma^i(\varpi^i_{I^j_i}a)g^i(\varpi^i_{I^j_i}a,\hat\gamma^{-i})\\
			&-c(t)\sum\limits_{E^i\in \mathcal{E}^i}y^i(E^i)(\ln y^i(E^i)-\ln y^{0i}(E^i)-1)\\
			\text{s.t.} & \sum\limits_{\varpi^i\in E^i}\gamma^i(\varpi^i)-\delta_0\sum\limits_{q=1}^{q^i(E^i)}r_q^i(t)(1-t)-y^i(E^i)=0,\; E^i\in \mathcal{E}^i,\\
			& \sum\limits_{a\in A(I^j_i)}\gamma^i(\varpi^i_{I^j_i}a)-\gamma^i(\varpi^i_{I^j_i})=0,\;j\in M_i.
		\end{array}
	\end{equation}
	Through the application of the optimality conditions to the problem~(\ref{nfpr-opt-ent1}), and under the identification $\hat{\gamma} = \gamma$, the equilibrium system associated with $\Gamma_{s}^e(t)$ is derived,
	\begin{equation}\label{nfpr-eqt-ent1}\begin{array}{l}
			(1-c(t))g^i(\varpi^i_{I^j_i}a,\gamma^{-i})-c(t)\sum\limits_{E^i\in \mathcal{E}^i(\varpi^i_{I^j_i}a)}(\ln y^i(E^i)-\ln y^{0i}(E^i))\\
			\hspace{3.7cm}-\nu^i_{I^j_i}+\zeta^i_{I^j_i}(a)=0,\;i\in N,j\in M^i,a\in A(I^j_i),\\
			\sum\limits_{\varpi^i\in E^i}\gamma^i(\varpi^i)-\delta_0\sum\limits_{q=1}^{q^i(E^i)}r_q^i(t)(1-t)-y^i(E^i)=0,\; i\in N, E^i\in \mathcal{E}^i,\\
			\sum\limits_{a\in A(I^j_i)}\gamma^i(\varpi^i_{I^j_i}a)-\gamma^i(\varpi^i_{I^j_i})=0,\;i\in N,j\in M_i.\\
		\end{array}
	\end{equation}
	where $\zeta^i_{I^j_i}(a)=\sum_{j_q\in M_i(\varpi^i_{I^j_i}a)}\nu^i_{I^{j_q}_i}$. When $t=1$, the system (\ref{nfpr-eqt-ent1}) has a unique solution given by $(\gamma^*(1),y^*(1),\nu^*(1))$, where $\gamma^{*i}(1;\varpi^i_{I^j_i}a)=\gamma^{0i}(\varpi^i_{I^j_i}a),y^{*i}(1;E^i)=y^{0i}(E^i)$, and $\nu^{*i}_{I^j_i}(1)=0$ for $i\in N,j\in M_i,a\in A(I^j_i), E^i\in \mathcal{E}^i$. Following the same analytical procedure as last subsection, we subtracting $c(t)(1-t)\alpha$ from the first group of equations, obtaining an approximately equivalent system. Define $\widetilde{\mathscr{C}}_E$ as the set of tuples $(\gamma,y,\nu,t)$ that satisfy the modified system for$0<t\leq 1$ and let $\mathscr{C}_E$ represent its closure. Then, for almost every $\alpha\in\mathbb{R}^{n_0}$ with sufficiently small $\|\alpha\|$, there exists a smooth path in $\mathscr{C}_E$ starting from $(\gamma^*(1),y^*(1),\nu^*(1),1)$ at $t=1$ and approaching a normal-form proper equilibrium as $t\to 0$.
	
	Despite the theoretical soundness of the analysis, logarithmic terms in the system (\ref{nfpr-eqt-ent1}) can cause numerical issues. Specifically, as the arguments of the logarithmic functions approach zero, their derivatives grow sharply, potentially leading to computational instability or divergence. Let $\rho_0=\max_{i\in N,E^i\in \mathcal{E}^i}|E^i|$ and $b=(b^i(E^i):i\in N,E^i\in \mathcal{E}^i)\in \mathbb{R}^{e_0}$. For $i\in N,E^i\in \mathcal{E}^i$, we define $y^i(b;E^i) = \rho_0\phi(b^i(E^i))$, where the function $\phi:\mathbb{R}\to\mathbb{R}$ is given by
	\[
	\phi(v) =
	\begin{cases}
		e^{1 - \frac{1}{v}}, & \text{if } v > 0, \\
		0, & \text{if } v \leq 0,
	\end{cases}
	\text{ with derivative }
	\frac{d}{dv} \phi(v) =
	\begin{cases}
		\frac{e^{1 - \frac{1}{v}}}{v^2}, & \text{if } v > 0, \\
		0, & \text{if } v \leq 0.
	\end{cases}
	\]
	This construction makes $y^i(b;E^i)$ smooth for positive values of $b^i(E^i)$ and zero otherwise. Replacing $y^i(E^i)$ by $y^i(b;E^i)$ and setting $\lambda^i(E^i)=c(t)/b^i(E^i)$ for $i\in n,E^i\in \mathcal{E}^i$, we obtain an equivalent formulation of the system~(\ref{nfpr-eqt-ent1}),
	\begin{equation}\label{nfpr-eqt-ent3}\begin{array}{l}
			(1-c(t))g^i(\varpi^i_{I^j_i}a,\gamma^{-i})+\sum\limits_{E^i\in \mathcal{E}^i(\varpi^i_{I^j_i}a)}(\lambda^i(E^i)+c(t)\ln y^{0i}(E^i))-c(t)(\ln\rho_0+1)|\mathcal{E}^i(\varpi^i_{I^j_i}a)|\\
			\hspace{2cm}-\nu^i_{I^j_i}+\zeta^i_{I^j_i}(a)-c(t)(1-t)\alpha(\varpi^i_{I^j_i}a)=0,\;i\in N,j\in M_i,a\in A(I^j_i),\\
			\sum\limits_{\varpi^i\in E^i}\gamma^i(\varpi^i)-\delta_0\sum\limits_{q=1}^{q^i(E^i)}r_q^i(t)(1-t)-y^i(b;E^i)=0,\; i\in N, E^i\in \mathcal{E}^i,\\
			\sum\limits_{a\in A(I^j_i)}\gamma^i(\varpi^i_{I^j_i}a)-\gamma^i(\varpi^i_{I^j_i})=0,\;i\in N,j\in M_i,\\
			\lambda^i(E^i)b^i(E^i)=c(t),\;\lambda^i(E^i)>0,\;i\in N,E^i\in \mathcal{E}^i.\\
		\end{array}
	\end{equation}
	We then perform a further variable substitution to eliminate the inequality constraints. For $x=(x^i(E^i):i\in N,E^i\in\mathcal{E}^i)\in \mathbb{R}^{e_0}$, we introduce $b(x,t)=(b^i(x,t;E^i):i\in N,E^i\in\mathcal{E}^i)$ and $\lambda(x,t)=(\lambda^i(x,t;E^i):i\in N,E^i\in\mathcal{E}^i)$,  where
	\begin{equation}\label{nfpr-eqt-log6a}\begin{array}{l}
			b^i(x,t;E^i)=\psi_1(x^i(E^i),c(t)^{1/\kappa_0}; 1, \kappa_0),\\
			\lambda^i(x,t;E^i)=\psi_2(x^i(E^i),c(t)^{1/\kappa_0};1,\kappa_0),\;i\in N,E^i\in \mathcal{E}^i.
	\end{array}\end{equation}
	It is immediate that $b^i(x,t;E^i)\lambda^i(x,t;E^i)=c(t)$ for $i\in N,E^i\in \mathcal{E}^i$. Substituting $b^i(E^i)$ and $\lambda^i(E^i)$ in the system (\ref{nfpr-eqt-ent3}) by $b^i(x,t;E^i)$ and $\lambda^i(x,t;E^i)$ for $i\in N,E^i\in \mathcal{E}^i$, we obtain an equivalent formulation that involves fewer variables and constraints,
	\begin{equation}\label{nfpr-eqt-ent4}\begin{array}{l}
			(1-c(t))g^i(\varpi^i_{I^j_i}a,\gamma^{-i})+\sum\limits_{E^i\in \mathcal{E}^i(\varpi^i_{I^j_i}a)}(\lambda^i(x,t;E^i)+c(t)\ln y^{0i}(E^i))-c(t)(\ln\rho_0+1)|\mathcal{E}^i(\varpi^i_{I^j_i}a)|\\
			\hspace{2cm}-\nu^i_{I^j_i}+\zeta^i_{I^j_i}(a)-c(t)(1-t)\alpha(\varpi^i_{I^j_i}a)=0,\;i\in N,j\in M^i,a\in A(I^j_i),\\
			\sum\limits_{\varpi^i\in E^i}\gamma^i(\varpi^i)-\delta_0\sum\limits_{q=1}^{q^i(E^i)}r_q^i(t)(1-t)-y^i(b(x,t);E^i)=0,\; i\in N, E^i\in \mathcal{E}^i,\\
			\sum\limits_{a\in A(I^j_i)}\gamma^i(\varpi^i_{I^j_i}a)-\gamma^i(\varpi^i_{I^j_i})=0,\;i\in N,j\in M_i.\\
		\end{array}
	\end{equation}
	When $t=1$, the system (\ref{nfpr-eqt-ent4}) admit a unique solution \( (\gamma^*(1),x^*(1),\nu^*(1)) \), where $\gamma^{*i}(1;\varpi^i_{I^j_i}a)=\gamma^{0i}(\varpi^i_{I^j_i}a),x^{*i}(1;E^i)=(\ln\rho_0+1-\ln y^{0i}(E^i))^{-1/\kappa_0}-(\ln\rho_0+1-\ln y^{0i}(E^i))^{1/\kappa_0}$ and $\nu^{*i}_{I^j_i}(1)=0$ for $i\in N,j\in M_i,a\in A(I^j_i),E^i\in \mathcal{E}^i$. We define $\widetilde{\mathscr{P}}_E$ as the set of tuples $(\gamma,x,\nu,t)$ satisfying the system~(\ref{nfpr-eqt-ent4}) for $0<t\leq 1$, and denote by $\mathscr{P}_E$ its closure. It follows that $\mathscr{P}_{E}$ encompasses a smooth path originating from $(\gamma^*(1),x^*(1),\nu^*(1))$ and converging to a normal-form proper equilibrium as $t\to0$.
	\section{Numerical Performance}\label{nfpr-sec-prm5}
	In this section, we present a set of numerical experiments aimed at evaluating the effectiveness and efficiency of the proposed methods. We begin by validating the effectiveness of the proposed computational methods. These methods are employed on the extensive-form game in Fig.~\ref{fig:game1} alongside two supplementary instances to illustrate the resulting convergence paths. Subsequently, we assess their efficiency by considering two categories of extensive-form games with randomly generated payoffs. In each category, key parameters are systematically adjusted to control the game size, enabling a comparative evaluation of the methods' performance.

	To achieve these objectives, we employed the predictor-corrector method to numerically trace the smooth paths defined by the systems (\ref{nfpr-eqt-log4}) and (\ref{nfpr-eqt-ent4}), respectively referred to as LGPR and ETPR. During the tracing procedure, each iteration comprised a predictor to approximate the next solution and a corrector to refine this approximation for improved accuracy. Detailed implementation guidelines can be found in Allgower and Georg~\cite{AllgowerNumericalContinuationMethods1990} and Eaves and Schmedders~\cite{EavesGeneralequilibriummodels1999}. The adopted parameter settings including a predictor step size of $0.05t^{0.3}$ and a corrector accuracy of $0.5t^{0.3}$. The successful termination criterion $t<10^{-4}$ was applied, with failure occurring when the number of iterations or computational time surpassed predefined limits. All computations were conducted on a Windows Server 2016 Standard with an Intel(R) Xeon(R) CPU E5-2650 v4 @ 2.20GHz (2 processors) and 128GB of RAM.
	\begin{example}\label{nfpr-exm-num1} {\em In this example, we examine two extensive-form games, shown in Figs.~\ref{fig:game1} and \ref{fig:game2}, to verify the effectiveness of our methods in converging to normal-form proper equilibria. The normal-form representation of the game in Fig.~\ref{fig:game2} is presented in Table~\ref{tab:exm2nf}.
	\begin{table}[!ht]
		\centering
		\caption{Normal-form representation of the game in Fig.~\ref{fig:game2} \label{tab:exm2nf}}
		\begin{tabular}{l|cc|cc}
			\multirow{2}*{\diagbox{$S^1$}{$S^2$ and $S^3$}}
			& \multicolumn{2}{c|}{$s^2_1 = \{L_2\}$} 
			& \multicolumn{2}{c}{$s^2_2 = \{R_2\}$} \\
			& $s^3_1 = \{L_3\}$ & $s^3_2 = \{R_3\}$ & $s^3_1 = \{L_3\}$ & $s^3_2 = \{R_3\}$ \\ \hline
			$s^1_1 = \{L, l\}$ & (1, 3, 0) & (1, 3, 0) & (2, 0, 0) & (2, 0, 0) \\
			$s^1_2 = \{L, r\}$ & (1, 3, 0) & (1, 3, 0) & (0, 0, 5) & (4, 4, 0) \\
			$s^1_3 = \{R\}$    & (0, 0, 0) & (3, 0, 3) & (0, 0, 0) & (3, 0, 3) \\
		\end{tabular}
	\end{table}
	The corresponding normal-form proper equilibria for the game in Fig.~\ref{fig:game1} are provided in Example~\ref{exm1}. The game in Fig.~\ref{fig:game2} admits a unique normal-form proper equilibrium, given by $\sigma^1 = (0, \tfrac{24}{49}, \tfrac{25}{49})^\top$, $\sigma^2 = (\tfrac{3}{8}, \tfrac{5}{8})^\top$, and $\sigma^3 = (\tfrac{1}{4}, \tfrac{3}{4})^\top$. In addition, both games possess other normal-form perfect equilibria. Regardless of the randomly chosen starting points within the feasible region, our proposed methods consistently converge to normal-form proper equilibria. The smooth paths shown in Figs.~\ref{NFPR-fig-pth1}--\ref{NFPR-fig-pth8} illustrates the methods' convergence behavior.
	\begin{figure}[htp]
		\centering
		\begin{minipage}[t]{0.46\textwidth}
			\centering
			\includegraphics[width=0.9\textwidth]{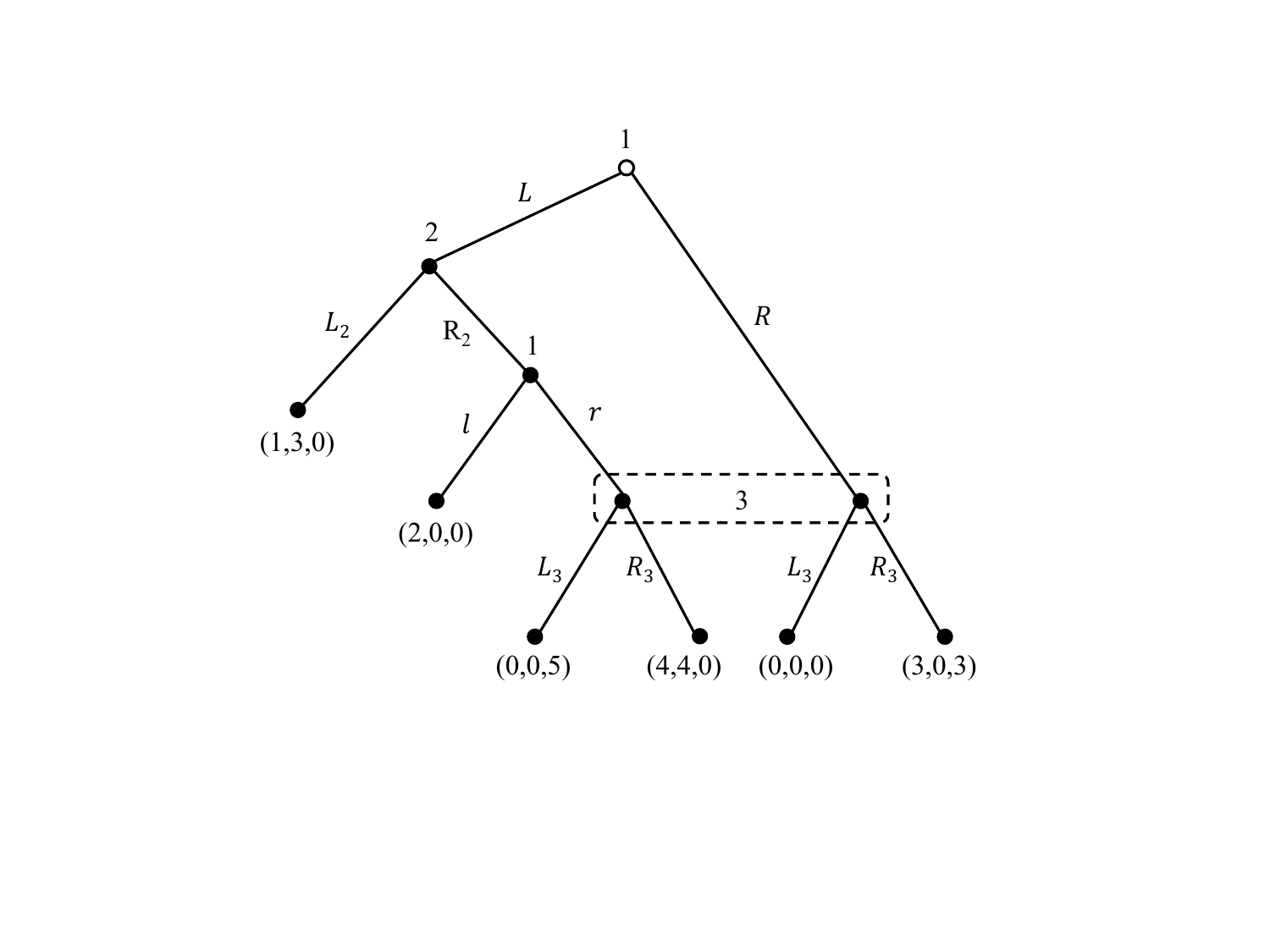}
			\caption{\label{fig:game2}{\small An Extensive-Form Game from Selten~\cite{SeltenReexaminationperfectnessconcept1975}}}
		\end{minipage}\hfill
		\begin{minipage}[t]{0.52\textwidth}
			\centering
			\includegraphics[width=0.9\textwidth]{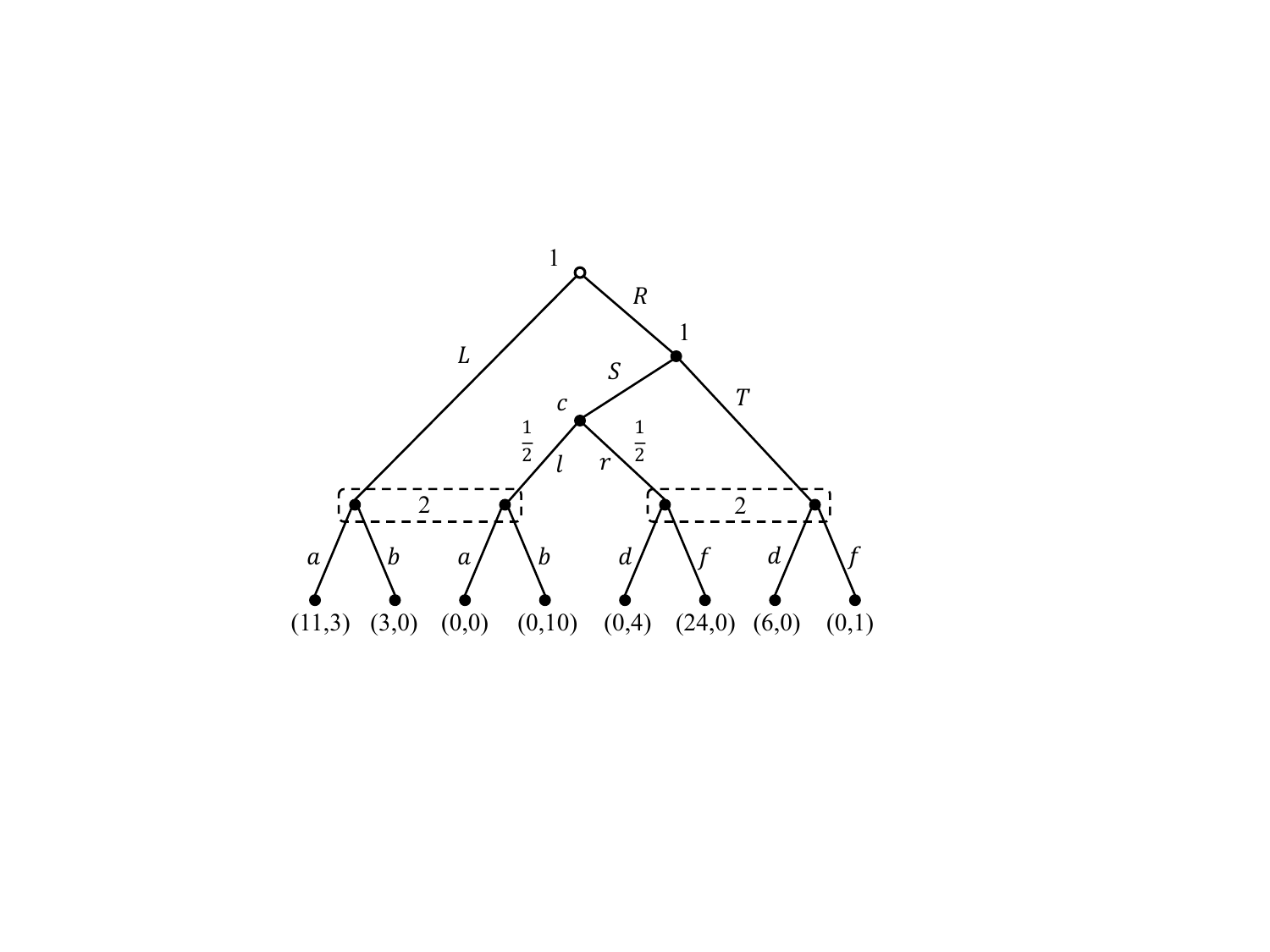}
			\caption{\label{fig:game3}{\small An Extensive-Form Game from von Stengel et al.~\cite{vonStengelComputingNormalForm2002}}}
		\end{minipage}
	\end{figure}
	}
	\end{example}
	\begin{figure}[htp]
		\centering
		\begin{minipage}[b]{0.49\textwidth}
			\centering
			\includegraphics[width=1\textwidth, height=0.20\textheight]{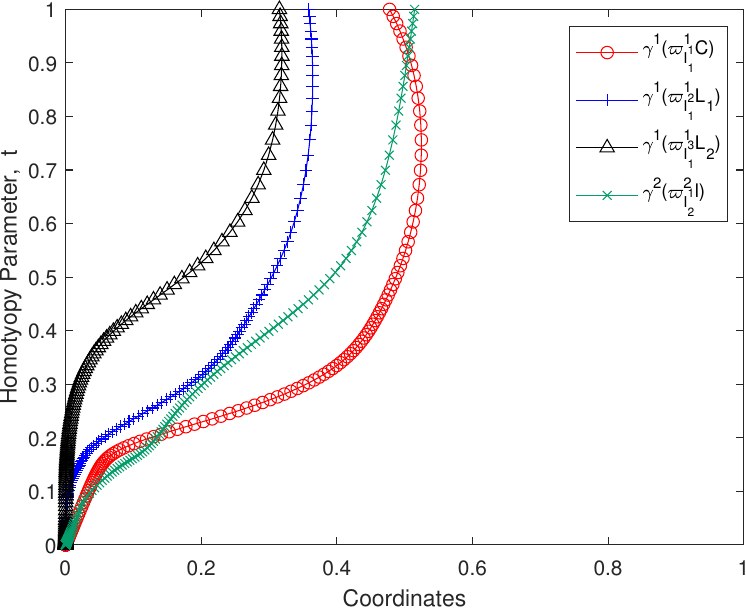}
			\caption{\label{NFPR-fig-pth1}{\footnotesize Path of Realization Plans Generated by LGPR for the Game in Fig.~\ref{fig:game1}}}\end{minipage}\hfill
		\begin{minipage}[b]{0.49\textwidth}
			\centering
			\includegraphics[width=1\textwidth, height=0.20\textheight]{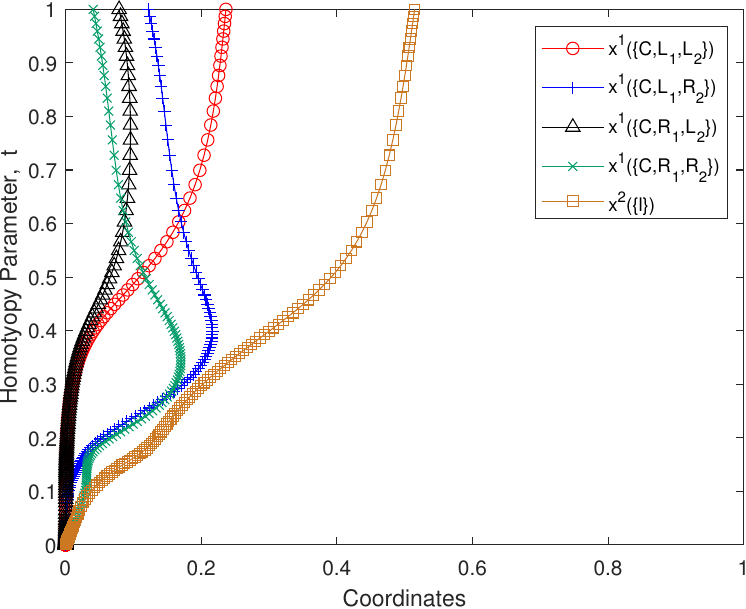}
			\caption{\label{NFPR-fig-pth2}{\footnotesize Path of Mixed Strategies Generated by LGPR for the Game in Fig.~\ref{fig:game1}}} \end{minipage}
	\end{figure}
	\begin{figure}[htp]
		\centering
		\begin{minipage}[b]{0.49\textwidth}
			\centering
			\includegraphics[width=1\textwidth, height=0.20\textheight]{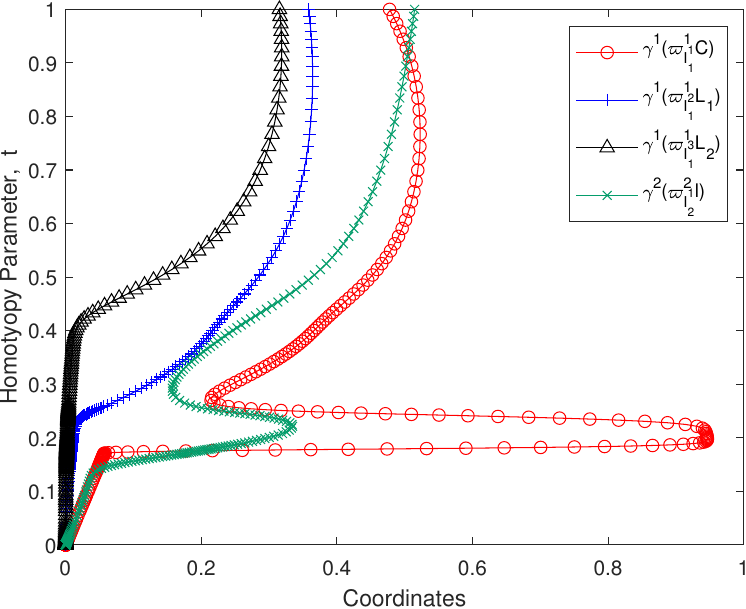}
			\caption{\label{NFPR-fig-pth3}{\footnotesize Path of Realization Plans Generated by ETPR for the Game in Fig.~\ref{fig:game1}}}\end{minipage}\hfill
		\begin{minipage}[b]{0.49\textwidth}
			\centering
			\includegraphics[width=1\textwidth, height=0.20\textheight]{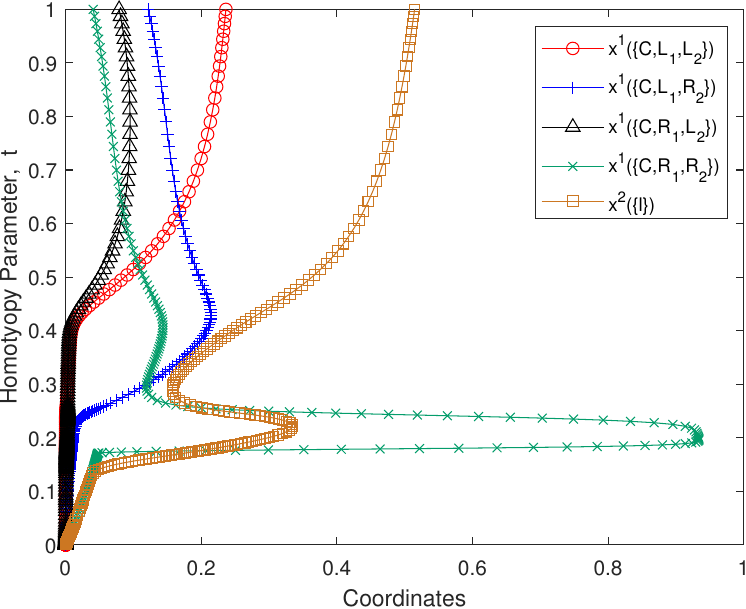}
			\caption{\label{NFPR-fig-pth4}{\footnotesize Path of Mixed Strategies Generated by ETPR for the Game in Fig.~\ref{fig:game1}}} \end{minipage}
	\end{figure}
	\begin{figure}[htp]
		\centering
		\begin{minipage}[b]{0.49\textwidth}
			\centering
			\includegraphics[width=1\textwidth, height=0.20\textheight]{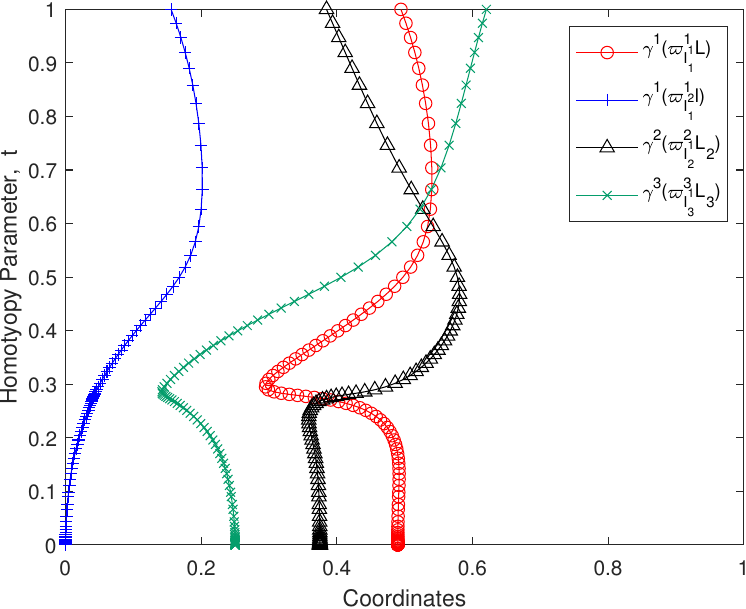}
			\caption{\label{NFPR-fig-pth5}{\footnotesize Path of Realization Plans Generated by LGPR for the Game in Fig.~\ref{fig:game2}}}\end{minipage}\hfill
		\begin{minipage}[b]{0.49\textwidth}
			\centering
			\includegraphics[width=1\textwidth, height=0.20\textheight]{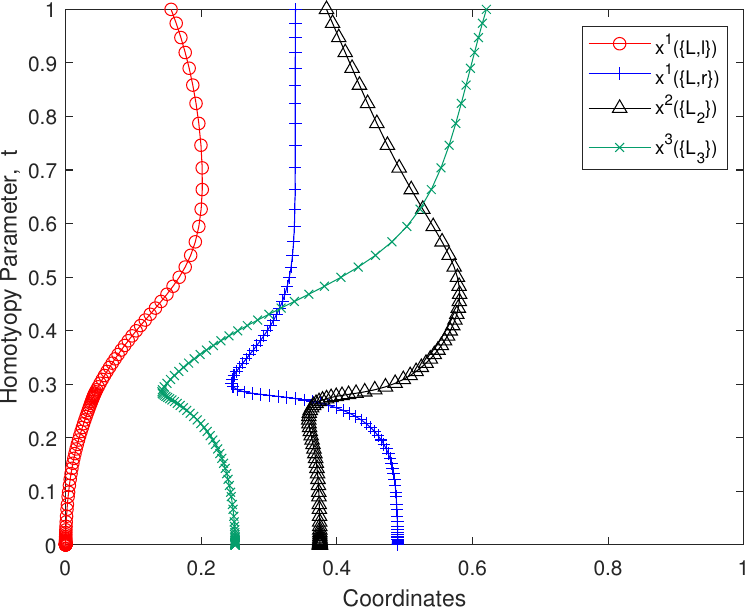}
			\caption{\label{NFPR-fig-pth6}{\footnotesize Path of Mixed Strategies Generated by LGPR for the Game in Fig.~\ref{fig:game2}}} \end{minipage}
	\end{figure}
	\begin{figure}[htp]
		\centering
		\begin{minipage}[b]{0.49\textwidth}
			\centering
			\includegraphics[width=1\textwidth, height=0.20\textheight]{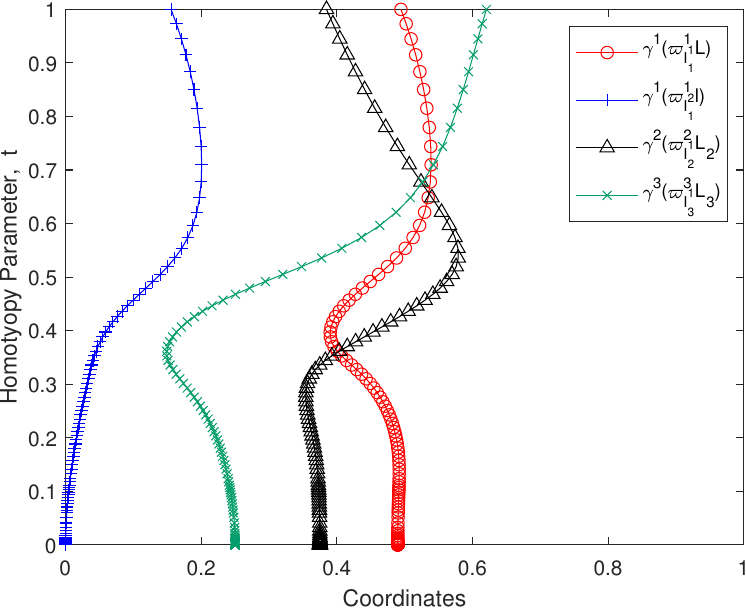}
			\caption{\label{NFPR-fig-pth7}{\footnotesize Path of Realization Plans Generated by ETPR for the Game in Fig.~\ref{fig:game2}}}\end{minipage}\hfill
		\begin{minipage}[b]{0.49\textwidth}
			\centering
			\includegraphics[width=1\textwidth, height=0.20\textheight]{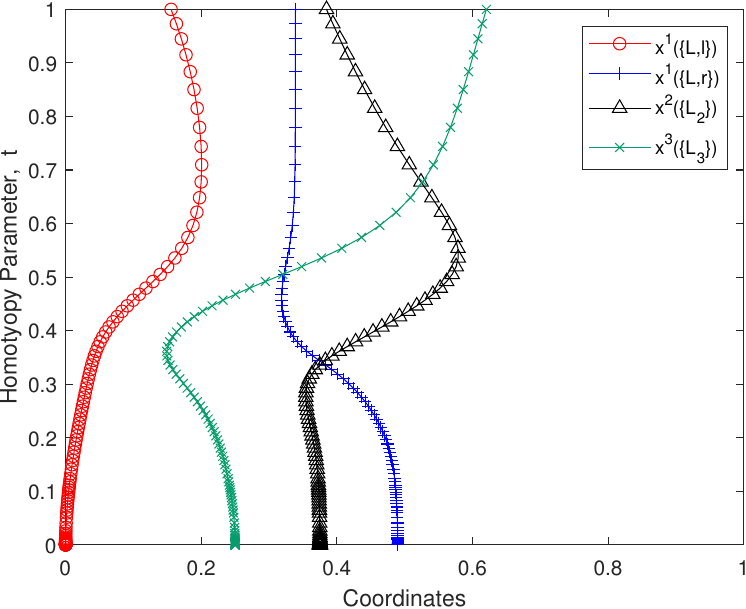}
			\caption{\label{NFPR-fig-pth8}{\footnotesize Path of Mixed Strategies Generated by ETPR for the Game in Fig.~\ref{fig:game2}}} \end{minipage}
	\end{figure}
	\begin{example}\label{nfpr-exm-num2} {\em We consider two extensive-form games in this example, shown in Figs.~\ref{fig:game1} and \ref{fig:game3}, to investigate the potential of our methods to converge to different normal-form proper equilibria when initialized from distinct starting points. Table~\ref{tab:exm3nf} presents the normal-form representation of the game in Fig.~\ref{fig:game3}.
	\begin{table}[!ht]
		\centering
		\caption{Normal form representation of the game in Fig.~\ref{fig:game3}\label{tab:exm3nf}}
		\begin{tabular}{l|cccc}
			\diagbox{$S^1$}{$S^2$} & $s^2_1=\{a,d\}$ & $s^2_2=\{a,f\}$ & $s^2_3=\{b,d\}$ & $s^2_4=\{b,f\}$ \\ 
			\hline
			$s^1_1=\{L\}$   & (11,3) & (11,3) & (3,0) & (3,0) \\
			$s^1_2=\{R,S\}$ & (0,2)  & (12,0) & (0,7) & (12,5) \\
			$s^1_3=\{R,T\}$ & (6,0)  & (0,1)  & (6,0) & (0,1) \\
		\end{tabular}
	\end{table}
	The three distinct normal-form proper equilibria of the game can be computed manually, with the results provided below.
	\begin{enumerate}
		\item $\sigma^1 = (1,0,0)^\top$, $\sigma^2 = (\frac{2}{3},\frac{1}{3},0,0)^\top$.
		\item $\sigma^1 = (0,\frac{1}{3},\frac{2}{3})^\top$, $\sigma^2 = (0,0,\frac{2}{3},\frac{1}{3})^\top$.
		\item $\sigma^1 = (\frac{5}{14}, \tfrac{3}{14}, \tfrac{3}{7})^\top$, $\sigma^2 = (\tfrac{1}{12}, \tfrac{1}{24},\tfrac{7}{12}, \tfrac{7}{24})^\top$.
	\end{enumerate}
	In Example~\ref{nfpr-exm-num1}, one normal-form proper equilibrium of the game in Fig.~\ref{fig:game1} has been identified using our methods, as depicted in Figs.~\ref{NFPR-fig-pth1}–\ref{NFPR-fig-pth4}. By choosing an alternative starting point in this example, we demonstrate convergence to a distinct normal-form proper equilibrium, as shown in Figs.~\ref{NFPR-fig-pth9}–\ref{NFPR-fig-pth10}. Furthermore, Figs.~\ref{NFPR-fig-pth11}--\ref{NFPR-fig-pth14} depict the convergence of the game in Fig.~\ref{fig:game3} to different normal-form proper equilibria under varying starting points.}
	\end{example}
	\begin{figure}[htp]
		\centering
		\begin{minipage}[b]{0.49\textwidth}
			\centering
			\includegraphics[width=1\textwidth, height=0.20\textheight]{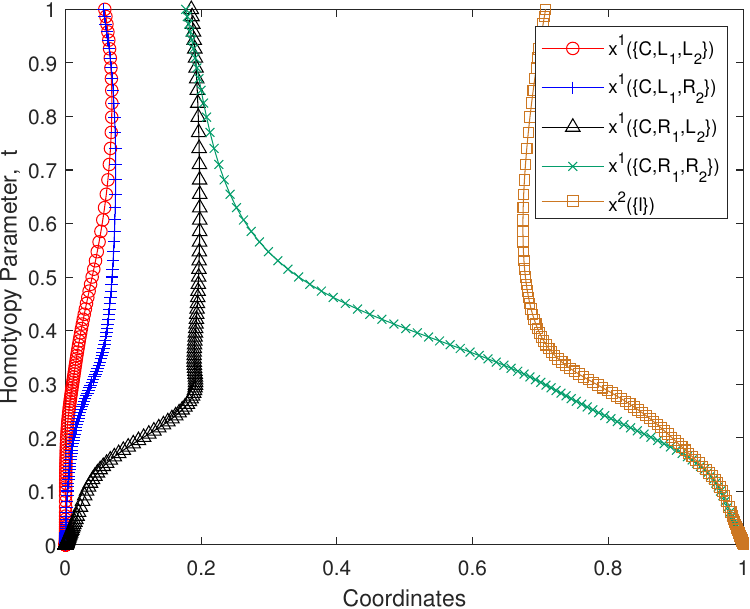}
			\caption{\label{NFPR-fig-pth9}{\footnotesize Path of Mixed Strategies Generated by LGPR for the Game in Fig.~\ref{fig:game1}}}\end{minipage}\hfill
		\begin{minipage}[b]{0.49\textwidth}
			\centering
			\includegraphics[width=1\textwidth, height=0.20\textheight]{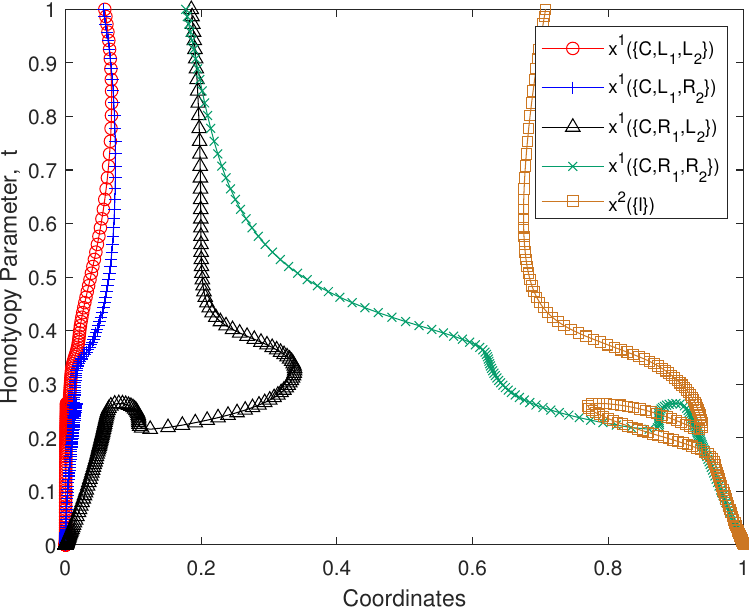}
			\caption{\label{NFPR-fig-pth10}{\footnotesize Path of Mixed Strategies Generated by ETPR for the Game in Fig.~\ref{fig:game1}}} \end{minipage}
	\end{figure}
	\begin{figure}[htp]
		\centering
		\begin{minipage}[b]{0.49\textwidth}
			\centering
			\includegraphics[width=1\textwidth, height=0.20\textheight]{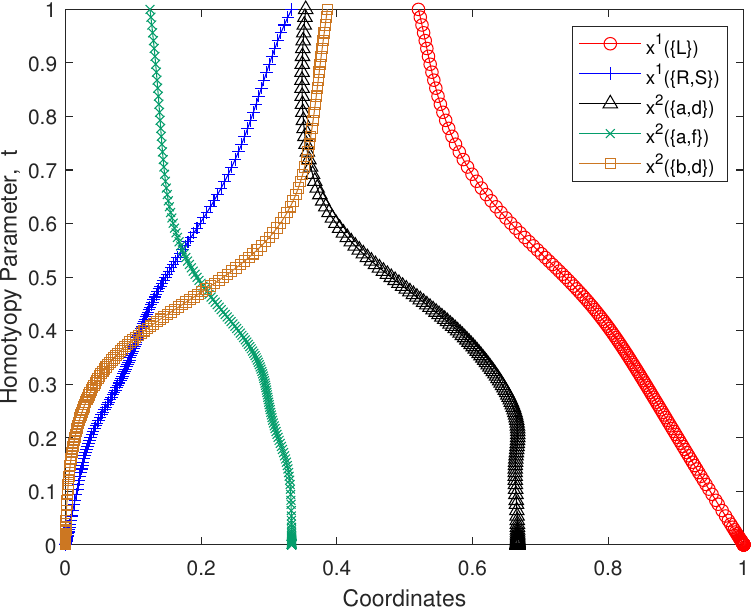}
			\caption{\label{NFPR-fig-pth11}{\footnotesize Path of Mixed Strategies Generated by LGPR for the Game in Fig.~\ref{fig:game3}}}\end{minipage}\hfill
		\begin{minipage}[b]{0.49\textwidth}
			\centering
			\includegraphics[width=1\textwidth, height=0.20\textheight]{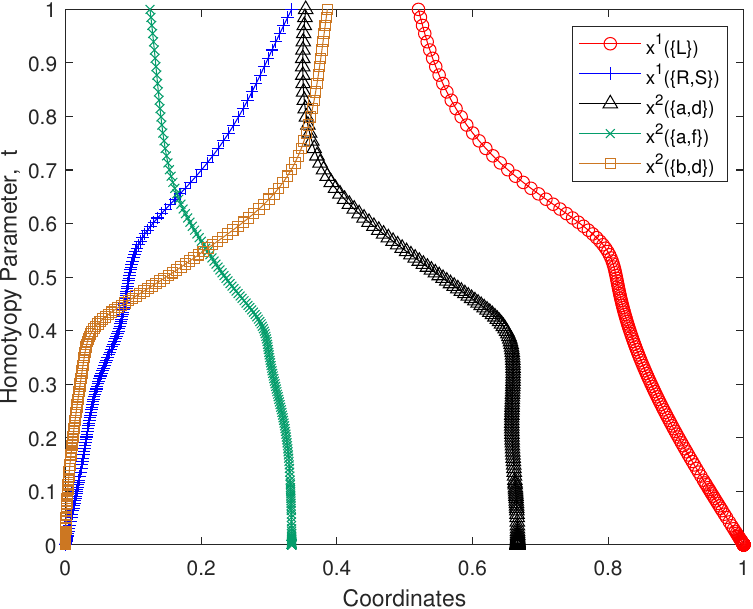}
			\caption{\label{NFPR-fig-pth12}{\footnotesize Path of Mixed Strategies Generated by ETPR for the Game in Fig.~\ref{fig:game3}}} \end{minipage}
	\end{figure}
	\begin{figure}[htp]
		\centering
		\begin{minipage}[b]{0.49\textwidth}
			\centering
			\includegraphics[width=1\textwidth, height=0.20\textheight]{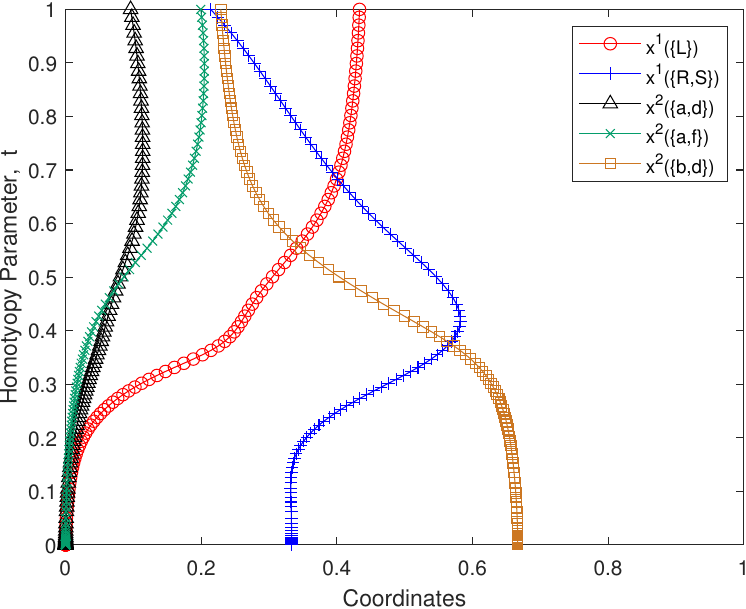}
			\caption{\label{NFPR-fig-pth13}{\footnotesize Path of Mixed Strategies Generated by LGPR for the Game in Fig.~\ref{fig:game3}}}\end{minipage}\hfill
		\begin{minipage}[b]{0.49\textwidth}
			\centering
			\includegraphics[width=1\textwidth, height=0.20\textheight]{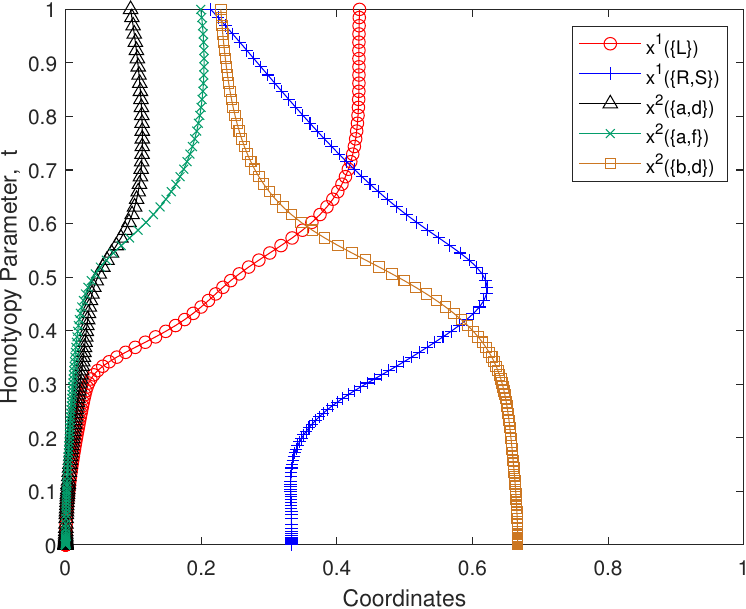}
			\caption{\label{NFPR-fig-pth14}{\footnotesize Path of Mixed Strategies Generated by ETPR for the Game in Fig.~\ref{fig:game3}}} \end{minipage}
	\end{figure}
	\begin{example}\label{nfpr-exm-num3} {\em To compare the convergence performance of our methods, we employ two structurally distinct types of random extensive-form games, as shown in Figs.~\ref{nfpr-fig-gam5}--\ref{nfpr-fig-gam6}. Both game types are parameterized by the number of players ($n$), the maximum historical depth ($\mathcal{L}$), and the number of allowable actions per information set ($\mathcal{A}$). In these games, players act cyclically, with the terminal payoffs determined by random integers uniformly distributed between $-10$ and $10$. A detailed explanation of the two game types is provided below.
	\begin{itemize}
		\item \textbf{Type 1:} As shown in Figs.~\ref{nfpr-fig-gam5}, histories are classified into the same information set only when they diverge in the final actions taken. Moreover, all terminal histories exhibit an identical length.
		\item \textbf{Type 2:} As represented in Figs.~\ref{nfpr-fig-gam6}, this structural configuration is commonly found in the literature. For odd-indexed players, each information set consists of a single history. In contrast, for even-indexed players, histories are grouped into the same information set only when they share an identical corresponding sequence. The probability that player $0$ chooses each of the available actions is equal, and the total number of actions is fixed at $3$, without loss of generality.
	\end{itemize}
	To realize a comprehensive comparative analysis of the three path-following methods, $20$ random games with distinct payoffs were generated and solved for each parameter configuration $(n,\mathcal{L}, \mathcal{A})$ in both game types. A randomly generated starting point was employed for all three methods in solving each game, and the parameters of the predictor-corrector algorithm remained consistent throughout the entire experiment.
		\begin{figure}[H]
				\centering
				\begin{minipage}[b]{0.55\textwidth}
					\centering
					\includegraphics[width=0.95\textwidth]{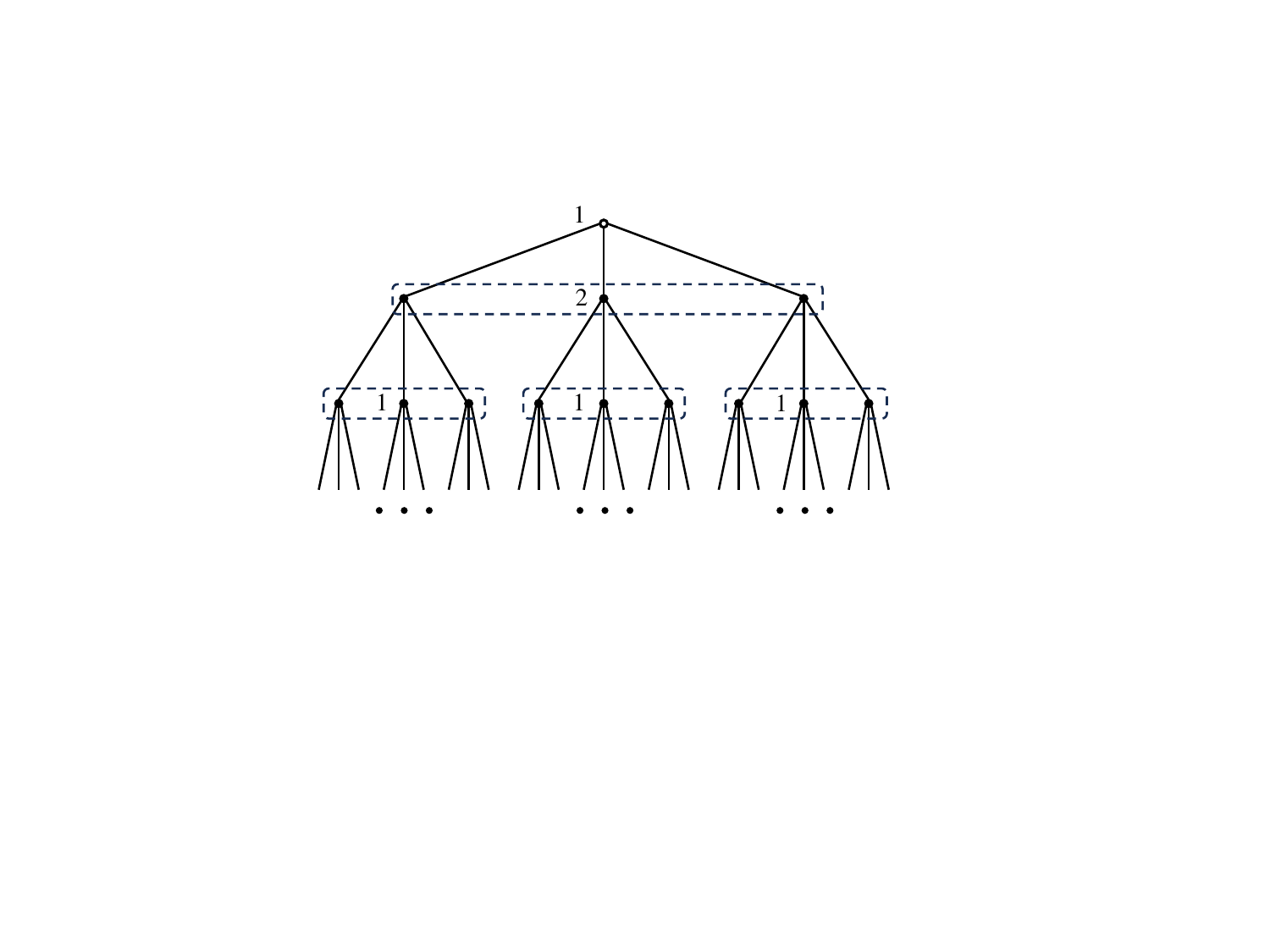}
					\caption{\label{nfpr-fig-gam5}{\small A Random Extensive-Form Game of Type 1}}
				\end{minipage}\hfill
				\begin{minipage}[b]{0.43\textwidth}
					\centering
					\includegraphics[width=0.95\textwidth]{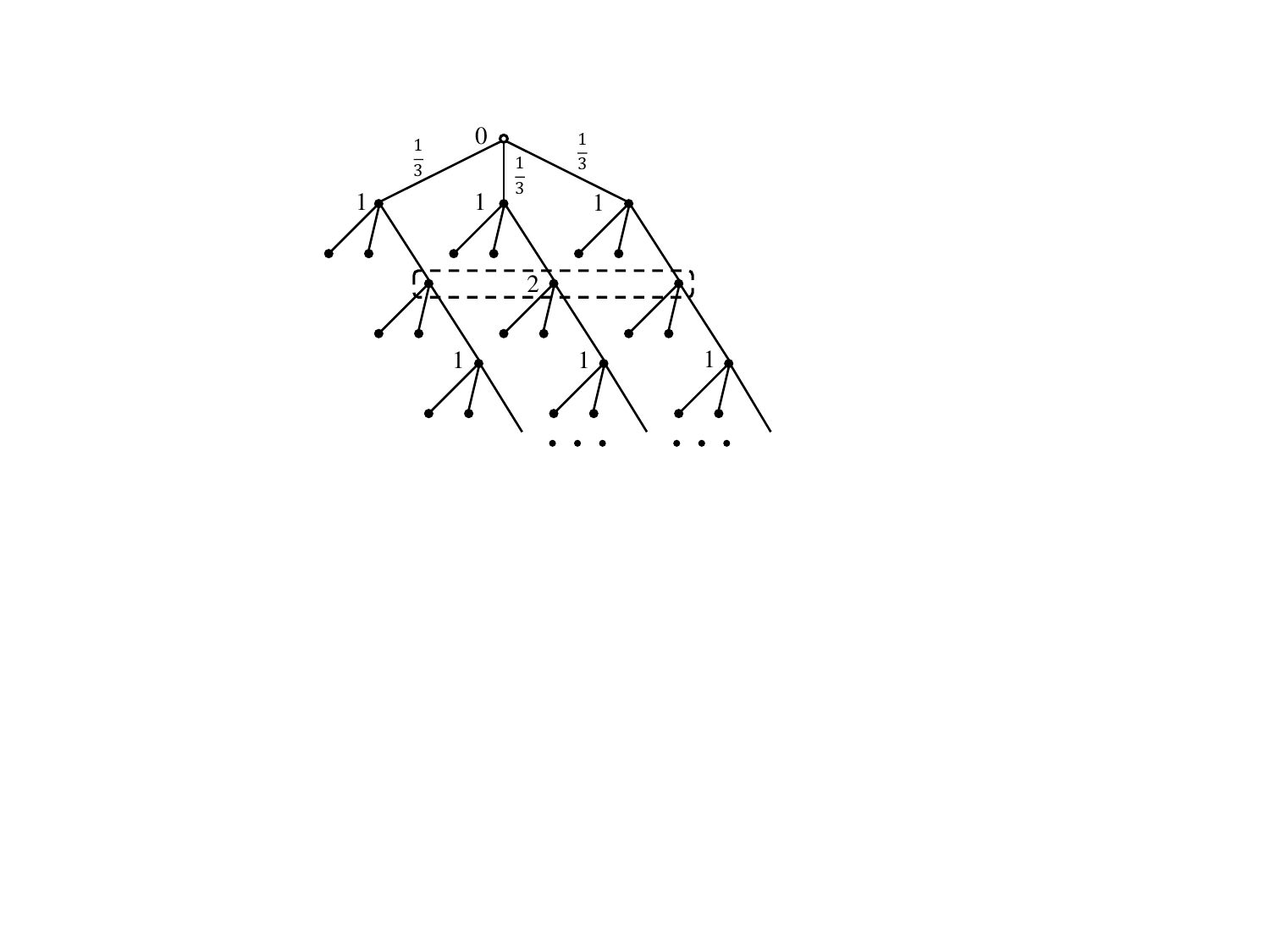}
					\caption{\label{nfpr-fig-gam6}{\small A Random Extensive-Form Game of Type 2}}
				\end{minipage}
		\end{figure}
	Overall, the numerical results demonstrate that both LGPR and ETPR exhibit well performance across a range of game sizes, with no failures observed in most small and medium instances. LGPR generally achieves fewer iteration numbers and less computational time, indicating higher numerical efficiency. In contrast, ETPR exhibits increasing instability as the game size grows, indicated by the higher failure rates for larger instances such as $(3,6,2)$ and $(3,3,3)$ in Table~\ref{t3}. These observations imply that LGPR offers greater reliability for practical computation, whereas ETPR may still be beneficial in settings where its convergence is favorable. Taken together, the experiments confirm the practical effectiveness and efficiency of our methods and demonstrate that they can serve as a reliable computational tool for normal-form proper equilibria in extensive-form games. 
	}
	\end{example}
	\begin{table}[htbp]\centering\renewcommand\arraystretch{0.95}
		\caption{Numerical Comparisons for the Game in Fig.~\ref{nfpr-fig-gam5}}\label{t2}
		\begin{tabular*}{\textwidth}{@{\extracolsep\fill}>{\rowmac}c>{\rowmac}l>{\rowmac}l>{\rowmac}l>{\rowmac}l>{\rowmac}l>{\rowmac}l>{\rowmac}l<{\clearrow}}\toprule
			\multirow{2}*{$(n,\mathcal{L},\mathcal{A})$}&  & \multicolumn{2}{c}{Iteration Numbers} & \multicolumn{2}{c}{Computational Time} & \multicolumn{2}{c}{Failure Rates} \\\cmidrule(r){3-4}\cmidrule(r){5-6}\cmidrule(r){7-8}
			& & LGPR & ETPR & LGPR & ETPR & LGPR & ETPR\\\midrule
			\multirow{3}*{$(2,2,2)$} & max & 178 & 190 & 0.9 & 0.9 & \multirow{3}*{0\%} & \multirow{3}*{0\%}\\
			& min & 85 & 92 & 0.3 & 0.4 & & \\
			&\setrow{\bfseries} med & 160.0 & 166.0 & 0.7 & 0.7 & &\\
			\multirow{3}*{$(2,3,2)$} & max & 173 & 218 & 1.4 & 2.0 & \multirow{3}*{0\%} & \multirow{3}*{0\%}\\
			& min & 108 & 145 & 0.8 & 1.3 & & \\
			&\setrow{\bfseries} med & 129.0 & 174.5 & 1.0 & 1.5 & &\\
			\multirow{3}*{$(2,4,2)$} & max & 393 & 3459 & 24.4 & 291.3 & \multirow{3}*{0\%} & \multirow{3}*{0\%}\\
			& min & 182 & 228 & 11.3 & 15.9 & & \\
			&\setrow{\bfseries} med & 210.5 & 267.5 & 13.8 & 18.8 & &\\
			\multirow{3}*{$(3,3,2)$} & max & 242 & 303 & 2.0 & 2.9 & \multirow{3}*{0\%} & \multirow{3}*{0\%}\\
			& min & 148 & 197 & 1.1 & 1.7 & & \\
			&\setrow{\bfseries} med & 162.0 & 219.0 & 1.3 & 1.9 & &\\
			\multirow{3}*{$(3,4,2)$} & max & 254 & 648 & 20.6 & 60.6 & \multirow{3}*{0\%} & \multirow{3}*{0\%}\\
			& min & 168 & 240 & 13.5 & 22.5 & & \\
			&\setrow{\bfseries} med & 206.5 & 337.5 & 16.5 & 29.1 & &\\
			\multirow{3}*{$(4,4,2)$} & max & 263 & 416 & 25.5 & 42.6 & \multirow{3}*{0\%} & \multirow{3}*{0\%}\\
			& min & 156 & 259 & 13.0 & 25.2 & & \\
			&\setrow{\bfseries} med & 192.0 & 318.5 & 16.7 & 31.1 & &\\
			\multirow{3}*{$(2,2,3)$} & max & 175 & 219 & 1.1 & 1.6 & \multirow{3}*{0\%} & \multirow{3}*{0\%}\\
			& min & 75 & 107 & 0.5 & 0.8 & & \\
			&\setrow{\bfseries} med & 112.0 & 161.0 & 0.7 & 1.2 & &\\
			\multirow{3}*{$(2,3,3)$} & max & - & 932 & - & 4660.2 & \multirow{3}*{5\%} & \multirow{3}*{0\%}\\
			& min & 398 & 319 & 1370.3 & 1910.6 & & \\
			&\setrow{\bfseries} med & 637.5 & 456.5 & 1805.4 & 2670.6 & &\\
			\multirow{3}*{$(3,3,3)$} & max & 815 & 1394 & 409.3 & 974.9 & \multirow{3}*{0\%} & \multirow{3}*{0\%}\\
			& min & 230 & 324 & 114.5 & 226.3 & & \\
			&\setrow{\bfseries} med & 320.5 & 471.5 & 162.3 & 332.0 & &\\
			\bottomrule
		\end{tabular*}
	\end{table}
	\begin{table}[htbp]\centering\renewcommand\arraystretch{0.95}
		\caption{Numerical Comparisons for the Game in~Fig.~\ref{nfpr-fig-gam6}}\label{t3}
		\begin{tabular*}{\textwidth}{@{\extracolsep\fill}>{\rowmac}c>{\rowmac}l>{\rowmac}l>{\rowmac}l>{\rowmac}l>{\rowmac}l>{\rowmac}l>{\rowmac}l<{\clearrow}}\toprule
			\multirow{2}*{$(n,\mathcal{L},\mathcal{A})$} &  & \multicolumn{2}{c}{Iteration Numbers} & \multicolumn{2}{c}{Computational Time} & \multicolumn{2}{c}{Failure Rates} \\\cmidrule(r){3-4}\cmidrule(r){5-6}\cmidrule(r){7-8}
			& & LGPR & ETPR & LGPR & ETPR & LGPR & ETPR\\\midrule
			\multirow{3}*{$(2,2,2)$} & max & 102 & 148 & 1.0 & 1.7 & \multirow{3}*{0\%} & \multirow{3}*{0\%}\\
			& min & 77 & 110 & 0.8 & 1.3 & & \\
			&\setrow{\bfseries} med & 87.5 & 121.0 & 0.9 & 1.4 & &\\
			\multirow{3}*{$(2,3,2)$} & max & 716 & 602 & 410.2 & 311.4 & \multirow{3}*{0\%} & \multirow{3}*{0\%}\\
			& min & 205 & 263 & 60.1 & 137.7 & & \\
			&\setrow{\bfseries} med & 262.5 & 330.0 & 76.3 & 166.6 & &\\
			\multirow{3}*{$(2,4,2)$} & max & 589 & 548 & 244.6 & 304.3 & \multirow{3}*{0\%} & \multirow{3}*{0\%}\\
			& min & 187 & 274 & 57.5 & 152.5 & & \\
			&\setrow{\bfseries} med & 260.0 & 332.5 & 79.7 & 176.3 & &\\
			\multirow{3}*{$(3,3,2)$} & max & 2097 & 216 & 132.4 & 6.4 & \multirow{3}*{0\%} & \multirow{3}*{0\%}\\
			& min & 100 & 152 & 2.4 & 4.3 & & \\
			&\setrow{\bfseries} med & 122.5 & 175.5 & 3.2 & 5.1 & &\\
			\multirow{3}*{$(3,4,2)$} & max & 585 & - & 1002.7 & - & \multirow{3}*{0\%} & \multirow{3}*{5\%}\\
			& min & 224 & 303 & 257.7 & 728.9 & & \\
			&\setrow{\bfseries} med & 263.5 & 390.5 & 296.5 & 961.6 & &\\
			\multirow{3}*{$(3,5,2)$} & max & 606 & - & 1001.6 & - & \multirow{3}*{0\%} & \multirow{3}*{5\%}\\
			& min & 230 & 357 & 276.3 & 899.5 & & \\
			&\setrow{\bfseries} med & 306.5 & 464.0 & 366.1 & 1136.0 & &\\
			\multirow{3}*{$(3,6,2)$} & max & 1020 & - & 5046.8 & - & \multirow{3}*{0\%} & \multirow{3}*{100\%}\\
			& min & 393 & - & 2114.1 & - & & \\
			&\setrow{\bfseries} med & 476.5 & - & 2562.4 & - & &\\
			\multirow{3}*{$(2,2,3)$} & max & 395 & 2240 & 123.7 & 877.2 & \multirow{3}*{0\%} & \multirow{3}*{0\%}\\
			& min & 206 & 284 & 62.2 & 152.2 & & \\
			&\setrow{\bfseries} med & 248.5 & 337.0 & 73.0 & 174.5 & &\\
			\multirow{3}*{$(3,3,3)$} & max & - & - & - & - & \multirow{3}*{5\%} & \multirow{3}*{100\%}\\
			& min & 380 & - & 1784.9 & - & & \\
			&\setrow{\bfseries} med & 559.0 & - & 2650.5 & - & &\\
			\bottomrule
		\end{tabular*}
	\end{table}
	\section{Conclusion}\label{nfpr-sec-prm6}
	\backmatter
	This paper develops a unified framework for the sequence-form characterization and differentiable path-following methods of normal-form proper equilibria in finite $n$-player extensive-form games with perfect recall. By redefining expected payoffs over sequences, we propose the sequence-form proper equilibrium, providing a compact and strategically equivalent representation of normal-form proper equilibria. We further introduce a new class of perturbed games by developing a sequence-based $\varepsilon$-permutahedron, thereby providing an equivalent formulation of sequence-form proper equilibria that is suitable for computation. Leveraging this formulation, we propose two differentiable path-following methods, based respectively on the logarithmic-barrier game and the entropy-barrier game, that generate smooth equilibrium paths from arbitrary positive realization plans and converge reliably to normal-form proper equilibria. Numerical experiments validate both the effectiveness and efficiency of our methods, supporting broader practical applications of normal-form proper equilibria. Our work addresses a significant gap in the literature by providing both theoretical foundations and computational methods for normal-form proper equilibria within the sequence-form representation. Future research directions involve extending the proposed methods to other normal-form equilibrium concepts, including the quantal response equilibrium, and developing path-following methods with selection properties to obtain more refined proper equilibria.
	
	\bmhead{Acknowledgments}
	This work was partially supported by GRF: CityU 11306821 of Hong Kong SAR Government.
	\newpage
	\begin{appendices}
		\section{A Formulation of Nash Equilibrium Refinements}\label{formulations}
		\begin{theorem}\label{nfpr-lem-sfc3}{\em
			For $(\sigma^*,\gamma^*)\in T$, $\sigma^*$ is a normal-form perfect equilibrium if and only if $\gamma^*$ is a sequence-form perfect equilibrium.}
		\end{theorem}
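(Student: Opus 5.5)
The plan mirrors the proof of Theorem~\ref{nfpr-lem-sfc3} for proper equilibria, but the perfectness condition is simpler: only an absolute bound $\varepsilon$ on inferior objects, with no ratio hierarchy. The two directions are handled separately. Throughout I use Lemma~\ref{nfprpure}, which links inferior pure strategies to non-best-response sequences, together with the fact that $g^i_m(\varpi^i,\gamma^{-i})=\max_{s^i(\varpi^i)=1}u^i(s^i,\sigma^{-i})$ whenever $(\sigma,\gamma)\in T$.

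\emph{Normal-form perfect $\Rightarrow$ sequence-form perfect.} Let $\sigma(\varepsilon^k)\to\sigma^*$ be totally mixed $\varepsilon^k$-normal-form perfect equilibria. Define $\gamma(\varepsilon^k)=\gamma(\sigma(\varepsilon^k))$; it is totally mixed and converges to $\gamma^*$ by continuity of (\ref{ms2rp}).
\begin{itemize}
\item Suppose $g^i_m(\varpi^i,\gamma^{-i})<g^i_m(\tilde\varpi^i,\gamma^{-i})$.
\item Every $s^i$ with $s^i(\varpi^i)=1$ then satisfies $u^i(s^i,\sigma^{-i})\le g^i_m(\varpi^i,\gamma^{-i})<g^i_m(\tilde\varpi^i,\gamma^{-i})=u^i(\tilde s^{*i},\sigma^{-i})$ for a suitable $\tilde s^{*i}$, so $s^i$ is not a best reply.
\item Hence $\sigma^i(\varepsilon^k;s^i)\le\varepsilon^k$ for each such $s^i$.
\item Summing gives $\gamma^i(\varepsilon^k;\varpi^i)\le|S^i|\varepsilon^k$.
\end{itemize}
Therefore $\gamma(\varepsilon^k)$ is an $|S^i|\varepsilon^k$-sequence-form perfect equilibrium. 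Since $|S^i|\varepsilon^k\to0$, $\gamma^*$ is sequence-form perfect.

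\emph{Sequence-form perfect $\Rightarrow$ normal-form perfect.} Let $\gamma(\varepsilon^k)\to\gamma^*$ be $\varepsilon^k$-sequence-form perfect equilibria. I need totally mixed $\sigma^k$ with $(\sigma^k,\gamma(\varepsilon^k))\in T$, $\sigma^k\to\sigma^*$, and $\sigma^{ki}(s^i)\le\tilde\varepsilon^k$ for every non-best-reply $s^i$, where $\tilde\varepsilon^k\to0$.

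The key observation comes from Lemma~\ref{nfprpure} with $\varpi^i=\emptyset$: if $s^i$ is not a best reply to $\sigma^{-i}$, then some sequence $\varpi^i$ with $s^i(\varpi^i)=1$ is not a best-response sequence. By the Proposition, this means $g^i_m(\varpi^i,\gamma^{-i})<g^i_m(\emptyset,\gamma^{-i})$, and so $\gamma^i(\varepsilon^k;\varpi^i)\le\varepsilon^k$.

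The construction then proceeds as follows.
\begin{itemize}
\item Start from any $\hat\sigma^k$ with $(\hat\sigma^k,\gamma(\varepsilon^k))\in T$. Then $\hat\sigma^{ki}(s^i)\le\gamma^i(\varepsilon^k;\varpi^i)\le\varepsilon^k$ automatically, because $s^i$ contributes to that sequence's realization weight. This bound holds for any representative.
\item The real issue is total mixedness and convergence to the prescribed $\sigma^*$. Realization-equivalent mixed strategies are not unique, so a representative of $\gamma(\varepsilon^k)$ need not approach $\sigma^*$.
\item Handle this as in the paper's earlier proof. Use the free-variable parametrization of Section 3.1: the mixed strategies on $S^i\setminus S^i(\pi^i)$ are free, and those on $S^i(\pi^i)$ are determined via (\ref{ms2rp}).
\item Set $\sigma^{ki}(s)=\max\{\sigma^{*i}(s),\eta^k\}$ on the free coordinates, with a small $\eta^k\to0$. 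Adjust the determined coordinates accordingly.
\end{itemize}

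\textbf{Main obstacle.} The difficult step is showing that the determined coordinates stay positive, stay close to $\sigma^{*i}$, and still obey the $\varepsilon$-bound. Positivity should follow because $\gamma(\varepsilon^k)$ is interior while the free coordinates lie near $\sigma^*$, which is realization-equivalent to the limit $\gamma^*$.

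If that fails, I will fall back on a compactness argument. For each $k$, choose a totally mixed representative minimizing the distance to $\sigma^*$; the set of representatives is a polytope depending continuously on $\gamma$. Then pass to the limit to get $\sigma^k\to\sigma^*$, with $\tilde\varepsilon^k=\varepsilon^k$.
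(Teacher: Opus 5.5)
Both directions follow the paper's proof. For normal-form $\Rightarrow$ sequence-form you bound $\gamma^i(\varpi^i)$ by at most $|S^i|$ terms, each at most $\varepsilon^k$; the paper writes $\sqrt{\varepsilon^k}$ for large $k$, which is the same bound. For the converse you use Lemma~\ref{nfprpure} to find a non-best-response sequence $\varpi^i$ with $s^i(\varpi^i)=1$, which gives $\sigma^{ki}(s^i)\le\gamma^i(\varepsilon^k;\varpi^i)\le\varepsilon^k$ for \emph{every} representative of $\gamma(\varepsilon^k)$.

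The paper simply asserts that a totally mixed $\sigma^k$ with $(\sigma^k,\gamma(\varepsilon^k))\in T$ and $\sigma^k\to\sigma^*$ can be selected. You are right that this is where the real work lies, but your primary construction for it fails. Take player~1 in Fig.~\ref{fig:game1}, with all payoffs constant so that every positive profile qualifies. Write $\gamma^1(a)$ for the weight of the sequence ending in $a$. Let $\sigma^{*1}=\tfrac12 s^1_1+\tfrac12 s^1_4$ and
\[
\gamma^1(E)=\epsilon_k,\qquad \gamma^1(L_1)=\gamma^1(L_2)=\tfrac12,\qquad \gamma^1(R_1)=\gamma^1(R_2)=\tfrac12-\epsilon_k .
\]
Then $\pi^1=\langle E,R_1,R_2,\emptyset\rangle$ and $S^1(\pi^1)=\{s^1_5,s^1_3,s^1_2,s^1_1\}$, so the only free strategy is $s^1_4$. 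Your rule sets $\sigma^{k1}(s^1_4)=\tfrac12$, and (\ref{ms2rp}) then forces $\sigma^{k1}(s^1_3)=\gamma^1(R_1)-\tfrac12=-\epsilon_k<0$. The free coordinates sometimes have to move \emph{down} by an amount of order $\|\gamma(\varepsilon^k)-\gamma^*\|$, which a $\max\{\sigma^{*i},\eta^k\}$ rule never does. Interiority of $\gamma(\varepsilon^k)$ does not rescue this, because determined coordinates can approach $0$ from below. The template you borrow from the proper-equilibrium proof has the same defect.

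Your fallback is the correct repair, but two points must be made explicit.

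First, you need \emph{lower} hemicontinuity of $P(\gamma)=\{\sigma\in\Xi:\gamma(\sigma)=\gamma\}$ on $\Lambda$. Upper hemicontinuity, which closedness of the graph gives for free, does not help here. Lower hemicontinuity holds because $P(\gamma)$ is a polyhedron whose right-hand side depends linearly on $\gamma$. Hoffman's error bound then gives $\mathrm{dist}(\sigma^*,P(\gamma(\varepsilon^k)))\le C\|\gamma(\varepsilon^k)-\gamma^*\|\to0$.

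Second, the totally mixed representatives form a non-closed set, so a distance minimizer among them need not exist. Instead:
\begin{itemize}
\item take a nearest point $\hat\sigma^k\in P(\gamma(\varepsilon^k))$;
\item take a totally mixed $\rho^k\in P(\gamma(\varepsilon^k))$, which exists because a positive realization plan lies in the relative interior of $\Lambda$, and that relative interior is the linear image of $\mathrm{int}(\Xi)$;
\item set $\sigma^k=(1-\tfrac1k)\hat\sigma^k+\tfrac1k\rho^k$.
\end{itemize}
Since the $\varepsilon^k$-bound holds for every representative, this $\sigma^k$ is an $\varepsilon^k$-normal-form perfect equilibrium converging to $\sigma^*$. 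With these two points stated, your argument is complete.
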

		\begin{proof}
			We prove both directions separately.
			\medskip
			\noindent
			$(\Rightarrow)$ Assume that $\gamma^*$ is a sequence-form perfect equilibrium. By definition~\ref{nfpr-def-sfpe}, there exists a convergent sequence $\{\gamma(\varepsilon^k)\}^\infty_{k=1}$ of $\varepsilon^k$-sequence-form perfect equilibria such that $\lim_{k\to\infty} \varepsilon^k = 0$ and $\lim_{k\to\infty}\gamma(\varepsilon^k)=\gamma^*$. For each $k$, select a totally mixed strategy profile $\sigma^k$ satisfying $(\sigma^k,\gamma(\varepsilon^k))\in T$ and $\lim_{k\to\infty}\sigma^k=\sigma^*$. We claim that $\sigma^k$ constitutes an $\varepsilon^k$-normal-form perfect equilibrium.
			
			Fix any $i\in N$ and $s^i,\tilde s^i\in S^i$ for which $u^i(s^i, \sigma^{-i}(\gamma(\varepsilon^k))) < u^i(\tilde s^{i},\sigma^{-i}(\gamma(\varepsilon^k)))$. By Lemma~\ref{nfprpure}, there exists a sequence $\varpi^i\in W^i$ with $s^i(\varpi^i)=1$ such that $g^i_m(\varpi^i,\gamma^{-i}(\varepsilon^k)) < g^i_m(\tilde\varpi^i,\gamma^{-i}(\varepsilon^k))$ for some $\tilde\varpi^i\in W^i$. Since $\gamma(\varepsilon^k)$ is an $\varepsilon^k$-sequence-form perfect equilibrium, it follows that $\gamma^{i}(\varepsilon^k;\varpi^i)\leq\varepsilon^k$. As $\gamma^{i}(\varepsilon^k;\varpi^i)=\sum_{s^i_q\in S^i}s^i_q(\varpi^i)\sigma^{ki}(s^i_q)$ and $s^i(\varpi^i)=1$, we have $\sigma^{ki}(s^i)\leq\gamma^{i}(\varepsilon^k;\varpi^i)\leq\varepsilon^k$. Thus, $\sigma^k$ is an $\varepsilon^k$-normal-form perfect equilibrium for each $k$. Therefore, $\sigma^*$ is a normal-form perfect equilibrium.
			
			\medskip
			\noindent
			$(\Leftarrow)$ Conversely, assume that $\sigma^*$ is a normal-form perfect equilibrium of $\Gamma$. By definition, there exists a sequence of totally mixed strategies $\{\sigma(\varepsilon^k)\}_{k=1}^{\infty}$ such that $\lim_{k\to\infty}\varepsilon^k=0$ and $\lim_{k\to\infty}\sigma(\varepsilon^k)=\sigma^*$, where each $\sigma(\varepsilon^k)$ is an $\varepsilon^k$-normal form perfect equilibrium. Consider a specific $\sigma(\varepsilon^k)$ with sufficiently large $k$. If $u^i(s^i, \sigma^{-i}(\varepsilon^k)) < u^i(\tilde s^{i},\sigma^{-i}(\varepsilon^k))$ holds for some $\tilde s^i\in S^i$, then $\sigma^i(\varepsilon^k;s^i)\leq\varepsilon^k$. We now show that $\gamma(\sigma(\varepsilon^k))$ is a $\sqrt{\varepsilon^k}$-sequence-form perfect equilibrium for all $k$. 
			
			Consider any $i\in N,\varpi^i,\tilde\varpi^i\in W^i$ with $g^i_{m}(\varpi^i,\gamma^{-i}(\sigma(\varepsilon^k)))< g^i_{m}(\tilde\varpi^i,\gamma^{-i}(\sigma(\varepsilon^k)))$. By Lemma~\ref{nfprpure}, for any $s^{i}\in S^i$ with $s^{i}(\varpi^i)=1$, there exists $\tilde s^i\in S^i$ such that $u^i(s^i, \sigma^{-i}(\varepsilon^k)) < u^i(\tilde s^{i},\sigma^{-i}(\varepsilon^k))$. Hence, for all such $s^{i}\in S^i$, we have $\sigma^i(\varepsilon^k;s^{i})\leq\varepsilon^k$. It follows that		
			\[\textstyle\gamma^{i}(\sigma(\varepsilon^k);\varpi^i)=\sum\limits_{s^i\in S^i}s^i(\varpi^i)\sigma^i(\varepsilon^k;s^i)\leq\sqrt{\varepsilon^k},\]
			which establishes the proof.
		\end{proof}
		We provide sequence-Form formulations of other Nash equilibrium refinements in extensive-form games with perfect recall, including the quasi-perfect equilibrium and the quasi-proper equilibrium.

		\begin{definition}\label{nfpr-def-sfqpe}{\em
		Let $\Gamma$ be an extensive form game. For any sufficiently small $\varepsilon > 0$, a totally realization plan profile $\gamma(\varepsilon)\in \Lambda$ is an $\varepsilon$-sequence-form quasi-perfect equilibrium of $\Gamma$ if $\gamma^i(\varepsilon;\varpi^i_{I^j_i}a)\leq\varepsilon$ whenever $g^i_m(\varpi^i_{I^j_i}a,\gamma^{-i}(\varepsilon)) < g^i_m(\varpi^i_{I^j_i}a',\gamma^{-i}(\varepsilon))$ for all $i\in N,j\in M_i$, and $a,a'\in A(I^j_i)$. A realization plan profile $\gamma^*\in \Lambda$ is defined as a sequence-form quasi-proper equilibrium of game $\Gamma$ if $\gamma^*$ is a limit point of some sequence $\{\gamma(\varepsilon^k)\}_{k=1}^\infty$, where $\lim_{k\to\infty}\varepsilon^k=0$ and each $\gamma(\varepsilon^k)$ is an $\varepsilon^k$-sequence-form quasi-perfect equilibrium of $\Gamma$.}
		\end{definition}
		\begin{definition}\label{nfpr-def-sfqpr}{\em
		Let $\Gamma$ be an extensive form game. For any sufficiently small $\varepsilon > 0$, a totally realization plan profile $\gamma(\varepsilon)\in \Lambda$ is an $\varepsilon$-sequence-form quasi-proper equilibrium of $\Gamma$ if $\gamma^i(\varepsilon;\varpi^i_{I^j_i}a)\leq\varepsilon\gamma^i(\varepsilon;\varpi^i_{I^j_i}a')$ whenever $g^i_m(\varpi^i_{I^j_i}a,\gamma^{-i}(\varepsilon)) < g^i_m(\varpi^i_{I^j_i}a',\gamma^{-i}(\varepsilon))$ for all $i\in N,j\in M_i$, and $a,a'\in A(I^j_i)$. A realization plan profile $\gamma^*\in \Lambda$ is defined as a sequence-form quasi-proper equilibrium of game $\Gamma$ if $\gamma^*$ is a limit point of some sequence $\{\gamma(\varepsilon^k)\}_{k=1}^\infty$, where $\lim_{k\to\infty}\varepsilon^k=0$ and each $\gamma(\varepsilon^k)$ is an $\varepsilon^k$-sequence-form quasi-proper equilibrium of $\Gamma$.}
		\end{definition}

		It can be rigorously established that, under the strategy transformation in Hou et al.~\cite{houSequenceformDifferentiablePathfollowing2025}, quasi-perfect equilibria are equivalent to sequence-form quasi-perfect equilibria, and quasi-proper equilibria are likewise equivalent to sequence-form quasi-proper equilibria.
		
		\section{Proof of Theorem~\ref{nfpr-the-sfc1}}\label{proofthm3}
		For $\pi^i=\langle \varpi^i_1,\varpi^i_2,\dots,\varpi^i_{\kappa_i(\emptyset)}\rangle\in \Pi^i$, define the prefix set $\tilde\pi^i_l=\{\varpi^i_1,\varpi^i_2,\dots,\varpi^i_l\}$ for $l\in K_i$. The reduced subset associated with one of the inequality constraints in (\ref{dflamda}) is then given by 
		\begin{equation}\label{rssc}
			{\pi}^i_l = \{\varpi^i \in \tilde\pi^i_l \mid \nexists \tilde{\varpi}^i \in \tilde\pi^i_l \text{ such that } \tilde{\varpi}^i \subset \varpi^i \,\}.
		\end{equation}
		Give any $\gamma=(\gamma^i:i\in N)\in\text{int}(\Lambda)$, the subordering $\pi^i_l(\gamma^i)$ is constructed according to (\ref{rssc}) from $\pi^i(\gamma^i)$. Let $S^i(\pi^i(\gamma^i))=\langle s^i_1,s^i_2,\dots,s^i_{\kappa_i(\emptyset)}\rangle$. We define $\tilde\sigma^i(\gamma^i)=(\tilde\sigma^i(\gamma^i;s^i):s^i\in S^i)$ by setting $\tilde\sigma^i(\gamma^i;s^i)=0$ for $s^i\in S^i\setminus S^i(\pi^i(\gamma^i))$ and determining $\tilde\sigma^i(\gamma^i;s^i)$ for $s^i\in S^i(\pi^i(\gamma^i))$ by (\ref{ms2rp}). Finaly, for $l\in K_i$, we have
		\begin{equation}
			\begin{array}{rl}
			\tilde\sigma^i(\gamma^i;s^i_l)&=\gamma^i(\varpi^i_l)-\sum\limits_{\varpi^i\in \pi^i_{l-1}(\gamma^i),\varpi^i_l\subset\varpi^i}\gamma^i(\varpi^i)\\
			&=\sum\limits_{\varpi^i\in \pi^i_l(\gamma^i)}\gamma^i(\varpi^i)-\sum\limits_{\varpi^i\in \pi^i_{l-1}(\gamma^i)}\gamma^i(\varpi^i).
		\end{array}
	\end{equation}
		This construction guarantees that $\sum_{s^i\in S^i}\tilde\sigma^i(\gamma^i;s^i)=1$ and $\gamma^{i}(\varpi^i)=\sum_{s^i\in S^i}s^i(\varpi^i)\tilde\sigma^i(\gamma^i;s^i)$, thereby implying $u^i(\tilde\sigma(\gamma))=g^i(\gamma)$. Moreover, when $\tilde\sigma^i(\gamma^i;s^i)\geq0$ for all $i\in N,s^i\in S^i$, it follows that $\tilde\sigma(\gamma)\in\Xi$ and $(\tilde\sigma(\gamma),\gamma)\in T$.
		\begin{lemma}\label{snpc}{\em
				For any $\gamma^i\in\Lambda^i$, $\gamma^i\in\Lambda^i(\delta(\varepsilon))$ holds if and only if \begin{equation}\label{scs}
					\sum_{\varpi^i\in \pi^i_l(\gamma^i)}\gamma^i(\varpi^i)\geq\sum_{q=1}^{l}r_q^i(\delta(\varepsilon)),\,l\in K_i.
				\end{equation}
			}
		\end{lemma}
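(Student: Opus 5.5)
The plan is to prove the two directions separately, using the family $\{\pi^i_l(\gamma^i)\}_{l\in K_i}$ as the ``tight'' sets. The guiding analogy is the classical $\varepsilon$-permutahedron: there, the constraints on prefix sums of the sorted vector imply all subset constraints. Throughout, write $\pi^i(\gamma^i)=\langle\varpi^i_1,\dots,\varpi^i_{\kappa_i(\emptyset)}\rangle$. I will also use the fact that $r^i_q(\delta(\varepsilon))$ is positive and nondecreasing in $q$. This holds because $r^i_q$ is a product of $\kappa_i(\emptyset)-q$ factors $\delta^i\le\varepsilon<1$, so fewer factors give a larger value. Consequently $\sum_{q=1}^{m}r^i_q$ is increasing in $m$, and it is the sum of the $m$ smallest $r$'s.

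\emph{Only if.} Assume $\gamma^i\in\Lambda^i(\delta(\varepsilon))$ and fix $l\in K_i$. First I check that $\pi^i_l(\gamma^i)\in\mathcal{E}^i$.
\begin{itemize}
\item Condition 1 holds because $\pi^i(\gamma^i)$ omits one maximal extension at every information set, so it never contains all of $\{\varpi^i_{I^j_i}a:a\in A(I^j_i)\}$.
\item Condition 2 holds because (\ref{rssc}) keeps only elements having no proper prefix in $\tilde\pi^i_l$.
\end{itemize}
Next I show $q^i(\pi^i_l(\gamma^i))\ge l$. Every element of $\tilde\pi^i_l$ either lies in $\pi^i_l(\gamma^i)$ or has a proper prefix there. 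Moreover, $\kappa_i(\varpi^i)$ counts exactly the free sequences at or below $\varpi^i$: the term $+1$ accounts for $\varpi^i$ itself, and each $|A(I^j_i)|-1$ accounts for the free extensions below it. Hence the $l$ elements of $\tilde\pi^i_l$ are all counted in $\sum_{\varpi^i\in\pi^i_l}\kappa_i(\varpi^i)=q^i(\pi^i_l)$. The defining inequality of (\ref{dflamda}) for $E^i=\pi^i_l(\gamma^i)$, combined with the monotonicity of partial sums of $r^i_q$, then yields (\ref{scs}).

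\emph{If.} Assume (\ref{scs}) holds and fix an arbitrary $E^i\in\mathcal{E}^i$. I would pass to the auxiliary vector $\tilde\sigma^i(\gamma^i)$ constructed just before the lemma. By the displayed identity,
\[
\tilde\sigma^i(\gamma^i;s^i_l)=\sum_{\varpi^i\in\pi^i_l(\gamma^i)}\gamma^i(\varpi^i)-\sum_{\varpi^i\in\pi^i_{l-1}(\gamma^i)}\gamma^i(\varpi^i).
\]
So (\ref{scs}) says exactly that the cumulative sums $\sum_{q\le l}\tilde\sigma^i(s^i_q)$ dominate $\sum_{q\le l}r^i_q$ for every $l$. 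Let $D(E^i)$ be the set of free sequences (elements of $\pi^i$) lying at or below some element of $E^i$. Then $|D(E^i)|=q^i(E^i)$, by the antichain condition 2 and the counting property of $\kappa_i$. Using $\gamma^i(\varpi^i)=\sum_{s^i}s^i(\varpi^i)\tilde\sigma^i(s^i)$ and the defining property of $s^i_l$ (it plays $\varpi^i_l$, hence plays the element of $E^i$ above $\varpi^i_l$), I get
\[
\sum_{\varpi^i\in E^i}\gamma^i(\varpi^i)\;\ge\;\sum_{l:\,\varpi^i_l\in D(E^i)}\tilde\sigma^i(\gamma^i;s^i_l).
\]
It then suffices to show that any sum of $m=q^i(E^i)$ of the $\tilde\sigma^i(s^i_l)$ is at least $\sum_{q\le m}r^i_q$. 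This is the standard permutahedron step: the prefix-sum inequalities (\ref{scs}) imply the same bound for every $m$-subset, provided the entries are compared against the sorted ascending $r$'s. I would carry this out by an exchange argument, or an Abel-summation argument, along the ordering $\pi^i(\gamma^i)$.

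The main obstacle is this last step together with the preceding inequality. Both need care. First, each $s^i_l$ with $\varpi^i_l\in D(E^i)$ must hit $E^i$ exactly once; condition 2 of $\mathcal{E}^i$ and perfect recall should give this. Second, the exchange argument needs the $\tilde\sigma^i(s^i_l)$ to be ordered compatibly with $\pi^i(\gamma^i)$, or else it has to avoid that requirement entirely by working only with cumulative sums. If the direct ordering fails, I would instead argue by induction on $q^i(E^i)$. In that version, I would replace $E^i$ by $\pi^i_m(\gamma^i)$ with $m$ the largest index in $D(E^i)$, and show that this replacement does not increase the left-hand side relative to the right.
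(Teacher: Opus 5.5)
\paragraph{Verdict.} Your \emph{only if} direction is correct and is what the paper does. The paper observes that each $\pi^i_l(\gamma^i)$ lies in $\mathcal{E}^i$ with $q^i(\pi^i_l(\gamma^i))=l$; your bound $q^i\ge l$, combined with monotonicity of $\sum_{q\le m}r^i_q$, works equally well. The \emph{if} direction, however, has a genuine gap: the steps you flag as needing care actually fail. Call a sequence free if it lies in $\pi^i(\gamma^i)$, and omitted otherwise (i.e., it is the maximal extension removed at its information set).
\begin{enumerate}
\item[(i)] The identity $|D(E^i)|=q^i(E^i)$ is false whenever $E^i$ contains an omitted sequence. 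Its $\kappa_i$ includes a $+1$ for a sequence that is not free, so $|D(E^i)|$ equals $q^i(E^i)$ minus the number of omitted members.
\item[(ii)] Dropping terms to get $\sum_{\varpi^i\in E^i}\gamma^i(\varpi^i)\ge\sum_{l:\varpi^i_l\in D(E^i)}\tilde\sigma^i(\gamma^i;s^i_l)$ requires $\tilde\sigma^i\ge0$. This fails under parallel information sets; the paper only asserts $(\tilde\sigma(\gamma),\gamma)\in T$ \emph{when} $\tilde\sigma\ge0$. For example, let player $i$ choose $A$ or $B$ with $\gamma^i(A)=0.6$ and $\gamma^i(B)=0.4$, and after $A$ let chance lead to three information sets of player $i$, each with two actions played $0.3/0.3$. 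Then $\pi^i(\gamma^i)=\langle Ax_1,Ax_2,Ax_3,B,\emptyset\rangle$ and $\tilde\sigma^i(\gamma^i;s^i_5)=1-1.3=-0.3$. For $E^i=\{A\}$ we have $q^i(E^i)=4$ but $|D(E^i)|=3$, and your inequality reads $0.6\ge 0.9$.
\item[(iii)] The permutahedron step needs $l\mapsto\tilde\sigma^i(\gamma^i;s^i_l)$ to be nondecreasing. This is false in general, and no exchange or induction argument can repair it in a $\delta$-free way.
\end{enumerate}

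\paragraph{Why no exact comparison can work.} Take root actions $A,B,C$ with $\gamma^i=0.35,0.45,0.2$ and, after $A$, actions $x,y$ with $\gamma^i=0.17,0.18$. Then $\pi^i(\gamma^i)=\langle Ax,C,A,\emptyset\rangle$ and $\tilde\sigma^i=(0.17,0.20,0.18,0.45)$. The set $E^i=\{A\}\in\mathcal{E}^i$ has $q^i(E^i)=\kappa_i(A)=2$, yet every prefix set $\pi^i_m(\gamma^i)$ with $m\ge 2$ has a larger sum ($0.37$, $0.55$, $1$) than $\gamma^i(A)=0.35$. So no argument that bounds $\sum_{E^i}\gamma^i$ below by a single prefix sum can work. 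This refutes your fallback, and also the paper's own one-line step $\sum_{E^i}\gamma^i\ge\sum_{\pi^i_{q^i(E^i)}(\gamma^i)}\gamma^i$, claimed ``by definition''.

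\paragraph{The missing idea.} What is needed is the scale separation
\[
r^i_{m+1}\ge r^i_m/\varepsilon,\qquad \sum_{q\le m}r^i_q\le r^i_m/(1-\varepsilon),
\]
applied to the $\gamma^i$-values, which, unlike $\tilde\sigma^i$, are sorted along $\pi^i(\gamma^i)$. Here is a sketch. Let $g^\star=\max_{\varpi^i\in E^i}\gamma^i(\varpi^i)$ and $m=q^i(E^i)$. Take the free sequences at or below members of $E^i$, together with one free sibling outside $E^i$ for each omitted member (condition 1 provides one). These are $m$ distinct free sequences, each with $\gamma^i\le g^\star$.
\begin{itemize}
\item If they are not $\tilde\pi^i_m$, one of them is $\varpi^i_L$ with $L\ge m+1$. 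Every element of $\pi^i_L(\gamma^i)$ has $\gamma^i\le g^\star$, so (\ref{scs}) at $L$ gives $\kappa_i(\emptyset)g^\star\ge\sum_{q\le m+1}r^i_q\ge r^i_m/\varepsilon$. Hence $g^\star\ge\sum_{q\le m}r^i_q$ once $\kappa_i(\emptyset)\varepsilon\le 1-\varepsilon$.
\item If they are exactly $\tilde\pi^i_m$ and $E^i$ has no omitted member, then $E^i=\pi^i_m(\gamma^i)$ and (\ref{scs}) applies directly.
\item Otherwise, take the deepest free proper prefix of an omitted member, moving further up if it is one of the chosen siblings. It is free, has index greater than $m$, and satisfies $\gamma^i\le Cg^\star$, where $C$ bounds the products of action counts along paths. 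The same estimate then works when $C\kappa_i(\emptyset)\varepsilon\le 1-\varepsilon$.
\end{itemize}
So the \emph{if} direction genuinely depends on $\varepsilon$ being small relative to the game. Neither your plan nor an exact comparison with $\pi^i_{q^i(E^i)}(\gamma^i)$ captures this.
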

		\begin{proof}
			$(\Rightarrow)$ The forward implication is immediate, since the constraints in (\ref{scs}) are encompassed by those in (\ref{dflamda}).
			
			$(\Leftarrow)$ For the converse direction, note that for any $E^i\in \mathcal{E}^i$,
			\[\sum_{\varpi^i\in E^i}\gamma^i(\varpi^i)\geq\sum_{\varpi^i\in \pi^i_{q^i(E^i)}(\gamma^i)}\gamma^i(\varpi^i),\]
			which directly follows from the definition of $\pi^i_{q^i(E^i)}(\gamma^i)$. Consequently, by (\ref{scs}), the constraints in (\ref{dflamda}) are satisfied, implying that $\gamma^i\in\Lambda^i(\delta(\varepsilon))$. 
		\end{proof}
		\begin{lemma}\label{nfper-lem-sfc4}{\em
				Let $\gamma^*(\varepsilon)\in\Lambda(\delta(\varepsilon))$ and $\pi^i(\gamma^{*i}(\varepsilon))=\langle \varpi^i_1,\varpi^i_2,\dots,\varpi^i_{\kappa_i(\emptyset)}\rangle$ for $i\in N$. The profile $\gamma^*(\varepsilon)$ is a Nash equilibrium of $\Gamma_s(\delta(\varepsilon))$ if and only if the following two conditions hold.
				\begin{enumerate}
					\item Any sequence $\varpi^i_{I^j_i}a$, $i\in N,j\in M_i,a\in I^j_i$, with $\varpi^i_{I^j_i}a\in W^i\setminus \pi^i(\gamma^{*i}(\varepsilon))$ is an $I^j_i$-best-response sequence to $\gamma_e(\varepsilon)$.
					\item For each player $i\in N$, the profile satisfies
					\begin{equation}\label{extremeequation}
						\sum_{\varpi^i\in \pi^i_l(\gamma^{*i}(\varepsilon))}\gamma^i(\varpi^i)=\sum_{q=1}^{l}r_q^i(\delta(\varepsilon)),
					\end{equation}
					whenever $g^i_m(\varpi^i_l,\gamma^{*-i}(\varepsilon)) < g^i_m(\varpi^i_{l+1},\gamma^{*-i}(\varepsilon)),\; 1\leq l\leq \kappa_i(\emptyset)-1$.
				\end{enumerate}}
		\end{lemma}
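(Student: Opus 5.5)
The plan is to reduce the linear program (\ref{nfpr-opt-sfg1}) to a one-dimensional ``cumulative-mass'' problem along the ordering $\pi^i(\gamma^{*i}(\varepsilon))$, and then read off the optimality conditions by Abel summation.

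\textbf{Step 1: rewrite the objective and the constraints.} Fix player $i$, write $\pi^i=\pi^i(\gamma^{*i}(\varepsilon))=\langle\varpi^i_1,\dots,\varpi^i_{\kappa_i(\emptyset)}\rangle$, and let $S^i(\pi^i)=\langle s^i_1,\dots,s^i_{\kappa_i(\emptyset)}\rangle$.
\begin{itemize}
\item By Lemma~\ref{snpc}, feasibility in $\Lambda^i(\delta(\varepsilon))$ is equivalent to the $\kappa_i(\emptyset)$ inequalities (\ref{scs}), whose cumulative bounds I write as $C_l\ge R_l:=\sum_{q\le l}r^i_q(\delta(\varepsilon))$, where $C_l=\sum_{\varpi^i\in\pi^i_l}\gamma^i(\varpi^i)$.
\item The construction of $\tilde\sigma^i(\gamma^i)$ gives $\tilde\sigma^i(\gamma^i;s^i_l)=C_l-C_{l-1}$ and $g^i(\gamma^i,\gamma^{*-i})=\sum_l(C_l-C_{l-1})\,u^i(s^i_l,\sigma^{-i})$.
\item Abel summation, using $C_{\kappa_i(\emptyset)}=1$, turns this into
\[
g^i(\gamma^i,\gamma^{*-i})=u^i(s^i_{\kappa_i(\emptyset)},\sigma^{-i})-\sum_{l<\kappa_i(\emptyset)}C_l\bigl(u^i(s^i_{l+1},\sigma^{-i})-u^i(s^i_l,\sigma^{-i})\bigr).
\]
\end{itemize}

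\textbf{Step 2: sufficiency.} Assume condition 1. Then Lemma~\ref{nfprpure} gives $u^i(s^i_l,\sigma^{-i})=g^i_m(\varpi^i_l,\gamma^{*-i})$, because by construction every sequence of $s^i_l$ outside $\varpi^i_l$ lies in $W^i\setminus\pi^i$. The same lemma gives that the payoffs $g^i_m(\varpi^i_l,\cdot)$ are nondecreasing in $l$ up to the last term $\emptyset$. This is because $\pi^i$ is listed in ascending $\gamma$, and I would verify it along the lines of the corresponding bullet argument in the proof of Theorem~\ref{nfpr-lem-sfc3}.

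Now take any competitor $\gamma'^i$ feasible for (\ref{nfpr-opt-sfg1}). Each prefix antichain $\pi^i_l$ belongs to $\mathcal{E}^i$ with $q^i(\pi^i_l)=l$, so the fixed sets still satisfy $C'_l\ge R_l$.

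The \emph{main obstacle} is the next inequality: for an arbitrary realization plan, $\gamma'^i$ is not decomposed along the ordering $\pi^i$, yet I need
\[
g^i(\gamma'^i,\gamma^{*-i})\le g^i_m(\emptyset,\gamma^{*-i})-\sum_{l}C'_l\bigl(g^i_m(\varpi^i_{l+1})-g^i_m(\varpi^i_l)\bigr).
\]
I would try to prove it by backward induction over information sets, in the spirit of the forward-induction computation in Lemma~\ref{nfprpure}. At each $I^j_i$, the mass on a non-maximal action earns at most its $g^i_m$ value, while the remainder earns at most the value of the best-response action. Telescoping these local bounds along $\pi^i$ yields exactly the displayed weighted sum.

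Once that inequality holds, condition 2 finishes sufficiency. It makes $C^*_l=R_l\le C'_l$ on every index with a strictly positive coefficient, so $g^i(\gamma'^i,\gamma^{*-i})\le g^i(\gamma^{*i},\gamma^{*-i})$.

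\textbf{Step 3: necessity.} Suppose first that condition 1 fails at some $I^j_i$, meaning the removed maximal action $a$ is strictly worse (in $g^i_m$) than some $a'$. Shift a small mass $\eta$ from the subtree of $a$ to the best continuation of $a'$. Because $\gamma^i(\varpi^i_{I^j_i}a)$ is maximal, it is not in any binding prefix set, and the sums over all sets in $\mathcal{E}^i$ containing $\varpi^i_{I^j_i}a'$ only increase. So the perturbed plan stays in $\Lambda^i(\delta(\varepsilon))$ by Lemma~\ref{snpc}, and it strictly raises the payoff, which is a contradiction.

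Suppose next that condition 1 holds but condition 2 fails at some $l$, so that $C^*_l>R_l$ and $g^i_m(\varpi^i_l)<g^i_m(\varpi^i_{l+1})$. Decrease $\tilde\sigma^i(s^i_l)$ by $\eta$ and increase $\tilde\sigma^i(s^i_{l+1})$ by $\eta$. This lowers only $C_l$, keeps every other cumulative constraint intact for small $\eta$, and by the Abel form strictly raises the payoff. This contradicts optimality and completes the equivalence.
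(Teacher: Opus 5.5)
Your necessity argument matches the paper's: shift mass between two pure strategies. Your sufficiency argument, which compares the cumulative masses $C_l$ on the fixed prefix antichains of $\pi^i(\gamma^{*i}(\varepsilon))$, is an explicit version of the paper's reduction to (\ref{mixpro}).

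The genuine gap is the sufficiency claim that $g^i_m(\varpi^i_l,\gamma^{*-i})$ is nondecreasing in $l$ because $\pi^i$ lists sequences in ascending $\gamma$. The bullet argument you cite gets payoff order from probability order through the $\varepsilon$-proper inequality, which is not a hypothesis here. Conditions 1 and 2 cannot replace it, and the ``if'' direction is in fact false as stated.

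Here is a counterexample. Let player $i$ move once, at the root, choosing among $a,b,c$ with payoffs $1,0,2$. Take $\delta^i_1=\delta^i_2=\varepsilon$, so $r^i_1=\varepsilon^2$ and $r^i_2=\varepsilon$. Set $\gamma^i(a)=\varepsilon^2$, $\gamma^i(b)=\varepsilon$, $\gamma^i(c)=1-\varepsilon-\varepsilon^2$.
\begin{itemize}
\item All constraints of (\ref{dflamda}) hold: singletons are $\ge\varepsilon^2$ and pairs are $\ge\varepsilon^2+\varepsilon$.
\item The ordering is $\pi^i=\langle a,b,\emptyset\rangle$, and condition 1 holds since $c$ is best.
\item Condition 2 is triggered only at $l=2$, where $C_2=\varepsilon^2+\varepsilon=R_2$ holds; even $C_1=R_1$ holds.
\end{itemize}
Yet exchanging the masses of $a$ and $b$ stays feasible and gains $(\varepsilon-\varepsilon^2)\cdot 1>0$. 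In your Abel form this is the term $-C_1\bigl(g^i_m(\varpi^i_2,\cdot)-g^i_m(\varpi^i_1,\cdot)\bigr)$, whose bracket is negative, so enlarging $C_1$ pays.

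The paper's proof takes the same unsupported step (``and hence $u^i(s^i_l,\cdot)\le u^i(s^i_{l+1},\cdot)$''). The fix is to add the hypothesis $g^i_m(\varpi^i_l,\gamma^{*-i}(\varepsilon))\le g^i_m(\varpi^i_{l+1},\gamma^{*-i}(\varepsilon))$ for all $l$. This hypothesis holds where the lemma is applied, namely Definition~\ref{nfpr-def-sfc1} $\Rightarrow$ Definition~\ref{nfpr-def-sfc2} in Theorem~\ref{nfpr-the-sfc1}. There, a strict reversal would give $\gamma^{*i}(\varepsilon;\varpi^i_{l+1})\le\varepsilon\,\gamma^{*i}(\varepsilon;\varpi^i_l)<\gamma^{*i}(\varepsilon;\varpi^i_l)$, contradicting the ascending order.

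Separately, your ``main obstacle'' is not one. Read on the coordinates $\pi^i$, the plans $\gamma^i(s^i_1),\dots,\gamma^i(s^i_{\kappa_i(\emptyset)})$ form a unit-triangular system. This is because $s^i_l$ meets $\pi^i$ only in prefixes of $\varpi^i_l$, and these come later in the order. Hence every $\gamma'^i\in\Lambda^i$ equals $\sum_l\tilde\sigma'_l\gamma^i(s^i_l)$ for unique weights summing to one, possibly negative. Moreover, $C'_l=\sum_{q\le l}\tilde\sigma'_q$ holds identically: $s^i_q$ meets the antichain $\pi^i_l$ exactly once if $q\le l$ and not at all if $q>l$.

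By linearity of $g^i(\cdot,\gamma^{*-i})$, your Abel formula is therefore an exact identity for every competitor. Here $u^i(s^i_l,\sigma^{-i})$ is converted to $g^i_m(\varpi^i_l,\gamma^{*-i})$ by condition 1. So your displayed inequality is an equality, and no backward induction is needed. With the added monotonicity hypothesis, sufficiency then follows in one line from $C'_l\ge R_l$ and condition 2. This also covers competitors that are not nonnegative mixtures of $S^i(\pi^i(\gamma^{*i}))$, a point the paper's restriction in (\ref{mixpro}) leaves implicit.
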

		\begin{proof}
			$\Rightarrow$ Assume that the profile $\gamma^*(\varepsilon)$ constitutes a Nash equilibrium of the game $\Gamma_s(\delta(\varepsilon))$. Suppose, to the contrary, that the first condition does not hold. Then there exists an information set $I_i^j$ and a sequence $\varpi^i_{I_i^j} a$ that is not an $I_i^j$-best-response sequence to $\gamma^*(\varepsilon)$, while another sequence $\varpi^i_{I_i^j} a'$ is an $I_i^j$-best-response sequence to $\gamma^*(\varepsilon)$. Moreover, for any extension sequence $\varpi^i_{I_i^{j_q}} a_q \in W^i \setminus \pi^i(\gamma^i)$ satisfying $\varpi^i_{I_i^j} a \subset \varpi^i_{I_i^{j_q}} a_q$ or $\varpi^i_{I_i^j} a' \subset \varpi^i_{I_i^{j_q}} a_q$, it holds that $\varpi^i_{I_i^{j_q}} a_q$ is an $I_i^{j_q}$-best-response sequence to $\gamma^*(\varepsilon)$. Accordingly, one can construct two pure strategies $s^i$ and $\tilde s^i$ such that $s^i(\varpi^i_{I_i^j} a) = 1$ and $\tilde s^i(\varpi^i_{I_i^j} a') = 1$, with either $s^i(\varpi^i) = 1$ or $\tilde{s}^i(\varpi^i) = 1$ indicating that $\varpi^i \in W^i \setminus \pi^i(\gamma^i)$. It then follows that $g^i(\gamma^i(s^i), \gamma^{*-i}(\varepsilon)) < g^i(\gamma^i(\tilde{s}^i), \gamma^{*-i}(\varepsilon))$. Define $\tilde{\gamma}^i = \gamma^{*i}(\varepsilon) + \tilde \varepsilon (\gamma^i(\tilde{s}^i) - \gamma^i(s^i))$ with sufficiently small $\tilde \varepsilon>0$. We obtain from Lemma~\ref{snpc} that $\tilde{\gamma}^i \in \Lambda^i(\delta(\varepsilon))$ and $g^i(\tilde{\gamma}^i, \gamma^{*-i}(\varepsilon)) > g^i(\gamma^{*i}(\varepsilon), \gamma^{*-i}(\varepsilon))$, which contradicts the assumption that $\gamma^*(\varepsilon)$ is a Nash equilibrium. Hence, the first condition must hold.

			Now suppose that the second condition fails. Then there exists an index $1\leq l\leq \kappa_i(\emptyset)-1$ with $g^i_m(\varpi^i_l,\gamma^{*-i}(\varepsilon)) < g^i_m(\varpi^i_{l+1},\gamma^{*-i}(\varepsilon))$ such that $\sum_{\varpi^i\in \pi^i_l(\gamma^{*i}(\varepsilon))}\gamma^i(\varpi^i)>\sum_{q=1}^{l}r_q^i(\delta(\varepsilon))$. Let $S^i(\pi^i(\gamma^{*i}(\varepsilon^k)))=\langle s^i_1,s^i_2,\dots,s^i_{\kappa_i(\emptyset)}\rangle$. Then $g^i(\gamma^i(s^i_l),\gamma^{*-i}(\varepsilon)) < g^i(\gamma^i(s^i_{l+1}),\gamma^{*-i}(\varepsilon))$. Define $\tilde{\gamma}^i = \gamma^{*i}(\varepsilon) + \tilde \varepsilon (\gamma^i(s^i_{l+1}) - \gamma^i(s^i_{l}))$ for sufficiently small $\tilde \varepsilon>0$. Lemma~\ref{snpc} implies that $\tilde{\gamma}^i \in \Lambda^i(\delta(\varepsilon))$ and $g^i(\tilde{\gamma}^i, \gamma^{*-i}(\varepsilon)) > g^i(\gamma^{*i}(\varepsilon), \gamma^{*-i}(\varepsilon))$, which contradicts the Nash equilibrium property of $\gamma^*(\varepsilon)$. Therefore, the second condition must also hold.
			
			$\Leftarrow$ Assume that $\gamma^*(\varepsilon)$ satisfies the required conditions, we now prove that it constitutes a Nash equilibrium of $\Gamma_s(\delta(\varepsilon))$. It suffices to verify that, for each player $i \in N$, the realization plan $\gamma^{*i}(\varepsilon)$ is a best response to $\gamma^{*-i}(\varepsilon)$ under the constraints
			\begin{equation}
				\sum_{\varpi^i\in \pi^i_l(\gamma^{*i}(\varepsilon))}\gamma^i(\varpi^i)\geq\sum_{q=1}^{l}r_q^i(\delta(\varepsilon)),\;1\leq l\leq \kappa_i(\emptyset)-1.
			\end{equation}
			Let $S^i(\pi^i(\gamma^{*i}(\varepsilon)))=\langle s^i_1,s^i_2,\dots,s^i_{\kappa_i(\emptyset)}\rangle$. Since $\sum_{q=1}^l\tilde\sigma^i(\gamma^{i}(\varepsilon);s^i_q)=\sum_{\varpi^i\in \pi^i_l(\gamma^{*i}(\varepsilon))}\gamma^i(\varpi^i)$, it is equivalent to prove that $\tilde\sigma^i(\gamma^{*i}(\varepsilon))$ is a best response to $\tilde\sigma^{-i}(\gamma^{*-i}(\varepsilon))$ under the constraints
			\begin{equation}\label{mixpro}
				\begin{array}{l}
					\sum\limits_{q=1}^l\sigma^i(s^i_q)\geq\sum\limits_{q=1}^{l}r_q^i(\delta(\varepsilon)),\; 1\leq l\leq \kappa_i(\emptyset)-1,\\
					\sigma^i(s^i)=0,\; s^i\in S^i\setminus S^i(\pi^i(\gamma^{*i}(\varepsilon))),\\
					\sum\limits_{s^i\in S^i}\sigma^i(s^i)=1.
				\end{array}
			\end{equation}
			The first condition implies that the expected payoffs satisfy $u^i(s^i_l,\tilde\sigma^{-i}(\gamma^{*-i}(\varepsilon)))=g^i_m(\varpi^i_l,\gamma^{*-i}(\varepsilon))$, and hence $u^i(s^i_l,\tilde\sigma^{-i}(\gamma^{*-i}(\varepsilon)))\leq u^i(s^i_{l+1},\tilde\sigma^{-i}(\gamma^{*-i}(\varepsilon)))$ for $1\leq l\leq \kappa_i(\emptyset)-1$. Moreover, the second condition ensures that $\sum_{q=1}^l\tilde\sigma^i(\gamma^{*i}(\varepsilon);s^i_q)=\sum_{q=1}^{l}r_q^i(\delta(\varepsilon))$ whenever $u^i(s^i_l,\tilde\sigma^{-i}(\gamma^{*-i}(\varepsilon)))< u^i(s^i_{l+1},\tilde\sigma^{-i}(\gamma^{*-i}(\varepsilon)))$. Hence, $\tilde\sigma^i(\gamma^{*i}(\varepsilon))$ is a best response to $\tilde\sigma^{-i}(\gamma^{*-i}(\varepsilon))$ under the constraints~(\ref{mixpro}). This complete the proof.
		\end{proof}
		
		\maintheorem*
		\begin{proof}
			Definition~\ref{nfpr-def-sfc2} $\Rightarrow$ Definition~\ref{nfpr-def-sfc1}: If $\gamma^*(\varepsilon)$ is a Nash equilibrium of $\Gamma_s(\delta(\varepsilon))$, we establish that $\gamma^*(\varepsilon)$ constitutes a $2\varepsilon$-sequence-form proper equilibrium under Definition~\ref{nfpr-def-sfc1}. For each player $i\in N$, denote $\pi^i(\gamma^{*i}(\varepsilon))=\langle \varpi^i_1,\varpi^i_2,\dots,\varpi^i_{\kappa_i(\emptyset)}\rangle$. It suffices to demonstrate that $\gamma^{*i}(\varepsilon; \varpi^i_l) \leq 2\varepsilon\,\gamma^{*i}(\varepsilon; {\varpi}^i_{l+1})$ whenever $g^i_m(\varpi^i_l,\gamma^{*-i}(\varepsilon)) \;<\; g^i_m(\varpi^i_{l+1},\gamma^{*-i}(\varepsilon))$. When this inequality holds, the equation~(\ref{extremeequation}) applies, implying that \[\textstyle\gamma^{*i}(\varepsilon; \varpi^i_l)\leq \sum\limits_{q=1}^{l} r_q^i(\delta(\varepsilon))=\sum\limits_{q=1}^{l}\prod\limits^{\kappa_i(\emptyset)-q}_{w=1}\delta^i_w(\varepsilon)=\prod\limits^{\kappa_i(\emptyset)-l-1}_{w=1}\delta^i_w(\varepsilon)\sum\limits_{q=1}^{l}\prod\limits^{\kappa_i(\emptyset)-q}_{w=\kappa_i(\emptyset)-l}\delta^i_w(\varepsilon)\] and \[\gamma^{*i}(\varepsilon; {\varpi}^i_{l+1})\geq r_{\kappa_i(\emptyset)-l-1}^i(\delta(\varepsilon))=\prod\limits^{\kappa_i(\emptyset)-l-1}_{q=1}\delta^i_q(\varepsilon).\] Hence, it follows directly that $\gamma^{*i}(\varepsilon; \varpi^i_l) \leq 2\delta^i_{\kappa_i(\emptyset)-l}(\varepsilon)\gamma^{*i}(\varepsilon; {\varpi}^i_{l+1})$, thereby completing the proof.
			
			Definition~\ref{nfpr-def-sfc1} $\Rightarrow$ Definition~\ref{nfpr-def-sfc2}: Assume that $\gamma^*(\varepsilon)$ is an $\varepsilon$-sequence-form proper equilibrium under Definition~\ref{nfpr-def-sfc1}. Our goal is to construct a perturbation vector $\delta(\tilde\varepsilon)$ such that $\gamma^*(\varepsilon)$ is a Nash equilibrium of the perturbed game $\Gamma_s(\delta(\tilde\varepsilon))$.
			
			By definition, for every $i\in N$ and any $\varpi^i,\tilde\varpi^i\in W^i$, it holds that $\gamma^{*i}(\varepsilon;\varpi^i)\leq \varepsilon \,\gamma^{*i}(\varepsilon;\tilde \varpi^i)$ whenever $g^i_m(\varpi^i,\gamma^{*-i}(\varepsilon)) < g^i_m(\tilde \varpi^i,\gamma^{*-i}(\varepsilon)).$ We write $\pi^i(\gamma^{*i}(\varepsilon))=\langle \varpi^i_1,\dots,\varpi^i_{\kappa_i(\emptyset)}\rangle$. Let $w_0=\max_{i\in N}\{|W^i|\}$ and define $\tilde\varepsilon=\varepsilon^{1/w_0}$. For each $i\in N$, we set $r_{0}^i(\delta(\tilde\varepsilon))=\varepsilon\min_{\varpi^i\in W^i}\{\gamma^{*i}(\varepsilon;\varpi^i)\}$ and define $r_l^i(\delta(\tilde\varepsilon)),\; 1\leq l\leq \kappa_i(\emptyset)-1$ recursively by
			\[\textstyle
			r_l^i(\delta(\tilde\varepsilon)) = 
			\begin{cases}
				\sum\limits_{\varpi^i\in \pi^i_l(\gamma^{*i}(\varepsilon))}\gamma^i(\varpi^i)-\sum\limits_{q=1}^{l-1} r_q^i(\delta(\tilde\varepsilon)), & \text{if } g^i_m(\varpi^i_l,\gamma^{*-i}(\varepsilon)) < g^i_m(\varpi^i_{l+1},\gamma^{*-i}(\varepsilon)), \\[2ex]
				\displaystyle \tfrac{1}{\tilde\varepsilon}\, r_{l-1}^i(\delta(\tilde\varepsilon)), & \text{otherwise}.
			\end{cases}
			\]
			The perturbation is then given by 
			\[
			\delta^i_l(\tilde\varepsilon)=\frac{r_{\kappa_i(\emptyset)-l}^i(\delta(\tilde\varepsilon))}{r_{\kappa_i(\emptyset)-l+1}^i(\delta(\tilde\varepsilon))},\; 1\leq l\leq \kappa_i(\emptyset)-1,
			\]
			which satisfies $0<\delta^i_l(\tilde\varepsilon)\leq \tilde\varepsilon$. Hence, the perturbation vector $\delta(\tilde\varepsilon)$ is well defined. By Lemma~\ref{nfper-lem-sfc4}, it follows that $\gamma^*(\varepsilon)$ is a Nash equilibrium of the perturbed game $\Gamma_s(\delta(\tilde\varepsilon))$. This completes the proof.
		\end{proof}

	\end{appendices}
	
	\newpage
	\bibliography{library}
\end{document}